\newtheorem{example}{Example}
\newtheorem{theorem}{Theorem}
\newtheorem{lemma}{Lemma}
\newtheorem{assumption}{Assumption}
\newtheorem{proposition}{Proposition}
\newtheorem{remark}{Remark}
\newcommand\reallywidehat[1]{%
\savestack{\tmpbox}{\stretchto{%
  \scaleto{%
    \scalerel*[\widthof{\ensuremath{#1}}]{\kern-.6pt\bigwedge\kern-.6pt}%
    {\rule[-\textheight/2]{1ex}{\textheight}}%
  }{\textheight}%
}{0.5ex}}%
\stackon[1pt]{#1}{\tmpbox}%
}
\title{Calibrated inference: statistical inference that accounts for both sampling uncertainty and distributional uncertainty}
\author{Yujin Jeong and Dominik Rothenh\"ausler}
\begin{document}

\maketitle

\begin{abstract}
How can we draw trustworthy scientific conclusions? One criterion is that a study can be replicated by independent teams. While replication is critically important, it is arguably insufficient. If a study is biased for some reason and other studies recapitulate the approach then findings might be consistently incorrect. It has been argued that trustworthy scientific conclusions require disparate sources of evidence. However, different methods might have shared biases, making it difficult to judge the trustworthiness of a result. We formalize this issue by introducing a ``distributional uncertainty model", wherein dense distributional shifts emerge as the superposition of numerous small random changes. The distributional perturbation model arises under a symmetry assumption on distributional shifts and is strictly weaker than assuming that the data is i.i.d.\ from the target distribution. We show that a stability analysis on a single data set allows us to construct confidence intervals that account for both sampling uncertainty and distributional uncertainty.
\end{abstract}

\section{Introduction}

Statistical inferences can be fragile. 
If we compare two analyses conducted by different data scientists on different data sets, variation can be due to sampling, due to distribution shift, or due to a change in methodology. These issues raise a fundamental question: How can we draw trustworthy scientific conclusions? A common recommendation is to have independent teams attempt to replicate the findings of others. While replication is critically important, it is arguably insufficient. If a study suffers from biases for some reason and replication studies emulate the study, the findings will be consistently incorrect \citep{munafo2018repeating}. To solve this issue, researchers have advocated investigating independent lines of evidence \citep{denzin1970research,freedman1991statistical,rosenbaum2010evidence,munafo2018repeating}. Ideally, these different lines of evidence are susceptible to different biases. Intuitively, if results agree across different methodologies, then a statistical finding is less likely to be an artifact. However, it might be expensive and impractical to ask several researchers to run studies independently. 

Can we emulate this strategy on a single data set? In fact, stability analyses have been advocated by many researchers. To be more precise, it has been recommended to evaluate several reasonable modelling choices for one single data set \citep{leamer1983let,rosenbaum2010evidence,patel2015assessment,steegen2016increasing,yu2020veridical}.  Practitioners often compute multiple estimators for a single target quantity by running differently specified regressions or considering the perturbations induced by various forms of data pre-processing.  If the estimator-to-estimator variability is high, then the analyst has reason to distrust the estimates. 

This practice — computing multiple estimators for a single target quantity and studying their estimator-to-estimator variability — warrants an investigation into its theoretical properties. If the estimator-to-estimator variability is high, it may raise concerns about the reliability of the estimates. However, what criterion tells us whether we should be concerned about such variability? Does this practice come with any guarantees, and if so, which ones? What mathematical problems do we address by examining the estimator-to-estimator variability?  
Often the decision on what qualifies as a “stable result” is left up to the individual judgements of the analyst. One could analyze the estimators with a random effects model, but since the estimators are computed on the same data set, they might share biases. More concretely speaking, since the structure of the biases is generally unknown, it is not clear how to define a random effects model that captures the correlation structure of the estimators' biases.

In order to discuss these questions in a formal framework, we take a distributional perspective.  
We consider a setting where the data is drawn from a ``perturbed" or ``contaminated" distribution, while the goal is to infer some properties of the uncontaminated distribution.
The classical robust statistics literature \citep{huber81} addresses distributional perturbations by investigating the worst-case behavior of a statistical functional over a fixed neighborhood of the model. 
More recently, distributional uncertainty sets based on $f$-divergence have been linked to distributionally robust optimization \citep{ben2013robust,duchi2021statistics}. In such models, it is challenging to choose the appropriate set of distributions, since the size of the perturbation is generally unknown.

Selection bias, confounding variables, or batch effects may be seen as sparse distribution shifts, where the shift affects only specific parts of the data generation process, while other parts stay invariant.

In contrast to sparse distribution shifts, we consider dense distribution shifts where the shift arises as the superposition of many small random changes that affect all parts of the data generating process. Here is an example that illustrates how dense distribution shifts may arise in practice: In clinical trials, distribution shifts can arise from complex, unobservable factors in the underlying population. Consider a scenario where a trial is conducted in 2022. During this time, various factors influence the population's health, such as the severity of the flu that year, local demographic patterns, and other environmental or social variables. When attempting to replicate the trial in a later period, the distribution of these factors may have changed significantly. However, because there are so many contributing variables—many of which are difficult or impossible to measure—the cumulative effect of these changes cannot be precisely quantified. This scenario exemplifies a dense shift: the overall distribution is shaped by distribution shifts in a multitude of nonrandom factors, whose combined effects are so intricate and unpredictable that they can be modeled as effectively random for practical purposes. We model dense distribution shifts as random and symmetric by randomly up-weighting and down-weighting different parts of the target distribution, capturing the inherent unpredictability and complexity of dense shifts.
As another example, in economics, business cycles are often seen as driven by ‘random summation of random causes’ \citep{drautzburg2019}. One potential interpretation of this is that in complex social and economic systems, there might be dense and unpredictable shifts driven by a confluence of broader social, economic, and natural forces. 

Motivated by an empirical phenomenon observed in several real-world data sets, we define a family of random symmetric perturbations.  While the symmetry assumption is strong, it is strictly weaker than assuming the data is i.i.d.\ from the target distribution. Distribution shift observed in the real world may involve a combination of sparse and dense shifts and we may need a hybrid approach in the future. As a first stepping stone, we aim to establish theoretical foundations addressing dense, symmetric distribution shifts. 

Finally, we show that modeling distributional perturbations as random and symmetric has an intriguing consequence: using a stability analysis, it is possible to estimate the strength of the distributional perturbation. Based on an estimate of the distributional perturbation strength, we propose confidence intervals that capture both sampling uncertainty and distributional uncertainty.

\subsection{Related Work}

Considerations of model stability have emerged in Bayesian statistics \citep{box1980sampling,skene1986bayesian}, causal inference \citep{leamer1983let,lalonde1986evaluating,rosenbaum1987role,imbens2015causal} and in discussions about the data science life-cycle \citep{yu2013stability,steegen2016increasing,yu2020veridical}. Using different estimation strategies is commonly recommended to corroborate a causal hypothesis \citep{freedman1991statistical,rosenbaum2010evidence,karmakar2019integrating}. In particular, to evaluate omitted variable bias, it is a common recommendation to consider the between-estimator variation of several adjusted regressions \citep{oster2019unobservable}. Sensitivity analysis bounds the influence of confounders that have been omitted in a regression or matching procedure and has played an influential role in increasing trustworthiness of causal inference from observational data \citep{cornfield1959smoking,rosenbaum1983assessing,vanderweele2017sensitivity}. %
It has been argued that causal mechanisms are expected to lead to stable associations across settings, if the same mechanism is shared across settings. Based on this observation, stability principles have been employed to discover causal relationships based on heterogeneous data sets \citep{peters2016causal,rothenhausler2015backshift,buhlmann2020invariance,pfister2021stabilizing}. Stability principles are heavily used in machine learning, often with the goal of variance reduction. For example, some tree-based methods employ feature bagging, which can be seen as averaging over differently specified prediction models \citep{breiman1996bagging,breiman2001random}.  Dropout in neural networks is another form of algorithm perturbation \citep{srivastava2014dropout}. Distributional uncertainty sets based on $f$-divergences have been linked to distributionally robust optimization \citep{ben2013robust,duchi2021statistics}. In the context of prediction under distribution shift, stability or invariance principles have been employed to learn prediction mechanisms that generalize to new settings \citep{schoelkopf2012,zhang2013domain,rojas2015causal,heinze2021conditional,rothenhausler2021anchor}. Quasi-likelihoods \citep{wedderburn1974quasi} are a way to allow greater variability in the data than what is expected from the model. However, uncertainty quantification in quasi-likelihoods still only deals with sampling uncertainty, while we aim to quantify uncertainty due to both sampling and distributional uncertainty.

\subsection{Outline of The Paper}
In Section~\ref{sec:usual-business}, we will quickly review standard practice for forming confidence intervals. In Section~\ref{sec:setting}, we introduce the setting of the paper and discuss why standard statistical practice does not account for all types of uncertainty in this setting. The setting of our paper arises under a distributional perturbation model described in Section~\ref{sec:setting} and sampling procedures described in the Appendix. We then turn to statistical inference. In Section~\ref{sec:calibrated}, we discuss how to form confidence intervals in our setting. This completes the picture from an inferential viewpoint. 
In Section~\ref{sec:numerical}, we evaluate the performance of the proposed procedure on a simulated example from causal inference. In Section~\ref{sec:real-world} we demonstrate that the proposed procedure can increase the stability of decision-making based on real-world data. We conclude in Section~\ref{sec:discussion}.

\subsection{Standard Approach}\label{sec:usual-business}

Let us consider estimation of the mean $\theta^0 = \mathbb{E}[D]$ of a square-integrable real-valued random variable $D \in \mathcal{D}$, $D \sim \mathbb{P}$. Assume that we are given data $(D_i)_{i=1,\ldots,n} \stackrel{\text{i.i.d.}}{\sim} \mathbb{P}$ with $\text{Var}(D_i) = \sigma^2 \in (0,\infty)$. We can estimate $\sigma^2$ via $\hat \sigma^2 = \frac{1}{n-1} \sum_{i=1}^n (D_i - \overline D)^2$ to form asymptotically valid confidence intervals that means
\begin{equation}\label{eq:ciusual}
    \mathrm{P}( \overline D  - z_{1-\alpha/2} \hat \sigma/\sqrt{n} \le \theta^0  \le \overline D  + z_{1-\alpha/2} \hat \sigma/\sqrt{n} ) \rightarrow 1- \alpha,
\end{equation}
where $z_{1-\alpha/2}$ is the $1-\alpha/2$ quantile of a standard Gaussian random variable. This practice is justified by the central limit theorem which implies
\begin{equation}\label{eq:main}
    \frac{1}{\sqrt{n}} \sum_{i=1}^n (D_i - \mathbb{E}[D]) \stackrel{d}{\xrightarrow{}} \mathcal{N}(0,\text{Var}(D)).
\end{equation}
More generally, for some vector-valued data $D_i \stackrel{\text{i.i.d.}}{\sim} \mathbb{P}$  consider a parametrized model $\{ p_\theta, \theta \in \Omega \}$ of positive probability densities $p_\theta$ with respect to some $\sigma$-finite measure $\mu$. Assume that the parameter space $\Omega$ is an open subset of $\mathbb{R}^d$. We consider the maximum-likelihood estimator 
\begin{equation*}
    \hat \theta = \arg \max \sum_{i=1}^n  \log p_\theta(D_i),
\end{equation*}
for some unknown target parameter $\theta^0(\mathbb{P}) = \arg \max \mathbb{E}[\log p_\theta(D)]$, where $D \sim \mathbb{P}$.
Under regularity assumptions \citep{van2000asymptotic,tsiatis2006semiparametric}, for $n \rightarrow \infty$, 
\begin{equation*}
    \sqrt{n} (\hat \theta - \theta^0) =  \frac{1}{\sqrt{n}} \sum_{i=1}^n -\mathbb{E}[\partial_\theta^2 \log p_{\theta^0}(D)]^{-1} \partial_\theta \log p_{\theta^0}(D_i) + o_p(1) \stackrel{d}{\xrightarrow[]{}} \mathcal{N}(0, \Sigma),
\end{equation*}
where $\Sigma = \mathbb{E}[\partial_\theta^2 \log p_{\theta^0}(D)]^{-1}\text{Var}(\partial_\theta \log p_{\theta^0}(D))\mathbb{E}[\partial_\theta^2 \log p_{\theta^0}(D)]^{-1}$. Thus, based on a consistent estimator $\hat \Sigma \rightarrow \Sigma$, one can form asymptotically valid confidence intervals via
\begin{equation}\label{eq:mle-ci}
    \hat \theta_k \pm z_{1-\alpha/2} \frac{\sqrt{\hat \Sigma_{kk}}}{\sqrt{n}}.
\end{equation}
A similar approach can be used to construct asymptotically valid confidence intervals for $M$-estimators. In the following, we discuss situations in which this approach does not have the desired coverage.

\section{Distributional Uncertainty}\label{sec:setting}

    There are several reasons why the coverage in equation~\eqref{eq:ciusual} might not hold. The main focus of this paper will be violations of \eqref{eq:main} due to what we call distributional uncertainty. Due to distribution shifts, the data scientist might not draw a sample from the target distribution $\mathbb{P}$ but from some $\mathbb{P}^\xi \neq \mathbb{P}$. The data analyst may try to  address the source of bias by re-weighting, regression adjustment, random effect modeling, a bias correction, or other statistical techniques. 
    Our viewpoint is that when using such techniques, it is likely that some residual error remains which we might want to address by scaling the confidence intervals. Ideally, we would like to construct confidence intervals that detect residual errors due to distributional perturbations, and account for them, if necessary. We model the variation due to distributional changes as random. This will allow us to integrate both distributional uncertainty and sampling uncertainty in a natural fashion.

    Let us make this more concrete by returning to the example of estimating the mean. Due to a superposition of small random errors, the data scientist might not draw a sample from the target distribution $\mathbb{P}$. Instead, the data $(D_i)_{i=1,\ldots,n}$ might be drawn i.i.d.\ from some perturbed distribution $\mathbb{P}^\xi \neq \mathbb{P}$, where $\xi$ is a random variable and $\mathbb{P}^\bullet$ is a probability distribution for each fixed $\bullet \in \text{range}(\xi)$. Then the error of the empirical mean can be decomposed:
    \begin{equation}\label{eq:regimes}
       \frac{1}{n} \sum_{i=1}^n D_i - \mathbb{E}[D] = \underbrace{ \frac{1}{n} \sum_{i=1}^n D_i - \mathbb{E}^\xi[D] }_{\substack{\text{variation due to} \\ \text{sampling}}} + \underbrace{\mathbb{E}^\xi[D]- \mathbb{E}[D]}_{\substack{\text{variation due to} \\ \text{distributional perturbation}}}.
    \end{equation}
    Here, $\mathbb{E}$ denotes the expectation under the target distribution $\mathbb{P}$ and $\mathbb{E}^{\xi}$ denotes the expectation under the perturbed distribution $\mathbb{P}^{\xi}$.
    Equation~\eqref{eq:main} usually does not hold in this setting as distributional perturbations induce additional variation. 

    In the following, we focus on the regime where the variation due to sampling and the variation due to distributional perturbations are both of the order $1/\sqrt{n}$. This choice is motivated by observations from real-world data sets, where these variations often appear to be of similar order (see Figure \ref{fig:qqplots}). There may be situations where higher-order bias exists and should be removed when possible. However, even after accounting for all estimable bias components, residual data quality issues can remain. We would like to address these residual errors by scaling confidence intervals.

    \paragraph{\textit{Notation}.}

    Let $\mathbb{P}$ denote an unknown fixed target probability measure on $\mathcal{D}$. For each fixed $n$ the random variable $\xi(n) \in \Xi$ encodes the distributional perturbation. Formally, $\mathbb{P}^\bullet$, $\bullet \in \Xi$, is a stochastic kernel with the target space $\mathcal{D}$.  Conditionally on $\xi(n)$, we draw an i.i.d.\ sample $(D_1^n,\ldots,D_n^n)$ from $\mathbb{P}^{\xi(n)}$. $\xi(n)$ might depend on $n$ but we suppress this in the notation and simply write $\xi$. Similarly, we sometimes suppress the dependence of $(D_1^n,\ldots,D_n^n)$ on $n$ and simply write $(D_1,\ldots,D_n)$. We write $P$ for the marginal distribution of $(D_1,\ldots,D_n,\xi)$. We denote $E$ as the marginal expectation under $P$, $\mathbb{E}$ as the expectation under the target distribution $\mathbb{P}$, and $\mathbb{E}^{\xi}$ as the expectation under $\mathbb{P}^{\xi}$ (conditioned on $\xi$). We write $\text{Var}_P$ for the variance under $P$ and $\text{Var}_\mathbb{P}$ for the variance under $\mathbb{P}$.

    \subsection{Empirical examples}
    What is a reasonable model for the distributional perturbation? We draw inspiration from direct replication studies and the GTEx\footnote{The Genotype-Tissue Expression (GTEx) Project was supported by the Common Fund of the Office of the Director of the National Institutes of Health, and by NCI, NHGRI, NHLBI, NIDA, NIMH, and NINDS. The data set used for the analyses described in this manuscript is version 6 and can be downloaded in the GTEx Portal: \url{www.gtexportal.org}.} gene expression data.

    For the two direct replication studies in \cite{prochazka2022pain}, we assume that the first replication study $D_1,\ldots,D_{n_1}$ is drawn i.i.d.\ from the target distribution $\mathbb{P}$, and the data set of the second replication study consists of $n_2$ samples $(D_1', \dots, D_{n_2}')$ drawn from the perturbed distribution $\mathbb{P}^{\xi}$. Analogously, for GTEx gene expression data (V6),  we randomly selected the tissue ``Liver" and consider the data set from the tissue as $n_2$ samples from the perturbed distribution. Then 
    we define the target distribution as the distribution of gene expression across the remaining tissues. 
    
    From a statistical perspective, it is natural to study the distribution of the standardized mean difference: 
    \begin{equation} \label{eq:ttest}
          \left( \frac{1}{n_1} + \frac{1}{n_2} \right)^{-1/2} \left(\frac{\frac{1}{n_2}\sum_{i=1}^{n_2} \psi(D_i') - \frac{1}{n_1}\sum_{i=1}^{n_1} \psi(D_i) }
       {\hat{\text{sd}}_{\mathbb{P}}(\psi(D))} \right) \text{ for some test function } \psi.
    \end{equation}
    If there were no distributional shifts across different replication studies or different tissues and all data were sampled i.i.d. from $\mathbb{P}^\xi = \mathbb{P}$, for fixed $\psi$ one would expect the ratio to follow roughly a standard Gaussian distribution.

        \begin{figure}[ht]
        \centering
        \includegraphics[scale = 0.4]{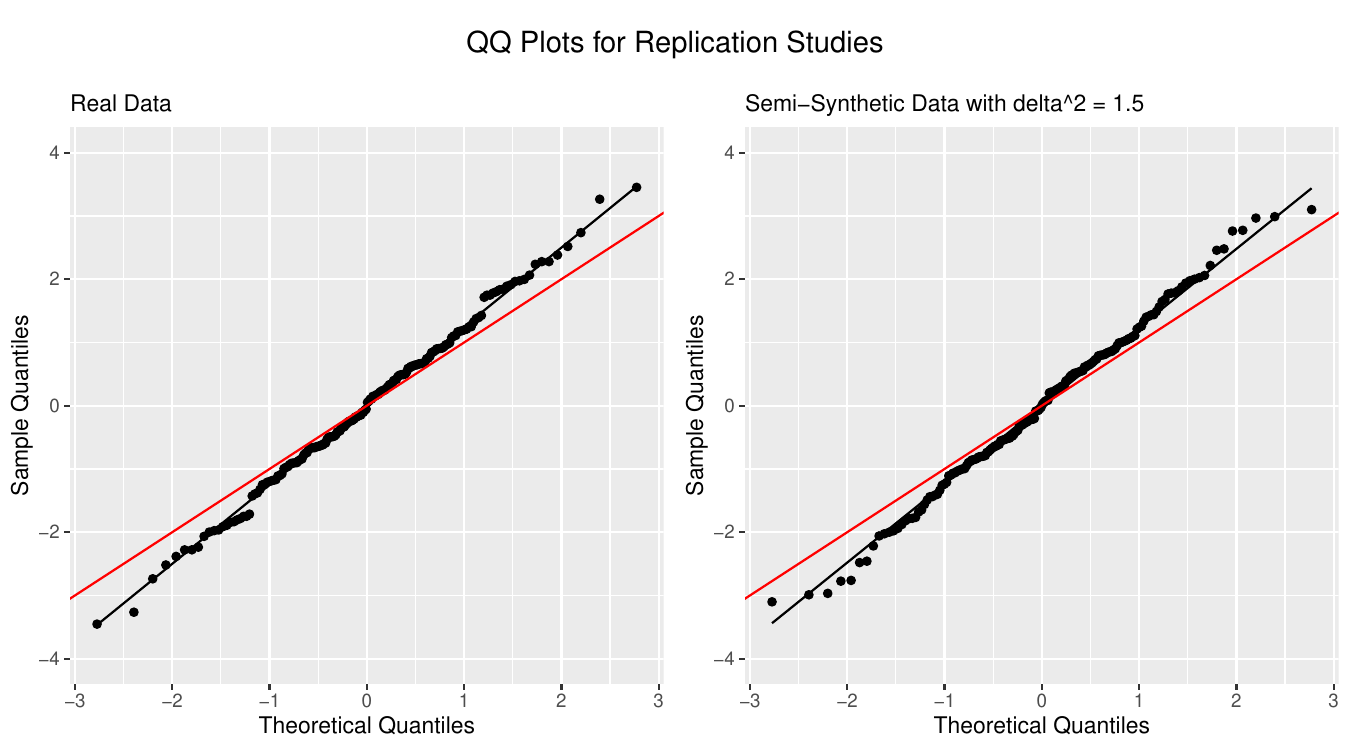}
        \includegraphics[scale = 0.4]{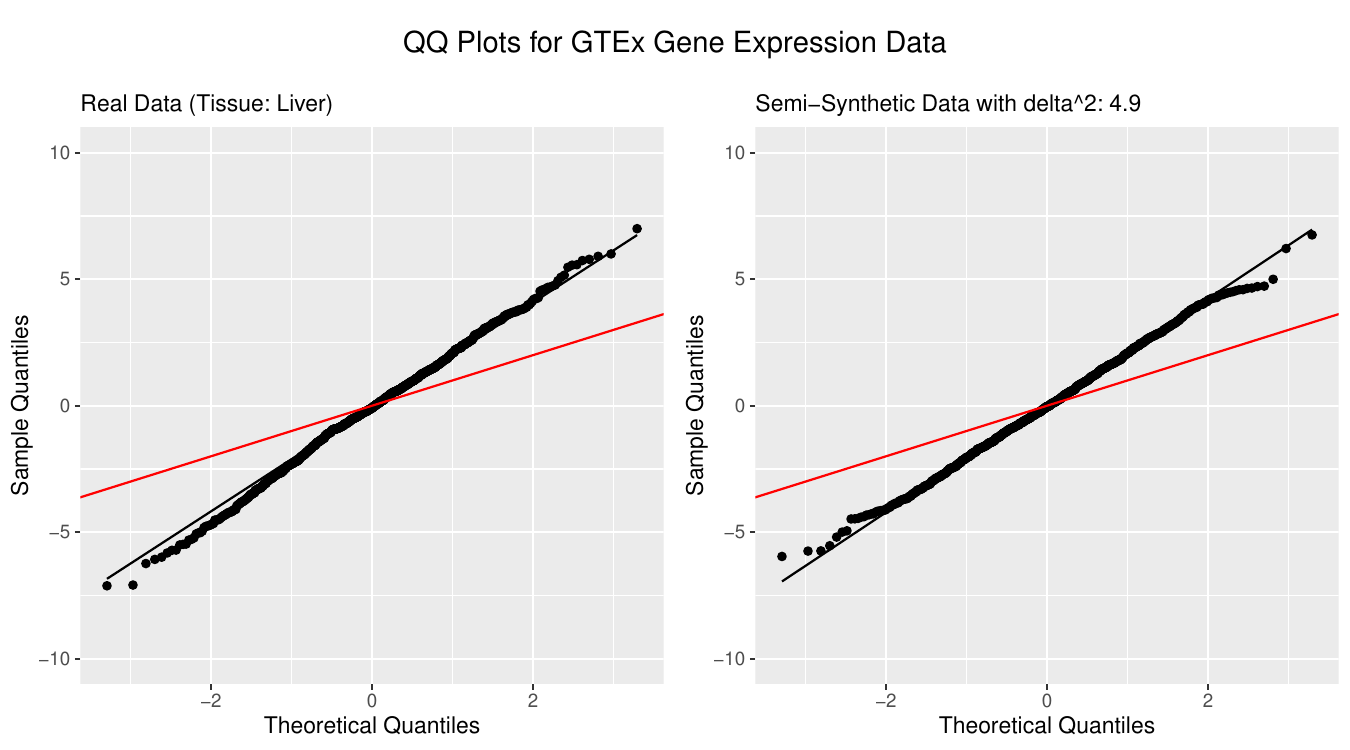}
        \caption{QQ plots of \eqref{eq:ttest} for various test functions $\psi_l$ 
  on replication studies in \cite{prochazka2022pain} (above) and GTEx gene expression data set (below). The QQ plots on the left side are from real-world data sets. The red line represents the expected QQ line if the data were all drawn i.i.d.\ from some (unperturbed) distribution $\mathbb{P}$. Perhaps surprisingly, the standardized means are on a line, indicating that the distribution shift has some structure that we can exploit for estimation and inference. The QQ plots on the right side are computed on the data drawn from our model \eqref{eq:random-reweight} with estimated variance inflation factor $\delta$ from the left side. The simulated shifts on the right closely match the pattern observed on the left side.  
        }
        \label{fig:qqplots}
    \end{figure}

    To investigate the distribution of equation~\eqref{eq:ttest} in practice, we define $\psi_{\ell}$ as follows. For the replication studies in \cite{prochazka2022pain}, we define $\psi_{\ell}$ as either the covariate or the sign-flipped covariate for either the treatment or control group, resulting in 180 test functions. For GTEx gene expression data, the covariate means are standardized in a pre-processing step. Thus, we study cross-products. We randomly select 1000 gene pairs and define $\psi_{\ell}$ as the product of gene-expressions for the $\ell$-th pair. 
    We illustrate the behavior of \eqref{eq:ttest} from replication studies and GTEx data on the left side of Figure \ref{fig:qqplots} using QQ plots. 

    \vspace{2mm}
    
    \textbf{Unexpected: distribution shift on the line.}
    Surprisingly, the QQ plots in Figure~\ref{fig:qqplots} indicate that the statistics in \eqref{eq:ttest} follow a Gaussian distribution. The slopes of the black QQ lines are larger than 1, which indicate excess variation compared to i.i.d.\ sampling from $\mathbb{P}$. This raises the question of whether we can model such (moderate) variance inflation with a statistical model for distribution shift.

    \vspace{2mm}

    \textbf{`Distribution shift on the line' implies isotropic perturbations.}
    One can show that constant variance inflation as in Figure~\ref{fig:qqplots} implies that the shifted distribution arises from randomly re-weighting the original distribution with uncorrelated weights with equal variance.
    To not interrupt the flow of the discussion, we provide the justification for this claim in the Appendix, Section~\ref{sec:appendix-motivation}.
    
    \vspace{2mm}

    \textbf{Isotropic perturbations imply `distribution shift on the line'.}
    In Section~\ref{sec:isotropic-model}, we will introduce a random perturbation model (equation~\eqref{eq:random-reweight}) that is based on randomly re-weighting the target distribution with independent weights with equal variance.

    \vspace{2mm}
    
    As a sanity check, we semi-synthetically sample from this model. If the random perturbation model is reasonable, the semi-synthetic data should exhibit similar patterns as the left-hand side of Figure~\ref{fig:qqplots}. From the target data set, we sample a perturbed data set using 
    the estimated variance inflation factor $\delta$ obtained from the left side of Figure \ref{fig:qqplots}. The QQ plots generated from the semi-synthetic data are presented on the right side of Figure \ref{fig:qqplots}. 
    We see that these plots closely resemble the patterns observed in the QQ plots obtained from real-world data sets.

    \subsection{The Isotropic Perturbation Model}\label{sec:isotropic-model}
    
    In this section, we construct a general isotropic perturbation model for multivariate continuous or discrete random variables 
    and consider the case where the change in measure is a superposition of small incremental changes. 
    We will see that under such a model, we will get a non-standard CLT in the sense that sample means are asymptotically normal, but with a different variance formula compared to the i.i.d.\ case. 
    
    To recap, in a random perturbation model, the data is not directly drawn from the target distribution $\mathbb{P}$, but from some random probability measure $\mathbb{P}^\xi$, where $\mathbb{P}^\xi$ is close to $\mathbb{P}$. The idea is that due to numerous random distributional changes, the actual sampling distribution $\mathbb{P}^\xi$ randomly differs from the target distribution $\mathbb{P}$. Under $\mathbb{P}^\xi$, probabilities of events are slightly up-weighted or down-weighted compared to $\mathbb{P}$.

    We want to construct a random perturbation model that includes many commonly encountered situations such as distributions on $\mathbb{R}^d$ or the (infinite-dimensional) space of continuous functions on $\mathbb{R}$. A result from probability theory shows that any random variable $D$ on a finite or countably infinite dimensional probability space can be written as a measurable function $D  \stackrel{d}{=} h(U)$, where $U$ is a uniform random variable on $[0,1]$.\footnote{For any Borel-measurable random variable $D$ on a Polish (separable and completely metrizable) space $\mathcal{D}$, there exists a Borel-measurable function $h$ such that $D \stackrel{d}{=} h(U)$ where $U$ follows the uniform distribution on $[0,1]$ \citep{dudley2018real}.} Thus, without loss of generality we will construct distributional perturbations for a uniform distribution on $[0,1]$. With the transformation $h(\cdot)$ defined above, this construction generalizes to the general cases by setting
    \begin{equation*}
        \mathbb{P}^{\xi}(D \in \bullet) = \mathbb{P}^{\xi}(h(U) \in \bullet).
    \end{equation*}
    The role of $h(\cdot)$ is mainly to ensure that the probability space is rich enough to be transformed into a uniform random variable. In principle, this construction is not unique. There may be many possible choices of $h(\cdot)$ that result in $D \stackrel{d}{=} h(U)$. However, as we will see below, the asymptotic behaviour of the perturbation model does not depend on the choice of $h$.
  
    Let us now construct the distributional perturbation for a uniform random variable. As discussed in Appendix, Section~\ref{sec:appendix-motivation}, the distribution-shift-on-the-line phenomenon observed in Figure~\ref{fig:qqplots} suggests that we can think about the shifted distribution as arising from randomly re-weighting the original distribution with (almost) uncorrelated weights with equal variance. Thus, we take $m$ bins $I_k = [(k-1)/m, k/m]$ for $k = 1, \dots, m$. Let $W_1, \dots, W_m$ be i.i.d.\ positive random variables with finite variance.   Set $\xi = (W_1,\ldots,W_m)$. We define the randomly perturbed distribution $\mathbb{P}^\xi$ by setting
   \begin{equation}\label{eq:random-reweight} 
    \mathbb{P}^\xi(U \in \bullet) = \sum_k \mathbb{P}(U \in I_k \cap \bullet) \cdot \frac{W_k}{\sum_{k=1}^{m} W_k/m}.
    \end{equation}
     
    Let $m = m(n)$ such that $\frac{n}{m(n)}$ converges to some limit $r \in (0,\infty)$. Note that $\xi$ depends on $m$ and thus also on $n$. Conditionally on $\xi$, let $(D_1^n,\ldots,D_n^n)$ be i.i.d.\ draws from $\mathbb{P}^\xi$. 
   \begin{lemma}[CLT under distributional uncertainty]\label{lemma:random_weight}
    Under the assumptions mentioned above (Section~\ref{sec:isotropic-model}), for any Borel-measurable square-integrable function $\psi : \mathcal{D} \mapsto \mathbb{R}^l$, we have
    \begin{equation}\label{eq:normality}
        \frac{1}{\sqrt{n}}\sum_{i=1}^{n} (\psi(D_i^n) - \mathbb{E}[\psi(D)]) \stackrel{d}{\xrightarrow[]{}} \mathcal{N}(0,  \delta^2 \text{Var}_{\mathbb{P}}(\psi(D))),
    \end{equation}
     with
    \begin{equation*}
        \delta^2 = 1 + \frac{r \text{Var}(W_1)}{E[W_1]^2}.
    \end{equation*}
    In other words, the marginal distribution of $\frac{1}{\sqrt{n}} \sum_{i=1}^n \psi(D_i^n)$ is asymptotically Gaussian with asymptotic variance containing a scaling factor $\delta^2$. 
    \end{lemma}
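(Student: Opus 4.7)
The plan is to decompose the normalized sum into a sampling piece and a distributional piece, show each is asymptotically Gaussian, and then combine them via a conditional characteristic function argument. Write
\[
\frac{1}{\sqrt{n}}\sum_{i=1}^n \bigl(\psi(D_i^n)-\mathbb{E}[\psi(D)]\bigr) \;=\; T_1^n + T_2^n,
\]
with $T_1^n := \frac{1}{\sqrt{n}}\sum_i(\psi(D_i^n)-\mathbb{E}^\xi[\psi(D)])$ (sampling variation given $\xi$) and $T_2^n := \sqrt{n}\,(\mathbb{E}^\xi[\psi(D)]-\mathbb{E}[\psi(D)])$ (distributional variation). By the Cramér-Wold device it suffices to treat scalar $\psi$. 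The density of $\mathbb{P}^\xi$ relative to Lebesgue on $[0,1]$ is $\sum_k (W_k/\bar W)\,\mathbf{1}_{I_k}$ with $\bar W = m^{-1}\sum_k W_k$, and a short computation using $\sum_k(W_k-\bar W)=0$ yields
\[
T_2^n \;=\; \frac{\sqrt{n}}{m\,\bar W}\sum_{k=1}^m (a_k-\bar a)(W_k-E[W]),
\]
where $a_k := m\int_{I_k}\psi$ is the bin-average of $\psi$ and $\bar a := \int_0^1\psi = \mathbb{E}[\psi(D)]$.

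Next I would apply the Lindeberg-Feller CLT to the independent sum defining $T_2^n$. Its variance equals $\text{Var}(W)\sum_k(a_k-\bar a)^2$, and $m^{-1}\sum_k(a_k-\bar a)^2 \to \text{Var}_{\mathbb{P}}(\psi(D))$ because $\sum_k a_k\mathbf{1}_{I_k}$ is the $L^2$-projection of $\psi$ onto piecewise constants on the bin grid and converges to $\psi$ in $L^2$ for every $\psi\in L^2$. The Lindeberg condition reduces to $\max_k(a_k-\bar a)^2/\sum_j(a_j-\bar a)^2 \to 0$, which holds because Cauchy-Schwarz gives $\max_k a_k^2 \le m\max_k\int_{I_k}\psi^2$, and dividing by $m$ vanishes by absolute continuity of $\int\psi^2$. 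Combined with $\bar W\to E[W]$ in probability and $n/m\to r$, Slutsky delivers
\[
T_2^n \;\stackrel{d}{\longrightarrow}\; \mathcal{N}\!\Bigl(0,\; r\,\tfrac{\text{Var}(W)}{E[W]^2}\,\text{Var}_{\mathbb{P}}(\psi(D))\Bigr).
\]

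For the joint limit I use that $T_2^n$ is $\xi$-measurable and condition:
\[
E\bigl[e^{i(sT_1^n + tT_2^n)}\bigr] \;=\; E\bigl[e^{itT_2^n}\cdot E[e^{isT_1^n}\mid \xi]\bigr].
\]
Given $\xi$, $T_1^n$ is a normalized sum of $n$ i.i.d.\ mean-zero variables with variance $\text{Var}^\xi(\psi(D))$. A Taylor expansion of $(\phi_\xi(s/\sqrt{n}))^n$ combined with a Lindeberg tail bound gives $E[e^{isT_1^n}\mid\xi] \to \exp(-s^2\text{Var}^\xi(\psi(D))/2)$ in probability, while the same bin-projection argument applied to $\psi$ and $\psi^2$ yields $\text{Var}^\xi(\psi(D))\to\text{Var}_{\mathbb{P}}(\psi(D))$ in probability. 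Substituting and using dominated convergence,
\[
E\bigl[e^{i(sT_1^n+tT_2^n)}\bigr] \;\longrightarrow\; e^{-s^2\text{Var}_{\mathbb{P}}(\psi(D))/2}\cdot E[e^{itT_2}],
\]
so $(T_1^n,T_2^n)$ converges jointly to independent Gaussians whose sum has variance $\bigl(1 + r\,\text{Var}(W)/E[W]^2\bigr)\text{Var}_{\mathbb{P}}(\psi(D)) = \delta^2\,\text{Var}_{\mathbb{P}}(\psi(D))$, as claimed.

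The main obstacle is making the conditional CLT for $T_1^n$ rigorous as a triangular-array statement: the conditioning variable $\xi=\xi^n$ itself depends on $n$ through $m=m(n)$, so one cannot simply invoke the classical i.i.d.\ CLT at a fixed $\xi$. The delicate point is controlling the conditional Lindeberg tail $E^\xi[(\psi-\mathbb{E}^\xi[\psi])^2\mathbf{1}_{|\psi-\mathbb{E}^\xi[\psi]|>\epsilon\sqrt{n}}]$ in probability when $\psi$ is only in $L^2$: since the density $d\mathbb{P}^\xi/d\mathbb{P}\le \max_k W_k/\bar W$ can grow with $m$, one truncates $\psi$ at level $K$, handles the bounded part directly, and bounds the tail via Markov's inequality using the marginal identity $E\bigl[E^\xi[(\psi-\psi_K)^2]\bigr]=\|\psi-\psi_K\|_{L^2(\mathbb{P})}^2\to 0$ as $K\to\infty$, before letting $K\to\infty$ after $n\to\infty$.
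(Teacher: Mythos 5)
Your proposal is correct and shares the paper's overall architecture: the same decomposition into a conditional sampling term and a distributional term, the same Lindeberg--Feller argument over bins with the $L^2$-projection $\Pi_m\psi\to\psi$ giving the variance $r\,\mathrm{Var}(W)/E[W]^2\cdot\mathrm{Var}_\mathbb{P}(\psi(D))$, and the same bounded-truncation device to pass from bounded $\psi$ to general $\psi\in L^2(\mathbb{P})$ (the paper bounds the marginal variance of the remainder by $\|\psi-\psi^B\|_{L^2(\mathbb{P})}^2$ exactly as you do via the identity $E[\mathbb{E}^\xi[(\psi-\psi_K)^2]]=\|\psi-\psi_K\|_{L^2(\mathbb{P})}^2$). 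The one genuinely different ingredient is how the two limits are glued together. You condition on $\xi$ inside the characteristic function and need a conditional CLT for the triangular array $T_1^n$, whose Lindeberg tail is awkward precisely for the reason you flag: $d\mathbb{P}^\xi/d\mathbb{P}$ is not bounded uniformly in $m$. The paper instead works with conditional distribution functions: it applies the Berry--Esseen bound to $\mathbb{P}^\xi(\cdot\le y)$, shows the bound's constant $\mathbb{E}^\xi|\psi(D)|^3/(\mathbb{E}^\xi|\psi(D)|^2)^{3/2}$ converges in probability (by applying the distributional CLT to $\psi^2$ and $\psi^3$, which is why it restricts to bounded $\psi$ at that stage), and then evaluates $E_\xi[\Phi(Y_n)]$ via the Gaussian convolution identity $E[\Phi(\delta x-\sqrt{\delta^2-1}Z)]=\Phi(x)$. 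The trade-off is essentially moment requirements versus a conditional Lindeberg verification: Berry--Esseen buys a uniform-in-$\xi$ rate at the cost of third moments (harmless once one has truncated), while your CF route avoids third moments for the bounded part trivially but makes you confront the conditional tail head-on; since both approaches ultimately truncate anyway, they land in the same place, and your sketch of the truncation and the order of limits ($n\to\infty$ first, then $K\to\infty$) is the right one.
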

    
    The proof can be found in the Appendix, Section~\ref{sec:asymp_pert_model}. The reason we consider a triangular array of data sets is that, motivated by the empirical example in Figure~\ref{fig:qqplots}, we consider a setting where the sampling uncertainty and distributional uncertainty are of the same order.

    Unless explicitly mentioned otherwise, in the following we assume that the data scientist has access to one such data set $(D_1^n,\ldots,D_n^n)$ for some large $n$. Note that if the data $(D_1,\ldots,D_n)$ is $\stackrel{\text{i.i.d.}}{\sim} \mathbb{P}$, then equation~\eqref{eq:normality} holds for  $\delta=1$ and $D_i^n = D_i $. Thus, equation~\eqref{eq:normality} is weaker than assuming that the data is drawn i.i.d.\ from $\mathbb{P}$.

    The asymptotic behaviour shown in equation~\eqref{eq:normality} arises 
    not only under the distributional perturbation model but other types of sampling procedures that induce dependence between observations. In Appendix \ref{sec:examplesofmodels}, we discuss other sampling models that give rise to \eqref{eq:normality}.

 In the following, we will discuss how estimators behave asymptotically under equation~\eqref{eq:normality}. It turns out that under some regularity assumptions, maximum likelihood estimators are still consistent and asymptotically normal, but with the scaling factor $\delta^2$ in the variance formula.

 \subsection{Asymptotic Behaviour of M-estimators}
Here we will consider the asymptotic behaviour of estimators $\hat \theta = \arg \min_{\theta \in \Omega} \frac{1}{n} \sum_{i=1}^n L(\theta,D_i^n)$ for a target defined via $\theta^0 = \arg \min_{\theta \in \Omega} \mathbb{E}[L(\theta,D)],$
where $L(\theta,\bullet)$ is a Borel-measurable loss function and $\Omega$ is an open subset of $\mathbb{R}^d$. These estimators include maximum likelihood estimators with $L(\theta,D) = - \log p_\theta(D)$. 

In classical statistical theory, uncertainty quantification is usually based on showing that the estimator is asymptotically Gaussian. Since we have a different two-stage sampling model, one has to verify that a similar approximation -- with a different variance formula -- still holds in our setting. 

First, we will discuss consistency. Instead of aiming for maximal generality, we will adapt a simple consistency proof from the literature. In particular, we will adapt the classical consistency result in \cite{van2000asymptotic}, Section 5.2.1. We expect that other consistency proofs can be adapted similarly. The main difference in the proof is that since the data is not i.i.d.\ from the target distribution  we cannot directly rely on the law of large numbers. The proof can be found in Appendix~\ref{appendix:m-estim}.
\begin{lemma}[Consistency of M-estimators]\label{lemma:asymptotic-consistency-m-estimators}
Consider the M-estimator $$\hat \theta = \arg \min_{\theta \in \Omega} \frac{1}{n} \sum_{i=1}^n L(\theta,D_i^n),$$ and the target $\theta^0  = \arg \min_{\theta \in \Omega} \mathbb{E}[L(\theta,D)]$,
where $\Omega$ is a compact subset of $\mathbb{R}^d$. Furthermore assume that $\theta \mapsto L(\theta,D)$ is continuous and that $ \inf_{ \| \theta - \theta' \|_2 \le \delta} L(\theta, D)$ is square-integrable under $\mathbb{P}$ for every $\delta$ and $\theta'$ and that $\inf_{\theta \in \Omega} L(\theta,D)$ 
 is square integrable. 
We assume that $\mathbb{E}[ L(\theta,D)]$ has a unique minimum.
Then,
\begin{equation*}
    \hat \theta - \theta^0 = o_p(1).
\end{equation*}
\end{lemma}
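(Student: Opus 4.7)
The plan is to adapt the classical Wald-style consistency argument for M-estimators (see e.g.\ \citet{van2000asymptotic}, Theorem~5.14); the only substantive modification is that the usual i.i.d.\ law of large numbers must be replaced by a pointwise weak LLN that is already a corollary of Lemma~\ref{lemma:random_weight}. Dividing \eqref{eq:normality} by an additional factor of $\sqrt{n}$ gives, for any square-integrable $\psi$,
\begin{equation*}
    \frac{1}{n}\sum_{i=1}^n \psi(D_i^n) \stackrel{p}{\to} \mathbb{E}[\psi(D)].
\end{equation*}
This pointwise convergence, applied to finitely many carefully chosen test functions, is the only probabilistic input that is needed.

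The first technical step is a covering claim: for every $\theta' \neq \theta^0$ there is an open neighborhood $U_{\theta'}$ of $\theta'$ with
\begin{equation*}
    \mathbb{E}\bigl[\inf_{\theta \in U_{\theta'}} L(\theta, D)\bigr] > \mathbb{E}[L(\theta^0, D)].
\end{equation*}
By continuity of $\theta \mapsto L(\theta, D)$, the infima $\inf_{\theta \in B(\theta', \delta)} L(\theta, D)$ increase monotonically to $L(\theta', D)$ as $\delta \downarrow 0$; they are square-integrable by hypothesis and bounded below by the integrable $\inf_{\theta \in \Omega} L(\theta, D)$. Monotone convergence then gives $\mathbb{E}[\inf_{\theta \in B(\theta', \delta)} L(\theta, D)] \to \mathbb{E}[L(\theta', D)]$ (possibly $+\infty$), and uniqueness of the minimum forces $\mathbb{E}[L(\theta', D)] > \mathbb{E}[L(\theta^0, D)]$, so a sufficiently small $\delta$ works.

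Next, fix $\epsilon > 0$. The set $K_\epsilon = \Omega \setminus B(\theta^0, \epsilon)$ is compact, so the neighborhoods above admit a finite subcover $U_1, \ldots, U_N$ of $K_\epsilon$. Setting $\psi_k(D) = \inf_{\theta \in U_k} L(\theta, D)$ and applying the pointwise LLN at each $\psi_k$ and at $L(\theta^0, \cdot)$, we obtain that with probability tending to one,
\begin{equation*}
    \frac{1}{n}\sum_{i=1}^n \psi_k(D_i^n) > \frac{1}{n}\sum_{i=1}^n L(\theta^0, D_i^n) \quad \text{for all } k = 1, \ldots, N.
\end{equation*}
On the event $\{\hat\theta \in K_\epsilon\}$, $\hat\theta$ would lie in some $U_k$, which would imply $\frac{1}{n}\sum_i L(\hat\theta, D_i^n) \geq \frac{1}{n}\sum_i \psi_k(D_i^n) > \frac{1}{n}\sum_i L(\theta^0, D_i^n)$, contradicting the definition of $\hat\theta$ as an empirical minimizer. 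Hence $\mathbb{P}(\|\hat\theta - \theta^0\|_2 > \epsilon) \to 0$.

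The main obstacle is that Lemma~\ref{lemma:random_weight} delivers only a \emph{pointwise} law of large numbers rather than a uniform one over $\Omega$. A naive $\sup_{\theta \in \Omega} |\frac{1}{n}\sum_i L(\theta, D_i^n) - \mathbb{E}[L(\theta, D)]| \stackrel{p}{\to} 0$ argument is therefore unavailable, and the compact-cover-plus-monotone-infima construction above is what allows the proof to proceed using only finitely many applications of the perturbed LLN. The square-integrability of the local infima stipulated in the hypothesis is precisely calibrated to enable this reduction.
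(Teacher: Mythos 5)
Your proof is correct and follows essentially the same route as the paper: the paper's proof simply invokes the Wald-style covering argument of \citet{van2000asymptotic}, Section~5.2.1, and notes that the only change is to replace the classical law of large numbers with the pointwise weak LLN obtained from Lemma~\ref{lemma:random_weight}. You have written out that covering argument (monotone-convergence choice of neighborhoods, finite subcover of the compact complement, finitely many applications of the perturbed LLN) explicitly, but the substance is identical.
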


Now let us turn to asymptotic normality. We will modify the proof in \cite{van2000asymptotic}, Section~5.6. Similarly as above, the main difference in the proof is since the data is not i.i.d.\ from the target distribution we cannot directly rely on the law of large numbers or a standard CLT. The proof can be found in Appendix~\ref{appendix:m-estim}.
\begin{lemma}[Asymptotic normality of M-estimators]\label{lemma:asymptotic-gaussianity-m-estimator}
For each $\theta$ in an open subset of $\Omega$, let $\theta \mapsto \partial_\theta L(\theta,D)$ be twice continuously differentiable in $\theta$ for every $D$.  Assume that the matrix $\mathbb{E}[ \partial_\theta^2 L(\theta^0,D)]$ exists and is nonsingular. Assume that third order partial derivatives of $ \theta \mapsto L(\theta,D)$ are dominated by a fixed function $h(\cdot)$ for every $\theta$ in a neighborhood of $\theta^0$. We assume that $\partial_\theta L(\theta^0,D)$, $\partial_\theta^2 L(\theta^0,D)$ and $h(D)$ are square-integrable under $\mathbb{P}$.  Let $\hat \theta = \arg \min \frac{1}{n} \sum_{i=1}^n L(\theta,D_i^n)$. Assume that $\hat \theta - \theta^0 = o_p(1)$, where $\theta^0$ satisfies the estimating equation $\mathbb{E}[\partial_\theta L(\theta^0,D)] = 0$. Then,
\begin{equation*}
    \sqrt{n}(\hat \theta - \theta^0) =- \frac{1}{\sqrt{n}} \sum_{i=1}^n  \mathbb{E}[ \partial_\theta^2 L(\theta^0,D) ]^{-1} \partial_\theta L(\theta^0,D_i^n)  + o_p(1).
\end{equation*}
In particular, by Lemma~\ref{lemma:random_weight} we have that $\sqrt{n}(\hat \theta - \theta^0)$ converges in distribution to a normal distribution with mean zero and covariance matrix $ \delta^2 \Sigma$, where 
\begin{equation*}
\Sigma = \mathbb{E}[ \partial_\theta^2 L(\theta^0,D) ]^{-1} \mathbb{E}[ \partial_\theta L(\theta^0,D) \partial_\theta L(\theta^0,D)^\intercal] \mathbb{E}[ \partial_\theta^2 L(\theta^0,D) ]^{-1}.
\end{equation*}
\end{lemma}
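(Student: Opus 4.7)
The plan is to adapt the classical M-estimator argument in van der Vaart (Section 5.6). Write the first-order condition as $0 = \frac{1}{n}\sum_{i=1}^n \partial_\theta L(\hat\theta, D_i^n)$ and Taylor expand around $\theta^0$ to second order: for every coordinate $k$ there exists an intermediate point $\tilde\theta_{n,k}$ (on the segment between $\hat\theta$ and $\theta^0$) such that
\begin{equation*}
0 = \frac{1}{n}\sum_{i=1}^n \partial_\theta L(\theta^0, D_i^n)_k + \Bigl[\frac{1}{n}\sum_{i=1}^n \partial_\theta^2 L(\tilde\theta_{n,k}, D_i^n)\Bigr]_{k,\cdot}(\hat\theta - \theta^0).
\end{equation*}
Rearranging this (and showing the bracketed matrix is invertible with high probability) produces exactly the linear expansion claimed in the statement. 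There are then two non-routine ingredients to verify: (a) the empirical Hessian evaluated at $\tilde\theta_n$ converges in probability to $\mathbb{E}[\partial_\theta^2 L(\theta^0,D)]$, and (b) the score term satisfies a CLT.

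For (a), I would apply Lemma~\ref{lemma:random_weight} component-wise to $\psi(d) = \partial_\theta^2 L(\theta^0, d)$, which is square-integrable under $\mathbb{P}$ by assumption; this yields $\frac{1}{n}\sum_{i=1}^n \partial_\theta^2 L(\theta^0,D_i^n) = \mathbb{E}[\partial_\theta^2 L(\theta^0,D)] + O_p(n^{-1/2})$. To move from $\theta^0$ to $\tilde\theta_n$, use the third-order domination $\|\partial_\theta^2 L(\theta, d) - \partial_\theta^2 L(\theta^0, d)\| \le h(d)\|\theta - \theta^0\|$ (valid in a neighborhood of $\theta^0$) to obtain
\begin{equation*}
\Bigl\|\frac{1}{n}\sum_{i=1}^n \bigl[\partial_\theta^2 L(\tilde\theta_n,D_i^n) - \partial_\theta^2 L(\theta^0,D_i^n)\bigr]\Bigr\| \le \|\tilde\theta_n - \theta^0\| \cdot \frac{1}{n}\sum_{i=1}^n h(D_i^n).
\end{equation*}
By Lemma~\ref{lemma:random_weight} applied to $\psi = h$ (square-integrable) the right-hand average is $O_p(1)$, and by the consistency hypothesis $\|\tilde\theta_n - \theta^0\| \le \|\hat\theta - \theta^0\| = o_p(1)$. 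Thus the intermediate Hessian converges in probability to the assumed nonsingular matrix $\mathbb{E}[\partial_\theta^2 L(\theta^0,D)]$, and its inverse converges to the inverse by the continuous mapping theorem.

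For (b), apply Lemma~\ref{lemma:random_weight} to $\psi(d) = \partial_\theta L(\theta^0, d)$; the estimating-equation assumption $\mathbb{E}[\partial_\theta L(\theta^0,D)] = 0$ ensures the centering matches, and square-integrability is hypothesized, so $\frac{1}{\sqrt n}\sum_i \partial_\theta L(\theta^0,D_i^n)$ converges in distribution to $\mathcal{N}(0, \delta^2 \mathrm{Var}_\mathbb{P}(\partial_\theta L(\theta^0,D)))$. Combining (a) and (b) by Slutsky gives the linear expansion with remainder $o_p(1)$ and the stated covariance $\delta^2 \Sigma$. The main obstacle compared to the classical proof is that the usual (uniform) LLN is unavailable, because the $D_i^n$ are not i.i.d.\ draws from $\mathbb{P}$; the key observation that makes the argument go through is that Lemma~\ref{lemma:random_weight} already delivers an $O_p(n^{-1/2})$ rate for means of any square-integrable function, which is more than enough to control both the empirical Hessian at $\theta^0$ and the envelope $h$. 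The bookkeeping effort is thus concentrated in verifying the square-integrability hypotheses required for each invocation of Lemma~\ref{lemma:random_weight}, all of which are explicitly assumed in the statement.
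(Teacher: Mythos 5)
Your proof is correct and takes essentially the same route as the paper: both adapt van der Vaart's Theorem 5.41, replacing each law-of-large-numbers/CLT step with an invocation of Lemma~\ref{lemma:random_weight} applied to $\partial_\theta L(\theta^0,\cdot)$, $\partial_\theta^2 L(\theta^0,\cdot)$, and the envelope $h$. The only cosmetic difference is that you use a coordinate-wise mean-value form of the expansion together with a Lipschitz bound on the empirical Hessian, whereas the paper retains van der Vaart's explicit second-order remainder $\Ddot{\Psi}_n(\tilde\theta_n)$ bounded by $\frac{1}{n}\sum_i h(D_i^n)$; the substance is identical.
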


The upshot is that $M$-estimators are asymptotically unbiased, marginally across both sampling uncertainty and distributional uncertainty. However, the variance formula changes in the sense that there is an (unknown) scaling factor $\delta^2$.

\subsection{The Standard Mode of Inference Fails}

Let us quickly sketch why the standard mode of inference fails. Let's consider the case of estimating the mean $\theta^0 = \mathbb{E}[D]$ via $\hat \theta = \frac{1}{n} \sum_{i=1}^n D_i^n$. One may be tempted to use the standard variance estimate $\hat \sigma_\text{naive}^2/n$, where 
\begin{equation*}
   \hat \sigma^2_\text{naive} = \frac{1}{n} \sum_{i=1}^n \left(D_i^n - \frac{1}{n} \sum_{j=1}^n D_j^n\right)^2.
\end{equation*}
However, a short calculation shows that
\begin{equation*}
  \hat \sigma^2_\text{naive}  = \frac{1}{n} \sum_{i=1}^n \left(D_i^n - \frac{1}{n} \sum_{j=1}^n D_j^n \right)^2 = \frac{1}{n} \sum_{i=1}^n (D_i^n)^2 - \left(\frac{1}{n} \sum_{j=1}^n D_j^n\right)^2 = \text{Var}_\mathbb{P}(D) + o_P(1).
\end{equation*}
Here, we used equation~\eqref{eq:normality} for $\psi(D) = D$ and $\psi(D) = D^2$. However, as shown in Lemma~\ref{lemma:random_weight}, the asymptotic variance of $\hat \theta$ is $\frac{\delta^2}{n} \text{Var}_\mathbb{P}(D)$.
Thus, the standard approach drastically underestimates variance in our setting. If $\delta$ is known, one can simply stretch the confidence intervals discussed in Section~\ref{sec:usual-business} accordingly. However, in general $\delta$ will be unknown and has to be estimated from data. We will discuss the estimation of $\delta$ in Section~\ref{sec:calibrated}.  

\section{Calibrated Inference}\label{sec:calibrated}

 We will now discuss how to estimate $\delta$ and form asymptotically valid confidence intervals for $\theta^0$.
 As discussed earlier, data analysts often have not just one reasonable estimator for a given parameter $\theta^0$, but potentially several reasonable estimators $\hat \theta^1, \ldots,\hat \theta^K$. For example, these estimators can arise from using different specifications in generalized linear models or by running the analysis for subgroups of the observations.

\begin{example}[OLS with several specifications]\label{example:ols} Let us consider a setting in which the data analyst wants to estimate the causal effect of some variable $X_1$ on a target variable $Y$. On observational data, this is often done by invoking suitable assumptions and regressing $Y$ on $X_1$ and a suitable set of covariates. Often, the analyst has several reasonable choices for the set of covariates. Suppose that the data analyst performs ordinary least-squares on $K$ different subsets of $X$ that include $X_1$, denoted by $X^{S_1}, X^{S_2}, \dots, X^{S_K}$. For example, $X^{S_1}$ can be $(X_1, X_2, X_3)$. Now the data analyst has $K$ different regression coefficients of $X_1$, $\hat{\theta}^1, \dots, \hat{\theta}^K$ where
$$
\hat{\theta}^k =  \left(\sum_{i=1}^{n}X_i^{S_k}{(X_i^{S_k})}^\intercal\right)_{1,\bullet}^{-1}\sum_{i=1}^{n}X_i^{S_k}Y_i.
$$
\end{example}
 
 If the empirical variation between the estimators $\hat \theta^1,\ldots,\hat \theta^K$ is low, then the analyst may feel more confident about conclusions drawn from these estimates than if the variation between these estimators is very large. As an example, in \citet{chiappori2012fatter} the authors write ``It is reassuring that the estimates are very similar in the standard and the
 augmented specifications". We will now look at this practice under the isotropic perturbation model. We will see that in this setting it is possible to construct a consistent estimator of $\delta$ and form asymptotically valid confidence intervals that account for both sampling uncertainty and distributional perturbations. 

If the estimators $\hat \theta^k = \hat \theta^k(D_1,\ldots,D_n)$ are M-estimators, by Lemma~\ref{lemma:asymptotic-consistency-m-estimators} and Lemma~\ref{lemma:asymptotic-gaussianity-m-estimator} the estimators are asymptotically linear in the sense that 
\begin{equation}\label{eq:asymplinearity}
    \hat{\theta}^k - \theta^k = \frac{1}{n}\sum_{i=1}^{n} \phi^k(D_i) + o_p(\frac{1}{\sqrt{n}}),
\end{equation}
for some deterministic $\theta^k = \arg \min \mathbb{E}[L^k(\theta,D)]$, where $L^k$ is the loss function of the estimator $\theta^k$. $\phi^k$ is referred to as the influence function of $\hat{\theta}^k$ that is assumed to satisfy  $\mathbb{E}[\phi^k(D)] = 0$ and $\text{Var}_\mathbb{P}(\phi^k(D)) \in (0, \infty)$. 
Since $\phi^k(D)$ is square integrable, by Lemma~\ref{lemma:random_weight} the sequence $\sqrt{n}(\hat{\theta}^k - \theta^k)$ converges in distribution to a normal distribution with mean zero and covariance $\delta^2 \text{Var}_{\mathbb{P}}(\phi^k(D))$. We summarize this behaviour of the estimators as the following assumption for the convenience of reference later. 
\begin{assumption}[Asymptotic linearity]\label{assump:a1}
The estimators $\hat\theta^k$, $k=1,\ldots,K$ are asymptotically linear, that is they satisfy equation \eqref{eq:asymplinearity} for influence functions $\phi^k$ with $\mathbb{E}[\phi^k(D)] = 0$ and $0 < \text{Var}_\mathbb{P}(\phi^k(D)) < \infty$.
\end{assumption}

As discussed above, for the case of $M$-estimators, this assumption can be justified via Lemma~\ref{lemma:asymptotic-consistency-m-estimators} and Lemma~\ref{lemma:asymptotic-gaussianity-m-estimator}. We will now formalize the premise that the data analyst considers each of the $\hat \theta^k$ a reasonable estimator for the parameter of interest, $\theta^0$. 
\begin{assumption}[Agreement]\label{ass:agreement}
We have $\theta^k = \theta^{0}$ for $k=1,\ldots,K$.
\end{assumption}

This assumption must be justified with scientific background knowledge. Intuitively, the assumption states that if both sampling uncertainty and distributional uncertainty were negligible, the estimators would agree. In Section~\ref{sec:numerical}, we discuss in a numerical example how the choice of such estimators can be justified. 
If the data scientist does not believe in asymptotic agreement of the estimators, we present conservative confidence intervals in the Appendix, Section~\ref{sec:altway}.

\subsection{Confidence Intervals}\label{sect:ci}
Now let us turn to constructing confidence intervals for $\theta^0$. 
Assume that the data analyst has access to $K$ different estimators $\hat{\theta}^1, \dots, \hat{\theta}^K$. We assume that these estimators are asymptotically linear for estimating $\theta^0$ with influence functions $\phi^1(D), \dots, \phi^K(D)$, i.e.\ that equation~\eqref{eq:asymplinearity} holds. As discussed above, this can be justified for common estimators using the theory in Section~\ref{sec:setting}. For expository simplicity, for now we assume that their influence functions $\phi^1(D), \dots, \phi^K(D)$ are uncorrelated and have the same variance $\sigma^2 > 0$ under $\mathbb{P}$. Later in the section, we discuss how to construct confidence intervals for general cases where influence functions are possibly correlated and have different variances. Since the estimators are asymptotically unbiased, uncorrelated, and have the same variance, as the final estimate we consider the mean of estimators, $\hat{\theta}^{\text{pooled}} = \frac{1}{K}\sum_k \hat{\theta}^k$. In the following, we will investigate the asymptotic behaviour of this estimator.

By Assumption~\ref{assump:a1}, \ref{ass:agreement} and Lemma~\ref{lemma:random_weight}, for $k = 1, \dots, K$,
\begin{equation*}
  \sqrt{n}(\hat \theta^k - \theta^0)_{k=1,\ldots,K} \stackrel{d}{=} \delta \sigma (Z_k)_{k=1,\ldots,K}+ o_P(1),
\end{equation*}
where $Z_k$ are independent standard normal random variables. Thus,
\begin{equation}\label{eq:pooled}
    \sqrt{n}(\hat{\theta}^{\text{pooled}} - \theta^0) \stackrel{d}{=} \delta \sigma \Bar{Z} + o_P(1).
\end{equation}
On the other hand, define the between-estimator variance 
\begin{equation*}
    \hat{\sigma}_{\text{bet}}^2 = \frac{1}{K-1}\sum_{k=1}^K (\hat{\theta}^k - \hat{\theta}^{\text{pooled}})^2.
\end{equation*}
Then, 
\begin{equation}\label{eq:betestvar}
    \hat{\sigma}_{\text{bet}}^2 \stackrel{d}{=} \frac{\delta^2\sigma^2}{n} \frac{1}{K-1}\sum(Z_k - \Bar{Z})^2  + o_P(1/n) \stackrel{d}{=} \frac{\delta^2\sigma^2}{n} \frac{\chi^2(K-1)}{K-1}  +  o_P(1/n),
\end{equation}
where $\chi^2(K-1)$ is a chi-square random variable with $K-1$ degrees of freedom. Let us assume for a moment that $\sigma^2$ is known to the data scientist. In this case, the data scientist may estimate $\hat \delta^2$ via 
\begin{equation*}
    \hat{\delta}^2 := \frac{n \hat{\sigma}_{\text{bet}}^2}{\sigma^2} \xrightarrow[]{d}   \delta^2 \frac{\chi^2(K-1)}{K-1}. 
\end{equation*}
Combining equations \eqref{eq:pooled} and \eqref{eq:betestvar}, we get
\begin{equation*}
    \frac{\hat{\theta}^{\text{pooled}} - \theta^0}{\hat{\sigma}_{\text{bet}}/\sqrt{K-1}} \xrightarrow[]{d} t(K-1),
\end{equation*}
where $t(K-1)$ is a $t$-distributed random variable with $K-1$ degrees of freedom. Note that $\delta$, $\sigma$ cancel out.

Without direct estimation of $\delta$ or $\sigma$, we have an $1-\alpha$ confidence interval of $\theta^0$: 
\begin{equation}\label{eq:simple-ci}
    \hat{\theta}^{\text{pooled}} \pm t_{K-1, 1-\alpha/2} \frac{\hat{\sigma}_{\text{bet}}}{\sqrt{K-1}},
\end{equation}
where $t_{K-1, 1-\alpha/2}$ is the $1-\alpha/2$ quantile of the $t$-distribution with $K-1$ degrees of freedom. Note that the size of the confidence intervals goes to zero with rate $1/\sqrt{n}$ as $\hat \sigma_\text{bet} = O_P(1/\sqrt{n})$. 

Let us make the argument in \eqref{eq:simple-ci} more general. We will now discuss the case where the estimators $\hat \theta^k$ have potentially different asymptotic variances $\text{Var}_{\mathbb{P}}(\phi^k(D))$. Instead of using $\hat \theta^{\text{pooled}} = \frac{1}{K} \sum_k \hat \theta^k$ as the final estimate, we recommend inverse variance weighting. Thus, we first need to estimate $\text{Var}_{\mathbb{P}}(\phi^k(D))$ consistently. While estimating $\text{Var}_{\mathbb{P}}(\phi^k(D))$ is straightforward under i.i.d.\ sampling, we also have to verify that this works in our model class. We estimate $\text{Var}_{\mathbb{P}}(\phi^k(D))$ using plug-in estimators of the influence function $\hat\phi^k(D)$ as
\begin{equation}\label{eq:var}
    \widehat{\text{Var}}_{\mathbb{P}}(\phi^k(D)) = \frac{1}{n} \sum_{i=1}^{n} \Big(\hat\phi^k(D_i) - \frac{1}{n}\sum_{i=1}^{n}\hat\phi^k(D_i)\Big)^2.
\end{equation}
The following proposition shows that $\widehat{\text{Var}}_{\mathbb{P}}(\phi^k(D))$ is a consistent estimator of $\text{Var}_{\mathbb{P}}(\phi^k(D))$.
\begin{proposition}[Consistency of $\widehat{\text{Var}}_{\mathbb{P}}(\phi^k(D))$]\label{prop:cons_var}
Suppose that the $\phi^k(D)$ has finite fourth moments.
Furthermore, suppose that the estimation of the influence function is consistent in the sense that 
\begin{equation}\label{eq:cons_inf}
    \frac{1}{n}\sum_{i=1}^{n}(\hat\phi^k(D_i) - \phi^k(D_i))^2 = o_p(1). 
\end{equation}
Then, $\widehat{\text{Var}}_{\mathbb{P}}(\phi^k(D))$ defined in \eqref{eq:var} satisfies that
$$
\widehat{\text{Var}}_{\mathbb{P}}(\phi^k(D)) = \text{Var}_{\mathbb{P}}(\phi^k(D)) + o_p(1).
$$
\end{proposition}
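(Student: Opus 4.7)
The plan is to expand the sample variance and split the error into two pieces: a ``sampling under the perturbation model'' piece, which is controlled by Lemma~\ref{lemma:random_weight}, and a ``plug-in'' piece, which is controlled by the consistency assumption \eqref{eq:cons_inf}. First I would write
\begin{equation*}
\widehat{\text{Var}_{\mathbb{P}}(\phi^k(D))}
 = \frac{1}{n}\sum_{i=1}^n \hat\phi^k(D_i)^2 - \Bigl(\frac{1}{n}\sum_{i=1}^n \hat\phi^k(D_i)\Bigr)^2,
\end{equation*}
and treat the two averages separately.

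For the oracle version with $\phi^k$ in place of $\hat\phi^k$, both $\phi^k(D)$ and $\phi^k(D)^2$ are square-integrable under $\mathbb{P}$ by the finite fourth moment assumption. Hence Lemma~\ref{lemma:random_weight} applied to $\psi = \phi^k$ and $\psi = (\phi^k)^2$ yields $\frac{1}{n}\sum_i \phi^k(D_i) = \mathbb{E}[\phi^k(D)] + O_p(n^{-1/2}) = o_p(1)$ and $\frac{1}{n}\sum_i \phi^k(D_i)^2 = \mathbb{E}[\phi^k(D)^2] + O_p(n^{-1/2}) = \text{Var}_\mathbb{P}(\phi^k(D)) + o_p(1)$. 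So if we can show that replacing $\phi^k$ by $\hat\phi^k$ changes these two averages only by $o_p(1)$, we are done.

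The replacement argument is a Cauchy--Schwarz computation. For the linear term,
\begin{equation*}
\Bigl|\frac{1}{n}\sum_{i=1}^n(\hat\phi^k(D_i) - \phi^k(D_i))\Bigr|
  \le \Bigl(\frac{1}{n}\sum_{i=1}^n(\hat\phi^k(D_i) - \phi^k(D_i))^2\Bigr)^{1/2} = o_p(1)
\end{equation*}
by \eqref{eq:cons_inf}. For the quadratic term, write $\hat\phi^k(D_i)^2 - \phi^k(D_i)^2 = (\hat\phi^k(D_i)-\phi^k(D_i))(\hat\phi^k(D_i)+\phi^k(D_i))$ and apply Cauchy--Schwarz to $\frac{1}{n}\sum_i$ of this product. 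The first factor contributes $o_p(1)$ by \eqref{eq:cons_inf}; the second factor is bounded by $2 \cdot \frac{1}{n}\sum_i (\hat\phi^k(D_i)-\phi^k(D_i))^2 + 2\cdot \frac{1}{n}\sum_i (2\phi^k(D_i))^2 = o_p(1) + O_p(1)$, where the second term is $O_p(1)$ by Lemma~\ref{lemma:random_weight} applied to $(\phi^k)^2$. Hence the replacement error in the quadratic average is $o_p(1)\cdot O_p(1)^{1/2} = o_p(1)$.

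The main obstacle, and the only place where the non-i.i.d.\ structure of the data matters, is that we cannot invoke the ordinary law of large numbers to conclude $\frac{1}{n}\sum_i\phi^k(D_i)^2 \to \mathbb{E}[\phi^k(D)^2]$. Lemma~\ref{lemma:random_weight} plays exactly that role, which is why the finite fourth moment condition on $\phi^k(D)$ is needed (so that $(\phi^k)^2$ is square-integrable and thus falls under the lemma's hypothesis). Once these two convergence facts are in place, combining them with the two Cauchy--Schwarz bounds above yields
\begin{equation*}
\widehat{\text{Var}_{\mathbb{P}}(\phi^k(D))} = \text{Var}_\mathbb{P}(\phi^k(D)) + o_p(1),
\end{equation*}
as claimed.
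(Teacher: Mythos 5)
Your proposal is correct and uses essentially the same two ingredients as the paper's proof: Lemma~\ref{lemma:random_weight} (applied to $\phi^k$ and $(\phi^k)^2$, which is where the fourth-moment condition enters) as a substitute for the law of large numbers on the oracle terms, and Cauchy--Schwarz together with \eqref{eq:cons_inf} to control the plug-in error. The only difference is cosmetic bookkeeping --- you expand the sample variance into raw moments while the paper centers first and splits into a squared-difference term, a cross term, and the oracle sample variance --- so no further comparison is needed.
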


\begin{remark}[OLS]
Note that equation \eqref{eq:cons_inf} is expected to hold for the plug-in estimators of the influence function under regularity assumptions. Revisiting Example \ref{example:ols}, a plug-in estimator of the influence function is 
\begin{equation*}
    \hat{\phi}^k(D_i) = (\frac{1}{n}\sum_{j=1}^{n}X_j^{S_k}{(X_j^{S_k})}^\intercal)_{1,\bullet}^{-1} X_i^{S_k}(Y_i  - (X_i^{S_k})^\intercal \hat \theta^{k,OLS} ), %
\end{equation*}
where $\hat \theta^{k,OLS}$ is the OLS estimator computed with covariates $X^{S^k}$. One can now justify equation~\eqref{eq:cons_inf} via Lemma~\ref{lemma:asymptotic-consistency-m-estimators}.
\end{remark}

Now we construct asymptotically valid confidence intervals for $\theta^0$ using $K$ different estimators $\hat{\theta}^1, \dots, \hat{\theta}^K$ that are asymptotically linear for estimating $\theta^0$. In the following theorem with Remark \ref{remark:correlated}, influence functions of $K$ different estimators can be correlated and have different variances.

\begin{theorem} (Asymptotic validity of calibrated confidence interval). \label{theorem:newci} 
Suppose Assumption \ref{assump:a1} and \ref{ass:agreement} hold and the influence functions $\phi^1(D), \dots, \phi^K(D)$ are uncorrelated. 
Let $\hat \theta^W = \sum_{k=1}^K \hat \alpha_k \hat \theta^k$ be the inverse-variance weighted estimator where the weights are 
\begin{equation}\label{eq:weights}
    \hat{\alpha}_k =  \frac{\frac{1}{\widehat{\text{Var}}_{\mathbb{P}}(\phi^k(D))} }{\sum_{j=1}^K \frac{1}{\widehat{\text{Var}}_{\mathbb{P}}(\phi^{j}(D))}},
\end{equation}
with $\widehat{\text{Var}}_{\mathbb{P}}(\phi^k(D)) = \text{Var}_{\mathbb{P}}(\phi^k(D)) + o_p(1) $ for $k = 1, \dots, K$. Let $\hat{\sigma}_{\text{bet}}$ be the weighted between-estimator variance defined as  
\begin{equation*}
   \hat \sigma^2_{\text{bet}} = \sum_k \hat \alpha_k (\hat{\theta}^k - \hat \theta^W)^2.
\end{equation*}
Then for any $\alpha \in (0, 1)$, for fixed $K$ and as $n \rightarrow \infty$ we have 
\begin{equation*}
    \mathrm{P} \left(\theta^0 \in \Big[\hat{\theta}^W \pm t_{K-1,1-\alpha/2} \cdot \frac{\hat{\sigma}_{\text{bet}}}{\sqrt{K-1}} \Big]\right) \xrightarrow[]{} 1-\alpha,
\end{equation*}
where $t_{K-1, 1-\alpha/2}$ is the $1-\alpha/2$ quantile of the $t$ distribution with $K-1$ degrees of freedom. To be clear, here we marginalize over both the randomness due to sampling and the randomness due to the distributional perturbation.
\end{theorem}

\begin{remark}[Correlated estimators]\label{remark:correlated}
    In practice, the components of $(\phi^1(D), \dots, \phi^K(D))$ may be correlated. Then, we can apply a linear transformation to the estimators to obtain uncorrelated estimators that are asymptotically unbiased for $\theta^0$. We define the transformation matrix $T_{ij} = \frac{(\hat \Sigma^{-1/2})_{ij}}{\sum_{j'} (\hat \Sigma^{-1/2})_{ij'} }$, where $\hat \Sigma$ is an estimate of the covariance matrix of $\hat \theta^1,\ldots,\hat \theta^K$.  We can then define $(\hat \eta^1,\ldots, \hat \eta^K)^\intercal =  T \cdot (\hat \theta^1,\ldots, \hat \theta^K)^\intercal$. If $\| \hat \Sigma - \Sigma\|_2 = o_p(1)$ and $\Sigma$ is invertible, then the estimators $\hat \eta^1,\ldots,\hat \eta^K$ also satisfy Assumption~\ref{assump:a1} with influence functions that are pairwise uncorrelated. Furthermore, if the $\hat \theta^k$, $k=1,\ldots,K$ satisfy Assumption~\ref{ass:agreement}, then also $\hat \eta^k$, $k=1,\ldots,K$ satisfy Assumption~\ref{ass:agreement}.  
    \end{remark}

\begin{remark}[Meta-analysis on a single data set]\label{remark:random-effect}
The inverse variance-weighted estimate shares some similarity with a meta-analysis model. In traditional meta-analysis, one accounts for the random distributional variability of estimators obtained across different data sets. In contrast, our method accounts for the variability of multiple estimators obtained on a single data set under the random distribution shift model, where multiple estimators for a single target quantity are subject to shared symmetric distribution shifts.  
Thus, the random distribution shift model justifies a particular ``meta-analysis on a single data set". 
The idea that a study can potentially ``replicate itself" has appeared in several communities, see the discussion in Section~\ref{sec:practical-implications}. In some of this literature, there is an emphasis that estimators should be subject to different biases. Analogously, in our approach, the estimators should have different influence functions, which implies that they will be affected differently by the random distribution shift. 
\end{remark}

Below we present an algorithm box that provides a summary of Theorem \ref{theorem:newci} and Remark \ref{remark:correlated} to construct calibrated confidence intervals.

\begin{algorithm}\label{algoritmbox}
\DontPrintSemicolon
  
  \KwInput{$K$ different estimators $\hat{\theta}^1, \dots, \hat{\theta}^K$ and their estimated influence functions $\hat{\phi}^1, \dots, \hat{\phi}^K$}
  \KwOutput{A calibrated confidence interval for $\theta^0$}
  \If{$\phi_1, \dots, \phi_K$ are correlated}
    {
        Estimate the transformation matrix $T$ as in Remark \ref{remark:correlated}.\\
        Let $(\hat{\theta}^1, \dots, \hat{\theta}^K)^\intercal$ $\leftarrow$ $T \cdot (\hat{\theta}^1, \dots, \hat{\theta}^K)^\intercal$.\\
        Let $(\hat{\phi}^1, \dots, \hat{\phi}^K)^\intercal$ $\leftarrow$ $T \cdot (\hat{\phi}^1, \dots, \hat{\phi}^K)^\intercal$.
    }
    Estimate the weights $\hat{\alpha}_k$ for $k = 1, \dots, K$ by Equation \eqref{eq:weights}.\\
    Compute the inverse-variance weighted estimator $\hat{\theta}^W = \sum_{k=1}^K \hat{\alpha}_k\hat{\theta}^k$.\\
    Compute the weighted between-estimator variance $\hat{\sigma}_{\text{bet}}^2 = \sum_{k=1}^K \hat{\alpha}_k(\hat{\theta}^k- \hat{\theta}^W)^2$.\\
    Return a calibrated confidence interval for $\theta^0$ as
    $$\hat{\theta}^W \pm t_{K-1, 1-\alpha/2} \cdot \frac{\hat{\sigma}_{\text{bet}}}{\sqrt{K-1}}.$$
    
\caption{Constructing Calibrated Confidence Intervals}
\end{algorithm}

In some cases, the data analyst may trust one of the estimators $\hat\theta^k$ more than others. For example, the data analyst may be convinced that $\theta^1 = \theta^0$ but may not be sure whether $\theta^k = \theta^0$ for $k \ge 2$. In this case, it is possible to construct variance estimates that are upwardly biased in the sense that the resulting confidence intervals are expected to be conservative. The data analyst may report the confidence interval for $\theta^0$ using $\hat\theta^1$ instead of $\hat\theta^W$ with $\delta$ estimated by computing the between-estimator variance of the remaining $K-1$ estimators. As a result, they would lose one degree of freedom in their confidence intervals. The details can be found in the Appendix, Section  \ref{sec:altway}.

In the final variance estimate, there are two effects that are counteracting each other. Inverse-variance weighting reduces the variance of the final estimate compared to each of the individual estimators $\hat \theta^k$. On the other hand, the new variance formula accounts for distributional uncertainty and thus potentially inflates the variance.

\subsection{Practical Implications for Stability Analyses}\label{sec:practical-implications}

The proposed model not only leads to a recommendation on how to summarize the between-estimator uncertainty in confidence intervals but also lets us give some additional guidance.

First, note that if all estimators $\hat \theta^k$ have similar influence functions, the proposed method will be unstable since in Remark~\ref{remark:correlated} we invert the estimated covariance matrix. This coincides with the following intuition: Reporting that a large number of extremely similar estimators return similar results does not automatically increase the trustworthiness of a result. To corroborate a hypothesis one should have estimators that are susceptible to different sources of biases. In our model, this corresponds to estimators that are not highly correlated. Ideally, the estimators are independent. Similar arguments have appeared in other parts of the literature. For example, \cite{rosenbaum2021replication} writes: ``An observational study has two evidence factors if it provides two comparisons susceptible to different biases that may be combined as if from independent studies of different data by different investigators, despite using the same data twice".

In practice, it may happen that calibrated confidence intervals are very large compared to traditional sampling-based confidence intervals. Apart from the estimation error and small $K$, there are two possible explanations. 

First, it could be that distributional uncertainty is very large. If distributional uncertainty is much larger than sampling uncertainty, conventional (unadjusted) confidence intervals are of limited value. Similar points have been made in different parts of the literature. For example, \citet{meng2018statistical} argues that as the sample size grows, data quality becomes more important than data quantity and that standard confidence intervals have to be inflated to account for issues of data quality. In this vein, distributional confidence intervals can be used as a 
warning signal that we might be in a regime where data quality issues are more pressing than sampling uncertainty.

 Secondly, it could be that the assumptions are violated (that means $\theta^k \neq \theta^{k'}$ for some $k,k'$). If the assumptions are grossly violated, inference will be more conservative.  This is further detailed in the Appendix, Section  \ref{sec:altway}. If the assumptions are correct, inference will be more precise. In other words, the precision of calibrated inference depends on whether Assumption~\ref{ass:agreement} is satisfied or not.

 If the number of estimators $K$ is very small, then there is an inferential price to pay for estimating distributional uncertainty in terms of power. This is reflected in the degrees of freedom of the $t$-distribution.

\section{Simulation Study}\label{sec:numerical}

In this section, we evaluate the performance of the proposed method via a simulation study. The marginal coverages of calibrated confidence intervals and the lengths of calibrated confidence intervals are evaluated on simulated data sets generated by random perturbation models.
In this simulation, we emulate the situation where a data scientist uses linear regression with an adjustment set to estimate a causal effect.

\quad

\textbf{Setup.} The unperturbed distribution of $D = (X, Y)$ with covariates $X \in \mathbb{R}^5$ and response $Y \in \mathbb{R}$ is generated from the following structural causal model \citep{bollen1989,pearl2009}:

\vspace{3mm}

\begin{minipage}{0.4\linewidth}
\begin{center}
   \quad\quad\quad \includegraphics[width=0.75 \linewidth]{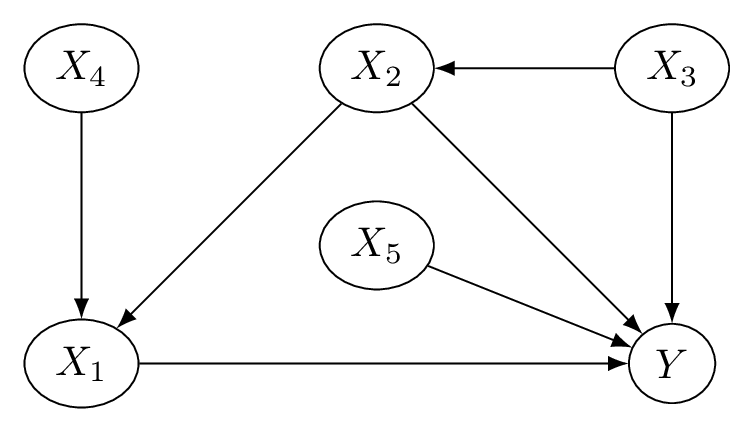} 
\end{center}
\end{minipage}\hfil
\begin{minipage}{0.55\linewidth}
\begin{center}
  \begin{tabular}{l}
    $\epsilon, \epsilon_1, \epsilon_2, X_3, X_4, X_5 \overset{\text{i.i.d}}{\sim} \mathcal{N}(0,1)$, \vspace{0.1cm} \\
    $X_2 \leftarrow X_3 + \epsilon_2$, \vspace{0.1cm}\\
    $X_1 \leftarrow 0.5X_2 + X_4 + \epsilon_1$, \vspace{0.1cm}\\
    $Y \, \, \, \leftarrow X_1 + 0.5X_2 + X_3 + X_5 +  \epsilon$ \\
  \end{tabular}
\end{center}
\end{minipage}

\quad

The goal is to estimate the direct causal effect of $X_1$ on $Y$, which in this setup corresponds to the regression coefficient of $X_1$ in a regression of $Y$ on the set $S = (X_1,X_2)$.  Practitioners often conduct such regressions for different choices of sets $S$ to evaluate the overall stability of the procedure \citep{leamer1983let,oster2019unobservable}.

In this example, the structural causal model can be used to construct multiple valid estimators. We look at the case where the data analyst considers $K=6$ different adjustment sets which all include the confounding variable $X_2$. In this case, $K=6$ different regression-adjusted estimators estimate the same quantity, the direct causal effect of $X_1$ on $Y$, under the unperturbed distribution. We consider following adjustment sets; 
$\{X_1, X_2, X_3\}$, 
$\{X_1, X_2, X_5\}$, $\{X_1, X_2, X_3, X_4\}$, $\{X_1, X_2, X_3, X_5\}$, $\{X_1, X_2, X_4, X_5\}$, $\{X_1, X_2, X_3, X_4, X_5\}$.

We now want to model a random shift between the target and the sampling distribution. We generate randomly perturbed data sets in two ways. First, we adopt the random perturbation model described in Lemma~\ref{lemma:random_weight}. We partition the support of the joint distribution of $X$ and $Y$ into $m^{p+1}$ equal probability bins and perturb the probability of each bin with i.i.d. random weights $Z\cdot W$ where $W \sim \text{Gamma}(1,1)$ and $Z \sim \text{Ber}(1/m^p)$. For sufficiently large $m$, this procedure can be seen as randomly selecting $m$ bins out of $m^{p+1}$ bins and perturbing the probability of each selected bin with i.i.d random weights $W \sim \text{Gamma}(1,1)$. In our simulations, we generate $n$ i.i.d. data points $D_1, \dots, D_n$ from this randomly perturbed distribution. The strength of the perturbation is given as $\delta^2 \approx 1 + 2\cdot n/m$. Secondly, we employ the random perturbation model described in Example~\ref{example:samp} in the Appendix. Here, we sample $m$ data points from the original distribution and let randomly perturbed distribution be the empirical distribution of $m$ samples. The strength of the perturbation is given as $\delta^2 \approx 1 + n/m$.

Our method is carried out for sample sizes $n = 200, 500, 1000$ and for $m = 200, 500, 1000$, which determines the strength of the perturbation, each with $N = 1000$ replicates. In each replicate, we generate $n$ samples from the randomly perturbed distribution, obtain $K = 6$ different regression-adjusted estimators from the perturbed data set, and construct a calibrated $(1-\alpha)$ confidence interval using the inverse-variance weighted estimator according to Algorithm \ref{algoritmbox}. We then evaluate the marginal coverage and length of the calibrated confidence interval and non-calibrated confidence intervals for each regression-adjusted estimator. While the direct estimation of $\hat{\delta^2}$ is not required in our calibrated confidence intervals, we also include simulation results on the accuracy of $\hat{\delta}^2$ in the Appendix \ref{sec:add_simulations}.

\subsection{The Marginal Coverages of Calibrated Confidence Intervals}

The marginal coverages of calibrated confidence intervals and non-calibrated confidence intervals are given in Figure~\ref{fig:coverage}. We see that calibrated confidence intervals have much improved coverage compared to non-calibrated confidence intervals, especially when $n$ is large and $m$ is small as the variance due to distributional perturbations dominates the marginal variance. In Appendix, Section \ref{sec:add_simulations}, we additionally look at the case where the data analyst considers $K=8$ different adjustment sets including two additional sets $\{X_1, X_2\}$ and
$\{X_1, X_2, X_4\}$. In this case, some estimators are highly correlated, meaning that intuitively they are not distinct sources of evidence. This results in slight undercoverage, which highlights our advice that ideally one should use uncorrelated estimators to calibrate inference.

\begin{figure}
    \centering
    \includegraphics[scale = 0.8]{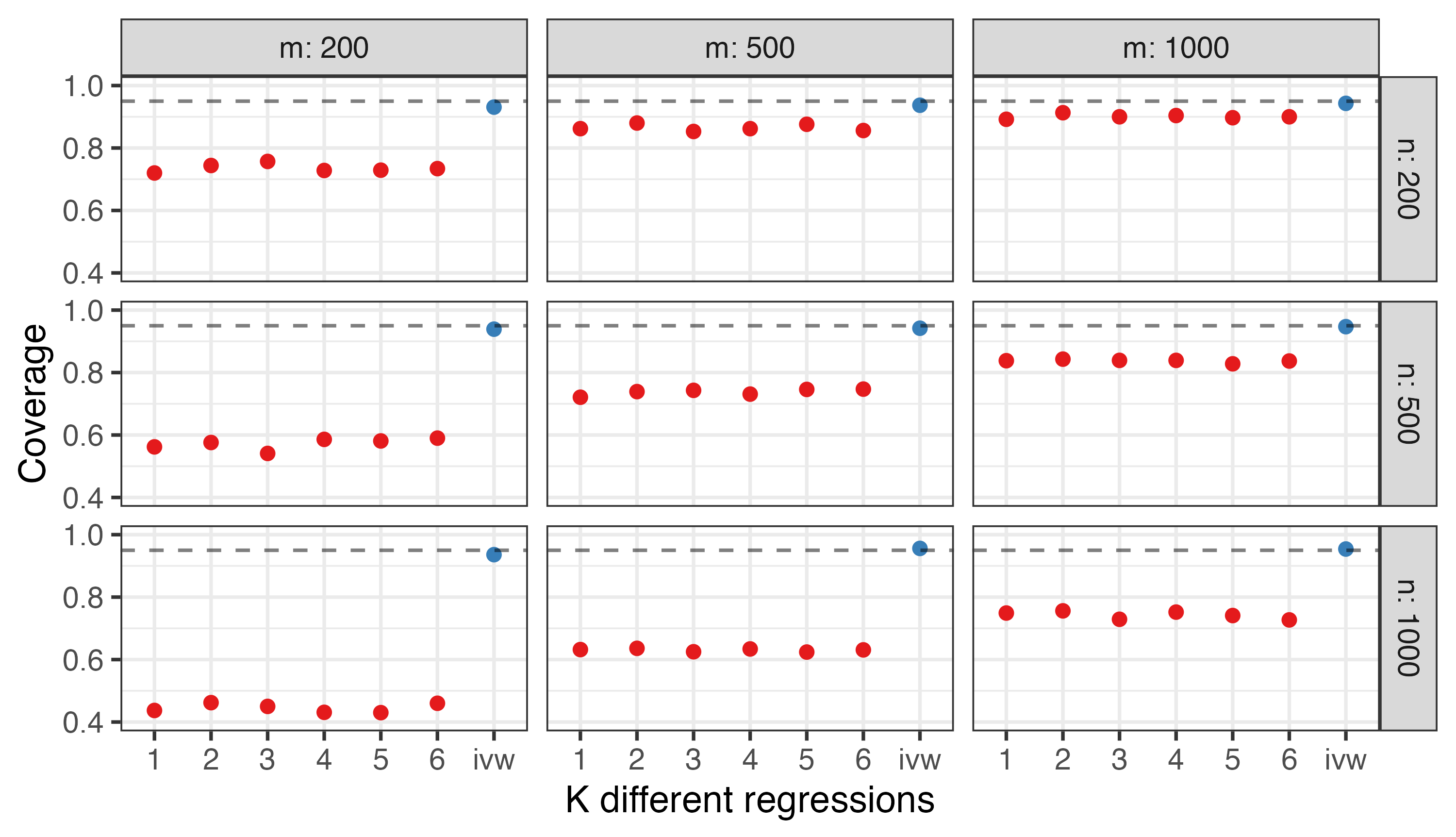}
   \includegraphics[scale = 0.8]{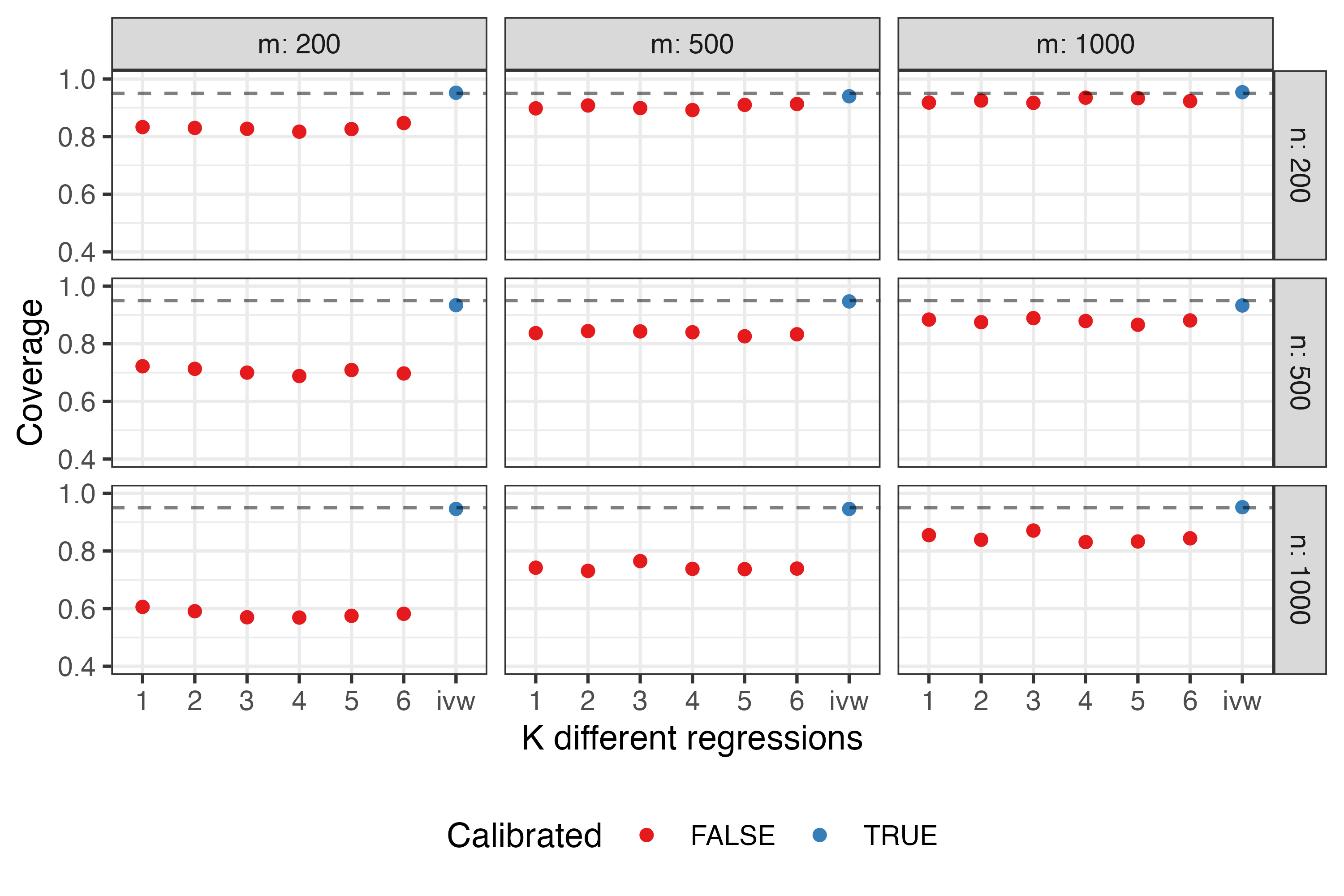}
    \caption{Marginal coverages of calibrated confidence intervals: The panel above shows the results under the perturbation model in Lemma~\ref{lemma:random_weight} and the panel below shows the results under the perturbation model in Example~\ref{example:samp} in the Appendix. Marginal coverages of non-calibrated confidence intervals for each regression-adjusted estimator and calibrated confidence intervals for the inverse-variance weighted estimator are presented for $m = 200, 500, 1000$ and $n = 200, 500, 1000$. The strength of the perturbation is given as $\delta^2 \approx 1 + n/m$. The dashed lines indicate the nominal coverage 0.95.}
    \label{fig:coverage}
\end{figure}

\subsection{The Lengths of Calibrated Confidence Intervals}

The boxplots of lengths of calibrated confidence intervals and non-calibrated confidence intervals are given in Figure \ref{fig:cilengths}. 
Figure \ref{fig:cilengths} indicates that, perhaps surprisingly, calibrated confidence intervals can have even smaller lengths than non-calibrated confidence intervals, despite accounting for both distributional uncertainty and sampling uncertainty. 
This is due to inverse-variance weighting, which reduces the variance of the final estimate in comparison to each of the individual estimators. Note that the proportion of outliers marked as circles in boxplots is typically less than $5\%$ for each boxplot. The distribution of the lengths of calibrated confidence intervals has a heavier tail than that of non-calibrated ones, as the former follows the square root of a scaled chi-square distribution with $K-1$ degrees of freedom.

\begin{figure}
    \centering
    \includegraphics[scale = 0.92]{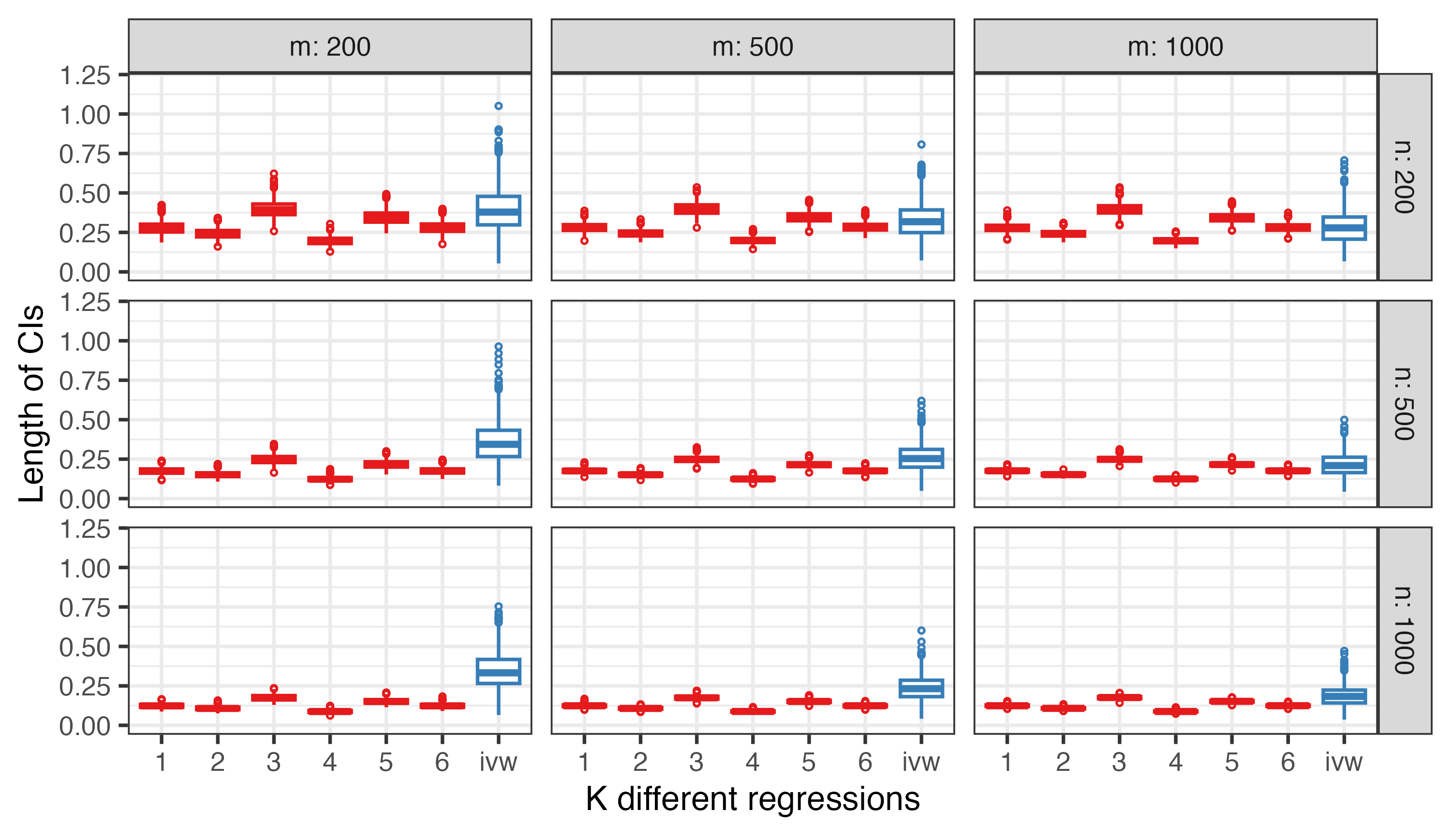}
    \includegraphics[scale = 0.92]{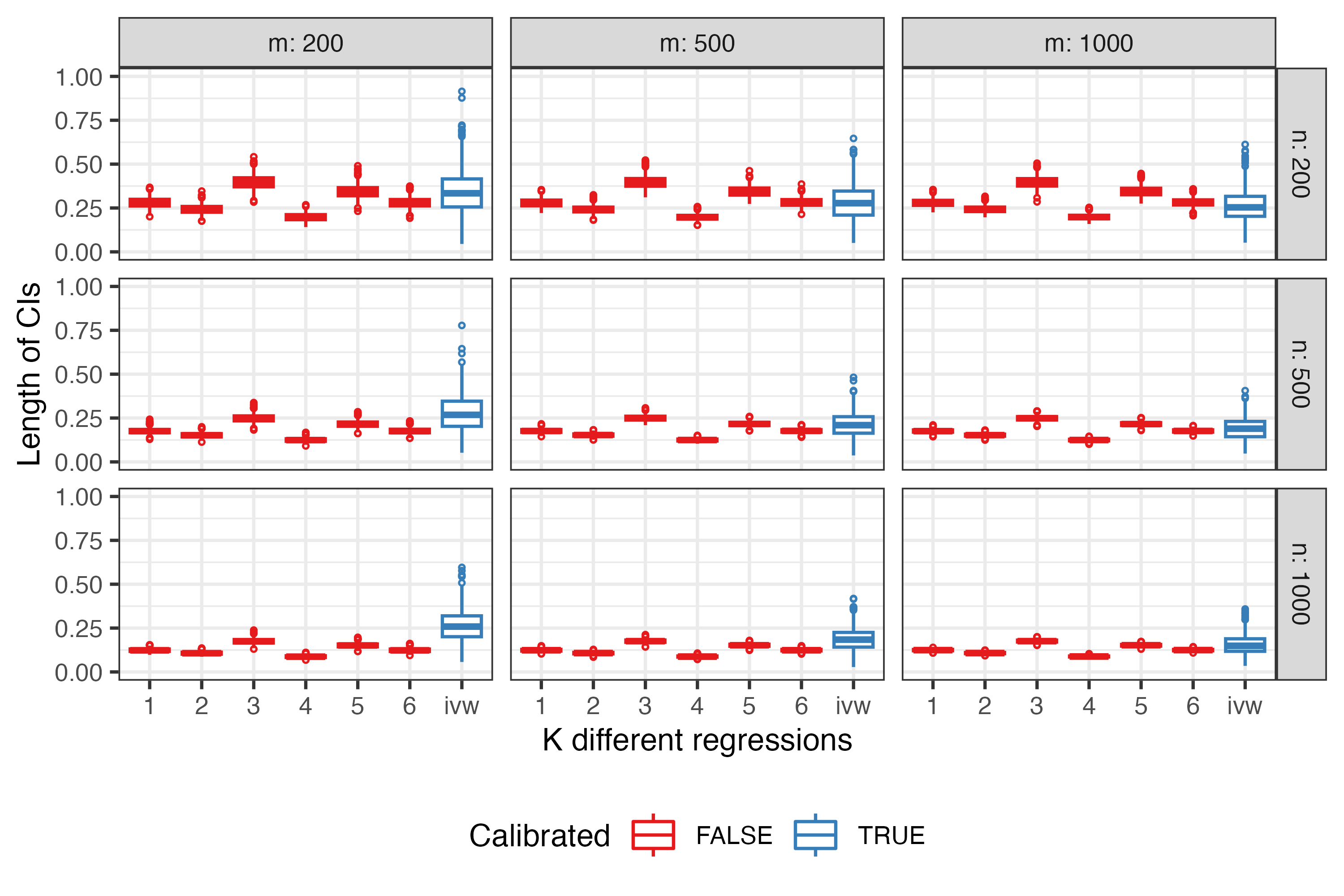}
    \caption{Lengths of calibrated confidence intervals: The panel above shows the results under the perturbation model in Lemma~\ref{lemma:random_weight} and the panel below shows the results under the perturbation model in Example~\ref{example:samp} in the Appendix. Boxplots of lengths of $N = 1000$ non-calibrated confidence intervals for each regression-adjusted estimator and calibrated confidence intervals for the inverse-variance weighted estimator are presented for $m = 200, 500, 1000$ and $n = 200, 500, 1000$.}
    \label{fig:cilengths}
\end{figure}

\section{Real-World Data Analysis}\label{sec:real-world}

Ultimately, the goal of our procedure is to increase stability and trustworthiness of decision-making.
In this section, we demonstrate that our method can improve stability on a real data set. We will see that even in situations without distributional perturbations, the proposed method can increase stability of decision-making. The data set \citep{studentdata} was collected by using school reports and questionnaires to estimate final grades of students in secondary education of two Portuguese schools. The data attributes include student grades, demographic, social and school related features. It is available at the UCI machine learning repository \citep{Dua:2017}. We adopt 20 covariates in the data set. The response $Y$ is the final year grade in Portuguese language. There are 649 students in total.

The goal is to determine the relative importance of $L = 7$ selected binary covariates: 1) parents' cohabitation status, 2) whether the student received extra educational support from the school, 3) whether the student received family educational support, 4) whether the student is in a relationship, 5) whether the student had extra paid classes within the course subject, 6) whether the student's mother had secondary or higher education, and 7) whether the student's father had secondary or higher education. The relative importance is determined by the rank order of the covariates' effect sizes in a linear regression. 

In the simulation setup, we aim to emulate a situation where as baseline the analyst has several reasonable choices to estimate a certain target quantity, and makes these decisions randomly. On the other hand, as a comparison, the analyst aggregates the estimators and conducts uncertainty quantification as proposed above. Ideally, the different estimators  are driven by scientific background knowledge, as in the previous section. Here, for illustration purposes, we investigate the extreme case where background knowledge is very limited, that is, the statistician does not have strong preferences regarding which covariates to include in the regression.

Suppose we are given multiple sets of covariates, all containing the 7 binary covariates of our interest. We consider the following two methods. In method 1, a statistician randomly chooses one of the sets of covariates, performs a linear regression, and ranks the effect sizes of 7 covariates. In method 2, a statistician employs our proposed method. In particular, they perform linear regressions with multiple sets of covariates and for each covariate, calculate an inverse-variance weighted estimator and its effect size in consideration of distributional perturbations as described in Section~\ref{sec:calibrated}. Then, they rank these effect sizes. Note that we use the additional constraint $\hat{\delta} = \max(\hat{\delta}, 1)$ in our implementation.

We evaluate the two methods' stability in ranking effect sizes. To evaluate method $i$, we randomly split the data set into two, perform method $i$ on each split, and compare the rankings resulting from each split. To measure the stability of the ranking, we compute the set similarity measure between  $S_{1,\ell} = \{$Top $\ell$ covariates by the effect size on split 1$\}$ and $S_{2,\ell} = \{$Top $\ell$ covariates by the effect size on split 2$\}$ for each $\ell = 1, \dots, L=7$ as $|S_{1,\ell} \cap S_{2,\ell}|/L$. We repeat this procedure $N = 500$ times and record the average set similarity measure. In each replicate, we randomly generate $K = 10, 20$ sets of covariates that include the 7 covariates of our interest. The results can be found in Table~\ref{table:ranking}. Overall, we see our method (Method 2) improves the stability of the ranking, notably outperforming Method 1 for $\ell=1, 2, 3$. Note that the method 1 gives slightly worse results than random guessing for small $\ell$. One possible explanation is that sample splitting introduces small negative correlations between splits: If a regression coefficient is close to zero on the entire data set and on one split by chance the coefficient is large, then the coefficient is expected to be small on the other split. 

\begin{table}[ht]
\centering
\begin{tabular}{rrrrrrrr}
  \hline
 $\ell$ & 1 & 2 & 3 & 4 & 5 & 6 & 7 \\ 
  \hline
Method 1 (Non-Calibrated, $K = 10$) & 0.102 & 0.203 & 0.407 & 0.648 & 0.817 & 0.898 & 1.000 \\ 
Method 2 (Calibrated, $K = 10$) & 0.210 & 0.296 & 0.449 & 0.658 & 0.828 & 0.912 & 1.000 \\ 
   \hline
\end{tabular}
\\\quad\\\quad\\
\centering
\begin{tabular}{rrrrrrrr}
  \hline
 $\ell$ & 1 & 2 & 3 & 4 & 5 & 6 & 7 \\ 
  \hline
Method 1 (Non-Calibrated, $K = 20$) & 0.090 & 0.203 & 0.417 & 0.659 & 0.817 & 0.893 & 1.000 \\ 
Method 2 (Calibrated, $K = 20$) & 0.235 & 0.313 & 0.445 & 0.679 & 0.845 & 0.912 & 1.000 \\ 
   \hline
\end{tabular}
\caption{The stability of the ranking: The table above shows results with $K = 10$ sets of covariates and the table below shows results with $K = 20$ sets of covariates. Mean over $N=500$ iterations of the computed set similarity measure between $S_{1,\ell}$ and $S_{2,\ell}$ for each $\ell = 1, \dots, 7$ is provided for each method. }
    \label{table:ranking}
\end{table}

Additionally, we compare lengths of calibrated and non-calibrated confidence intervals for each selected binary covariate using the full data set. 
From the results provided in Figure~\ref{fig:CIlength}, one can see that our method is not so conservative given that we are adjusting confidence intervals with a scaling factor $\hat{\delta}$. Moreover, the variance of the length of calibrated confidence intervals tends to decrease as we increase the number of sets of covariates from $K = 10$ to $K = 20$. For a more detailed look at the distribution of $\hat{\delta}$, the histograms of the confidence interval lengths are provided in Section~\ref{sec:add-hist} of the Appendix.

\begin{figure}[h]
     \centering
     \includegraphics[scale = 0.7]{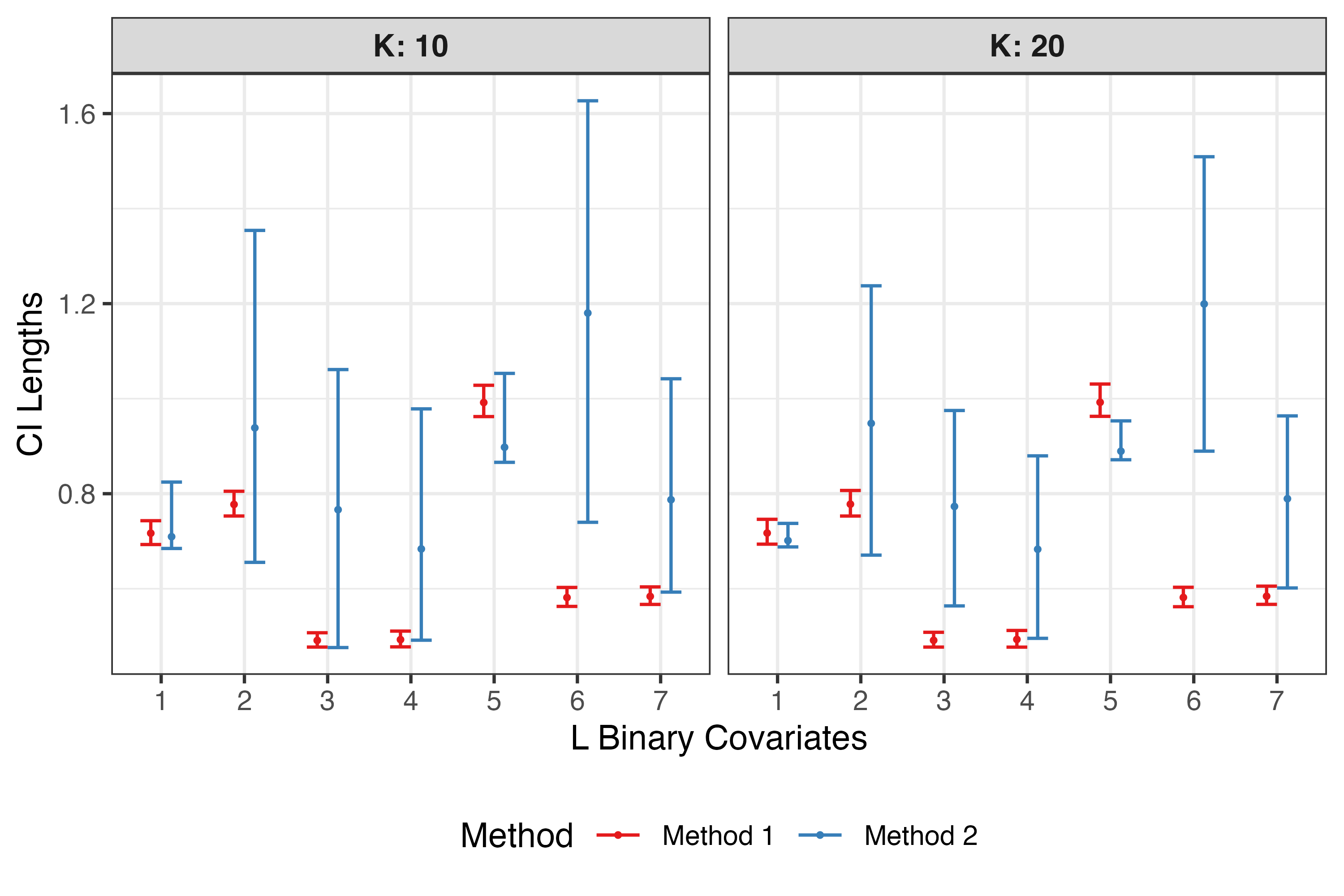}
     \caption{The lengths of calibrated and non-calibrated confidence intervals: Mean, $2.5\%$ and $97.5\%$ quantiles of the lengths of calibrated (Method 2) and non-calibrated (Method 1) confidence intervals for each selected binary covariate over $N=1000$ iterations are provided for $K = 10$ and $K = 20$.}
     \label{fig:CIlength}
 \end{figure}

\section{Discussion}\label{sec:discussion}

In practice, data analysts often compute not just one estimator but multiple estimators for a single target quantity. For example, in causal inference, practitioners often consider multiple strategies to estimate the treatment effect. They could compute multiple regression-adjusted estimators for different choices of adjustment sets (see Example~\ref{example:ols} and Section~\ref{sec:numerical}).  
If they believe that the treatment effect is homogeneous, they can derive several reasonable estimators for different subgroups of observations. 

Often, it is recommended to study the estimator-to-estimator variability between sensible choices of estimators. If the estimator-to-estimator variability is high, then the analyst might have reason to not trust the estimates. In these cases, such stability investigations may be more informative than traditional $p$-values or confidence regions. This warrants an investigation of the theoretical properties of this practice. Does this practice have any guarantees and if so, which? Can we integrate this type of stability analysis into statistical inference?

We study a variant of this procedure from a distributional perspective. The data analyst may have access to multiple estimators, each purportedly estimating the same quantity, as justified by scientific background knowledge. In this context, estimator-to-estimator variability can be leveraged to scale confidence intervals. We show that these scaled confidence intervals account for both sampling uncertainty and distributional uncertainty within an isotropic perturbation model. Such uncertainty quantification seems desirable, especially in settings where the sampling uncertainty is of similar or lower order than other types of uncertainty. 

This isotropic perturbation model is motivated by empirical phenomena in Figure~\ref{fig:qqplots}. It assumes that the distribution shift is a superposition of many small random distributional changes. 

The calibration procedure is not meant to replace existing methods that address confounding or selection bias via bias corrections, regression adjustment, or weighting procedures. Instead, our procedure can be used in conjunction with these methods as a second step that ``calibrates" the confidence intervals.

The isotropic perturbation model is a strong assumption, but it is a weaker assumption than assuming that the data is i.i.d.\ from the target distribution, which is commonly made. Thus, the proposed calibration procedure works under strictly less assumptions on the data generating process than the most common inferential strategy. Instead of relying on i.i.d.\ sampling from $\mathbb{P}$, inference in the proposed model is based on a symmetry assumption and on scientific background knowledge for finding multiple reasonable estimators.

Of course, in practice perturbations might affect parts of the distribution differently. In such cases the proposed method can potentially have over-coverage or under-coverage. Looking forward, it would be desirable to extend the isotropic perturbation model (which has only one single parameter $\delta$) to more flexible models that depend on multiple parameters. Such perturbation models would allow training different uncertainty models for different parts of the distribution, potentially leading to more realistic and flexible uncertainty quantification than existing approaches.

Furthermore, while our method currently operates with a single data set, a promising extension involves exploring scenarios with multiple perturbed data sets. When having access to multiple perturbed data sets, we can model the different data sets arising from the perturbed data generating distributions. Some first discussions about using and modeling multiple perturbed data sets can be found in \cite{rothenhausler2023distributionally} and \cite{bansak2023learning}.

A companion R package, calinf, is available at \url{https://github.com/rothenhaeusler/calinf}. Our package allows to draw data under the distributional uncertainty model and calibrate inference in generalized linear models. We provide an example of calibrated inference where the data analyst computes regression-adjusted estimators for different choices of adjustment sets. If multiple estimators are not available, it is also possible to estimate $\delta$ using other types of scientific background knowledge. On the GitHub page, we discuss an example where the data analyst has background knowledge of population parameters.

\section{Acknowledgments}

We thank the AE, three anonymous reviewers, Bin Yu, Peter B\"uhlmann, Guido Imbens, Xiao-Li Meng, Kevin Guo, Suyash Gupta, Ying Jin, and James Yang for helpful feedback and inspiring discussions. We are grateful for the support of the Stanford Institute for Human-Centered Artificial Intelligence (HAI) and the Dieter Schwarz Foundation.

\newpage
%\vskip 0.2in
\bibliography{sample}

\newpage

\appendix

\section*{Appendix}

In Section~\ref{sec:pert-model}, we discuss additional properties of the isotropic perturbation model. Section~\ref{sec:proofs} contains the proofs. Section~\ref{sec:altway} discusses how to form robust confidence intervals if the data analyst trusts one of the estimators $\hat \theta^k$ more than others. Section~\ref{sec:add_simulations} presents additional simulation results. 

\section{Properties of The Isotropic Perturbation Model}\label{sec:pert-model}

Recall that conditionally on $\xi$, the data $(D_i)_{1\le i \le n}$ are drawn i.i.d.\ from the perturbed distribution $\mathbb{P}^\xi(D = \bullet)$, where $\xi$ is an unobserved random variable. Note that an estimator $\hat \theta = \hat \theta(D_1,\ldots,D_n)$ for some parameter $\theta^0(\mathbb{P})$ now has two sources of uncertainty: the uncertainty due to sampling and the uncertainty due to the random perturbation.
\begin{equation*}\label{eqn:decompose}
    \hat{\theta} -  \theta^0(\mathbb{P}) = \underbrace{\hat{\theta} - \theta(\mathbb{P}^\xi)}_{\substack{\text{variation due to} \\ \text{sampling}}} + \underbrace{\theta(\mathbb{P}^\xi) - \theta^0(\mathbb{P})}_{\substack{\text{variation due to} \\ \text{random perturbation}}}
\end{equation*}
We refer to the second component as \textit{distributional uncertainty}. In this section we will study such distributional perturbation models in more detail. 
In Section~\ref{sec:appendix-motivation}, we provide additional insights into the motivation behind the random distributional perturbation model. 
 In Section~\ref{sec:unique}, we will show that under a strong symmetry assumption, there exists only one class of perturbation models that is characterized by a one-dimensional parameter $\delta_\text{dist}$. %
 In Section~\ref{sec:outlook}, we will sketch an extension of the random perturbation model that allows different parts of the distribution to be affected by different perturbations. 

 \subsection{Additional Insights into the Isotropic Perturbation Model}\label{sec:appendix-motivation}

Here we present additional insights into the weight-based distributional perturbation model.  We draw inspiration from real-world examples presented in Figure \ref{fig:qqplots} to construct the random perturbation model. A priori, there may be several mathematical random perturbation models leading to Figure \ref{fig:qqplots}. To simplify the discussion, in the following we will ignore sampling uncertainty. First, we will show that constant variance inflation implies random weights that are (almost) uncorrelated for disjoint events. Then, in the discrete setting, we will show that random weights imply constant variance inflation.

First, we study models that give rise to the constant variance inflation observed in Figure~\ref{fig:qqplots}.  To be precise, as working assumption we assume that for all square-integrable functions $\psi(D)$ under $\mathbb{P}$, 
    \begin{equation}\label{eq:var-inflation}
        \text{Var}_P(\mathbb{E}^{\xi}[\psi(D)]) = \delta_{\text{dist}}^2 \text{Var}_{\mathbb{P}}(\psi(D)),
    \end{equation}
for some variance inflation factor $\delta_\text{dist}^2$. As before, $\mathbb{P}$ refers to the unperturbed distribution of $D$ and $P$ refers to the marginal distribution of the perturbation and the observed data, $(\xi, D_1,\ldots,D_n)$. Using equation~\eqref{eq:var-inflation}, for all square-integrable functions $\psi(D)$, $\psi'(D)$,
    \begin{align*}
        &2 \text{Cov}_P(\mathbb{E}^{\xi}[\psi(D)], \mathbb{E}^{\xi}[\psi'(D)]) \\
        &=  \text{Var}_P(\mathbb{E}^{\xi}[\psi(D)] + \mathbb{E}^{\xi}[\psi'(D)]) - \text{Var}_P(\mathbb{E}^{\xi}[\psi(D)]) - \text{Var}_P(\mathbb{E}^{\xi}[\psi'(D)])  \\
        & = \delta_{\text{dist}}^2 (\text{Var}_{\mathbb{P}}(\psi(D) +\psi'(D)) -\text{Var}_{\mathbb{P}}(\psi(D)) - \text{Var}_{\mathbb{P}}(\psi'(D))) \\
        & = 2 \delta_{\text{dist}}^2 \text{Cov}_{\mathbb{P}}(\psi(D), \psi'(D)).
    \end{align*}  
    Thus, for disjoint $D$-measurable events $A$ and $B$ with $\mathbb{P}(A) = \mathbb{P}(B) = 1/K$, 
    \begin{equation*}
        \text{Cov}_{P}(\mathbb{P}^{\xi}[A], \mathbb{P}^{\xi}[B]) = \text{Cov}_{P}(\mathbb{E}^{\xi}[1_A], \mathbb{E}^{\xi}[1_B]) = \delta_{\text{dist}}^2 \text{Cov}_{\mathbb{P}}(1_A, 1_B) = - \frac{\delta_{\text{dist}}^2}{K^2} ,
    \end{equation*}
    \begin{equation*}
         \text{Var}_P(\mathbb{P}^{\xi}[A]) = \text{Var}_P(\mathbb{E}^{\xi}[1_A]) = \delta_{\text{dist}}^2 \text{Var}_{\mathbb{P}}(1_A) = \delta_{\text{dist}}^2\frac{1}{K}\left(1-\frac{1}{K}\right) .
    \end{equation*}
    Thus, $\mathbb{P}^{\xi}[A]$ and $\mathbb{P}^{\xi}[B]$ have the same variance and are marginally uncorrelated (ignoring lower order terms). Moreover, the right hand sides depend only on $\delta_{\text{dist}}^2$ and $K$. This inspires us to construct a random perturbation model by initially partitioning the probability space into $K$ disjoint bins with equal probability and then adjusting the probability of each partition with random weights constructed by positive i.i.d. random variables. As discussed in Section~2.2, empirical means are asymptotically Gaussian as the partitioning becomes finer, no matter how exactly the probability space was partitioned. 

    We will now go in the reverse direction. We will show that random re-weighting implies equation~\eqref{eq:var-inflation} in a simple discrete model. We will consider the simple example of a discrete uniform distribution $\mathbb{P}(D = k) = \frac{1}{K}$ for $k=1,\ldots,K$. Let $W_1, \dots, W_K$ be i.i.d. positive random variables with finite variance. We define the randomly perturbed distribution $\mathbb{P}^{\xi}$ by setting
    \begin{equation}\label{eq:random-pert-discrete}
         \mathbb{P}^\xi(D=k) = \frac{\xi_k}{K},
    \end{equation} 
    where 
    $$
    \xi_k := \frac{W_k}{\sum_{k=1}^K W_k/K}.
    $$
    Note that since $W_1,\ldots,W_K$ are positive i.i.d. random variables, the random perturbations $\xi_1, \dots, \xi_K$ are exchangeable non-negative random variables that sum to $\sum \xi_k = K$. Then for $ 1 \leq k_1 \neq k_2 \leq K$, we have
    \begin{align*}
        \text{Cov}_{P}(\mathbb{P}^{\xi}[D = k_1], \mathbb{P}^{\xi}[D = k_2]) = - \frac{\delta_{\text{dist}}^2}{K^2}\\
        \text{Var}_P(\mathbb{P}^{\xi}[D = k_1]) = \delta_{\text{dist}}^2\frac{1}{K}\left(1-\frac{1}{K}\right)
    \end{align*}
    where $\delta_{\text{dist}}^2 :=  \frac{1}{K-1} \text{Var}(\xi_1)$. We used that since $\text{Var}(\sum_k \xi_k) = \text{Var}(K) = 0$, we have $ \frac{1}{K-1} \text{Var}(\xi_1) = - \text{Cov}(\xi_1, \xi_2)$. Moreover, for any function $\psi : \{1,\ldots,K \} \rightarrow \mathbb{R}$,
    \begin{align*}
        \text{Var}(\mathbb{E}^\xi[\psi(D)])
        & = \text{Var} (\xi_1) \sum_k \frac{\psi(k)^2}{K^2} + \text{Cov}(\xi_1, \xi_2) \sum_{k_1 \neq k_2} \frac{\psi(k_1) \psi(k_2)}{K^2} \\
        & = \delta_{\text{dist}}^2 \left((K-1) \sum_k \frac{\psi(k)^2}{K^2} - \sum_{k_1 \neq k_2} \frac{\psi(k_1)\psi(k_2)}{K^2}\right)\\ 
        & = \delta_{\text{dist}}^2 \left( \frac{1}{K} \sum_k \psi(k)^2 - \frac{1}{K^2} \left( \sum_{k}  \psi(k) \right)^2 \right) \\
        & = \delta_\text{dist}^2 \text{Var}_{\mathbb{P}}(\psi(D)).
    \end{align*}
    Thus, the random re-weighting model \eqref{eq:random-pert-discrete} implies equation~\eqref{eq:var-inflation}.    

\subsection{Uniqueness of Distributional Perturbation Model}\label{sec:unique}

We see that under the distributional perturbation model introduced earlier, the variance of the perturbation is proportional to the variance in the unperturbed distribution. This raises the question whether there are other ``symmetric" random perturbation schemes that do not satisfy the variation inflation property
in equation~\eqref{eq:var-inflation}. The following result gives a negative answer to this question.
We will see that under a symmetry assumption, there exists only one type of perturbation model, which is equivalent to the one in equation~\eqref{eq:var-inflation}. Roughly speaking, the symmetry assumption states that two events that have equal probability under $\mathbb{P}$ are perturbed in a similar fashion. In the following, we write $\mathbb{Q}$ for the marginal distribution of $(D,\xi)$, where first the perturbation $\xi$ is drawn and then $D \sim \mathbb{P}^\xi$. The proof of the following result can be found in Section~\ref{sec:proof-theorem-2}.

\begin{theorem}[Characterization of isotropic perturbation models]\label{theorem:char}
    Let $(D,\xi) \sim \mathbb{Q}$ and assume that there exists a function $h(\bullet)$ such that $h(D)$ is uniformly distributed on $[0,1]$. Let $\mathbb{P}^\xi = \mathbb{Q}( \bullet | \xi)$ and let $\mathbb{P}$ denote the marginal distribution of $D$ under $\mathbb{Q}$. Assume that for any $D$-measurable events $A$ and $B$ with $\mathbb{P}(A) = \mathbb{P}(B)$,
    \begin{equation}\label{eq:symmetry}
        \mathrm{Var}({\mathbb{P}^\xi(A)}) = \mathrm{Var}({\mathbb{P}^\xi(B)}).
    \end{equation}
    Furthermore, assume that for every sequence of $D$-measurable events $A_j$ with $\mathbb{P}(A_j) \rightarrow 0$,
    \begin{equation*}
        \mathrm{Var}({\mathbb{P}^\xi(A_j)}) \rightarrow 0.
    \end{equation*}
    Then for any $\psi(D) \in L^2(\mathbb{P})$
\begin{equation}\label{eq:pert-model}
            \mathrm{Var}(\mathbb{E}^\xi[\psi(D)] ) = \delta_\text{dist}^2 \mathrm{Var}_{\mathbb{P}}(\psi(D)),
\end{equation}
 for some fixed $\delta_\text{dist} \ge 0$. 
\end{theorem}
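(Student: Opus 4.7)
The plan is to reduce the full $L^2(\mathbb{P})$ statement to a one-dimensional functional equation for the function $g(p) := \mathrm{Var}(\mathbb{P}^\xi(A))$ indexed by $p = \mathbb{P}(A)$, solve that equation, and then extend back to $L^2$ by density.

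First I would use symmetry~\eqref{eq:symmetry} to check that $g : [0,1] \to [0,\infty)$ is well-defined. Because $h(D) \sim \mathrm{Unif}[0,1]$, for any $p_1,\ldots,p_K$ summing to at most $1$ one can realize disjoint events with those probabilities as preimages of disjoint sub-intervals under $h$, so the relevant probabilities are actually attained. Trivially $g(0)=g(1)=0$. For two disjoint events $A,B$ with $\mathbb{P}(A)=p$, $\mathbb{P}(B)=q$, additivity $\mathbb{P}^\xi(A\cup B) = \mathbb{P}^\xi(A) + \mathbb{P}^\xi(B)$ combined with symmetry gives
\begin{equation*}
    \mathrm{Cov}(\mathbb{P}^\xi(A), \mathbb{P}^\xi(B)) = \tfrac{1}{2}\bigl(g(p+q) - g(p) - g(q)\bigr),
\end{equation*}
so this covariance depends only on $(p,q)$. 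Applying this to three disjoint events and computing $\mathrm{Var}(\mathbb{P}^\xi(A\cup B\cup C))$ in two ways yields the functional equation
\begin{equation*}
    g(p+q+r) + g(p) + g(q) + g(r) = g(p+q) + g(p+r) + g(q+r).
\end{equation*}

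The main obstacle is solving this equation under only the weak regularity we are given. The idea is to upgrade $g(p)\to 0$ as $p\to 0$ into full continuity through Cauchy--Schwarz: the bound $|g(p+h) - g(p) - g(h)| = 2|\mathrm{Cov}(\mathbb{P}^\xi(A),\mathbb{P}^\xi(B))| \le 2\sqrt{g(p)g(h)}$, together with the trivial boundedness $g \le 1/4$, implies $g(p\pm h)\to g(p)$ as $h\to 0$. Setting $q=r=h$ in the functional equation then shows the symmetric second difference $g(p+2h) - 2g(p+h) + g(p)$ is independent of $p$; by the classical Fr\'echet characterization, a continuous function with this property is a quadratic polynomial, and the boundary conditions $g(0)=g(1)=0$ force $g(p) = \delta_\text{dist}^2\, p(1-p)$ for some $\delta_\text{dist}^2 \ge 0$. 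In particular, for disjoint $A,B$ we obtain $\mathrm{Cov}(\mathbb{P}^\xi(A), \mathbb{P}^\xi(B)) = -\delta_\text{dist}^2\, \mathbb{P}(A)\mathbb{P}(B)$.

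Finally, I would verify~\eqref{eq:pert-model} for simple functions $\phi = \sum_i a_i \mathbf{1}_{A_i}$ with disjoint $A_i$ by direct expansion, which collapses to $\delta_\text{dist}^2 \bigl(\sum_i a_i^2 p_i - (\sum_i a_i p_i)^2\bigr) = \delta_\text{dist}^2\, \mathrm{Var}_\mathbb{P}(\phi(D))$. The extension to general $\phi \in L^2(\mathbb{P})$ is a density argument: Jensen gives $\mathbb{E}[(\mathbb{E}^\xi[\phi(D)])^2] \le \mathbb{E}_\mathbb{P}[\phi(D)^2]$, so the map $\phi \mapsto \mathbb{E}^\xi[\phi(D)]$ is an $L^2$ contraction; both sides of~\eqref{eq:pert-model} are therefore continuous in $\phi$ under the $L^2(\mathbb{P})$ norm, and simple functions are dense.
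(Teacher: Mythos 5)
Your proof is correct, and its overall skeleton matches the paper's: reduce to a single function of the event probability via the symmetry assumption, derive the covariance formula for disjoint events, upgrade the vanishing-at-zero hypothesis to continuity via Cauchy--Schwarz, identify the function as $\delta_\text{dist}^2\,p(1-p)$, verify the identity on simple functions, and extend to $L^2(\mathbb{P})$ by density. The one genuine divergence is the middle step. The paper determines $f(x)=\mathrm{Var}(\mathbb{P}^\xi(U\in[0,x)))$ by a self-contained induction: it partitions $[0,1]$ into $n$ equal-probability cells to get $g(1/n,1/n)=-f(1/n)/(n-1)$, establishes $f(x)=x(1-x)\delta_\text{dist}^2$ at $x=1/2^k$, then at $j/2^k$, and finally everywhere by continuity. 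You instead derive the three-set identity $g(p+q+r)+g(p)+g(q)+g(r)=g(p+q)+g(p+r)+g(q+r)$, observe that the symmetric second difference is independent of the base point, and invoke the Fr\'echet characterization of continuous solutions as quadratics, with the boundary conditions $g(0)=g(1)=0$ pinning down the form. This is slicker and isolates the algebraic content of the symmetry assumption nicely, but note two things: (i) the functional equation only holds on the restricted domain $p+q+r\le 1$, and the Fr\'echet result on a bounded interval is itself standardly proved by exactly the dyadic interpolation the paper performs, so you are essentially outsourcing rather than avoiding that induction --- a precise citation (or a two-line reduction to the dyadic case) would close this; (ii) your $L^2$ extension, which identifies $\phi\mapsto\mathbb{E}^\xi[\phi(D)]$ as a conditional-expectation contraction so that both sides of the identity are $L^2(\mathbb{P})$-continuous, is actually more explicit than the paper's appeal to ``measure-theoretic induction'' and is a small improvement.
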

The assumption that $h(D)$ exists is satisfied for any probability space that includes a continuous random variable. Thus, this is a regularity assumption that makes sure that the probability space is ``rich enough". Let us discuss what this result means for the behaviour of empirical means.  Let $D_1,\ldots,D_n$ be i.i.d.\ drawn from $\mathbb{P}^\xi$. Then, for all square-integrable functions $\psi(D) \in L^2(\mathbb{P})$ marginally across sampling uncertainty and distributional uncertainty we have
\begin{align*}
  \text{Var}_P \left( \frac{1}{n} \sum_{i=1}^n \psi(D_i) \right) & =  \left(\frac{1}{n}  + \delta_\text{dist}^2 - \frac{\delta_\text{dist}^2}{n}\right) \text{Var}_{\mathbb{P}}(\psi(D)) \\
  & = \frac{\delta^2}{n} \text{Var}_{\mathbb{P}}(\psi(D)),
\end{align*}
with $\delta^2 = 1 + n \delta_\text{dist}^2 - \delta_\text{dist}^2$. Since $\mathbb{P}^\xi = \mathbb{Q}(\bullet | \xi)$ we also have $E_P[\frac{1}{n} \sum_{i=1}^n \psi(D_i)] = \mathbb{E}[\psi(D)]$.

There are two major assumptions in this theorem. The first assumption says that two events that have the same probability are perturbed in the same fashion. This can be seen as a symmetry assumption. The second assumption says that events that have a small probability are only perturbed by a small amount. This can be seen as a regularity assumption.

Then, up to a one-dimensional parameter $\delta$, the variance of functions is uniquely determined. This means that using strong symmetry we have reduced the problem of estimating an infinite-dimensional perturbation model to a one-dimensional quantity $\delta$. Note that the statement in Theorem~\ref{theorem:char} is slightly weaker than Lemma~\ref{lemma:random_weight}, since it is only a statement about variances and not about the asymptotic distribution of $\frac{1}{n} \sum_{i=1}^n \psi(D_i)$. 

In practice, some researchers might object to the symmetry assumption in equation~\eqref{eq:symmetry}. It turns out that the perturbation model can be generalized. 
In the following section, we will give a brief outlook of how perturbation models can be used to perturb different parts of a distribution differently.

\subsection{Beyond Isotropic Distributional Perturbations}\label{sec:outlook}

The discussion in Section~\ref{sec:unique} shows that under a strong symmetry assumption, up to an unknown scale factor $\delta$, there exists only one type of perturbation model. However, in practice there might be a situation where one does not expect a perturbation to affect all parts of the distribution in the same way. Consider $D=(X,Y)$. For example, one might expect that the distribution of $X$ is perturbed between settings but that the measurement error is invariant. This may lead one to want to model a situation where $p(x)$ is perturbed but $p(y|x)$ is not perturbed. Under appropriate regularity conditions on $\psi$ we have
\begin{align*}
    \mathbb{E}^\xi[\psi(X, Y)] - \mathbb{E}[\psi(X, Y)] & = \mathbb{E}^\xi[\mathbb{E}[\psi(X, Y)|X]] - \mathbb{E}[\mathbb{E}[\psi(X, Y)|X]] \\& \stackrel{d}{\approx} \mathcal{N}(0, \delta_\text{dist}^2 \text{Var}_{\mathbb{P}}(\mathbb{E}[\psi(X, Y)|X])).
\end{align*}
     If $\delta_\text{dist}$ is known or can be estimated, this allows us to adjust variance and confidence intervals to account for uncertainty both due to sampling and distributional perturbations, similarly as in Section~\ref{sec:calibrated}.

\section{Proofs}\label{sec:proofs}

\subsection{Auxiliary Results and Proof of Lemma~\ref{lemma:random_weight}}\label{sec:asymp_pert_model}

\textbf{Notation}: We write $\mathbb{P}$ for the target distribution and $\mathbb{P}^\xi$ for the randomly perturbed distribution from which we draw $n$ i.i.d.\ data samples $(D_i)_{i=1,\ldots,n}$. 
In both examples $\xi$ can be seen as a random variable that encodes the perturbations. 
The expectation of $f(D_1, \dots, D_n)$ over the joint distribution of $(\xi, D_1, \dots, D_n)$ can be written as $E_{\xi}[\mathbb{E}^\xi(f(D_1, \dots, D_n)]$ where $E_{\xi}$ means we take the expectation over $\xi$ and $\mathbb{E}^\xi$ means that we take the expectation over $(D_1,\ldots,D_n)$, conditionally on $\xi$.

\subsubsection{Auxiliary results}

Let us first state an auxiliary lemma that will turn out helpful for proving Lemma \ref{lemma:random_weight}.

\begin{lemma}\label{lemma:random_pert_1}
Let the assumptions of Lemma~\ref{lemma:random_weight} hold. For the sequence of random variables $\xi =\xi(n)$, for any bounded  $\psi(\bullet)$ we have that
\begin{equation*}
    \mathbb{E}^\xi[\psi(D)] - \mathbb{E}[\psi(D)] \stackrel{d}{=} \gamma_n \sqrt{\text{Var}_{\mathbb{P}}(\psi(D))} \cdot Z + o_p(\gamma_n),
\end{equation*}
where $Z$ follows a standard normal distribution and $\gamma_n^2 = \frac{\text{Var}(W)}{m(n)E[W]^2}$. Here we write $m(n)$ to make it explicit that $m$ grows with $n$. 
\end{lemma}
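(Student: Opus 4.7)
The plan is to write the conditional bias $\mathbb{E}^\xi[\psi(D)] - \mathbb{E}[\psi(D)]$ as a sum of independent mean-zero contributions, one per bin, and then apply a Lindeberg central limit theorem. Let $\mu = E[W]$, $\overline W = \frac{1}{m}\sum_k W_k$, and let $\psi_k = \mathbb{E}[\psi(D)\mid U \in I_k]$ with $\overline\psi = \frac{1}{m}\sum_k \psi_k$. Since $\mathbb{P}(U\in I_k) = 1/m$ we have $\overline \psi = \mathbb{E}[\psi(D)]$, and from the definition of $\mathbb{P}^\xi$,
\begin{equation*}
\mathbb{E}^\xi[\psi(D)] \;=\; \sum_k \frac{W_k/m}{\overline W}\,\psi_k \;=\; \frac{1}{m\overline W}\sum_k W_k\,\psi_k.
\end{equation*}
Subtracting $\mathbb{E}[\psi(D)] = \overline\psi$, using $\sum_k (W_k - \overline W)=0$ and then $\sum_k(\psi_k-\overline\psi)=0$, yields the convenient bilinear form
\begin{equation*}
\mathbb{E}^\xi[\psi(D)] - \mathbb{E}[\psi(D)] \;=\; \frac{1}{m\overline W}\sum_{k=1}^m (W_k-\mu)(\psi_k-\overline\psi).
\end{equation*}

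Next I would decouple the stochastic denominator. Since $W_k$ are i.i.d.\ with finite variance, $\overline W = \mu + O_p(m^{-1/2})$, so $1/\overline W = 1/\mu + O_p(m^{-1/2})$. The numerator sum, being a sum of $m$ independent mean-zero terms of bounded variance, is of order $O_p(m^{1/2})$, hence the product of the $O_p(m^{-1/2})$ perturbation in the denominator with the $O_p(m^{-1/2})$ main term (after dividing by $m$) is $O_p(m^{-1}) = o_p(\gamma_n)$. It therefore suffices to prove
\begin{equation*}
\frac{1}{m\mu}\sum_{k=1}^m (W_k-\mu)(\psi_k-\overline\psi) \;\stackrel{d}{=}\; \mathcal{N}\!\left(0,\,\gamma_n^2\,\mathrm{Var}_\mathbb{P}(\psi(D))\right) + o_p(\gamma_n).
\end{equation*}
Conditionally on nothing (the $\psi_k$ are deterministic), the summands $Y_k = (W_k-\mu)(\psi_k-\overline\psi)$ are independent, mean zero, with variance $(\psi_k-\overline\psi)^2\mathrm{Var}(W)$, so
\begin{equation*}
\mathrm{Var}\!\left(\sum_k Y_k\right) = \mathrm{Var}(W)\sum_k(\psi_k-\overline\psi)^2.
\end{equation*}
Boundedness of $\psi$ yields $|Y_k|\le 2\|\psi\|_\infty|W_k-\mu|$ uniformly in $k$, which together with $E[W^2]<\infty$ makes the Lindeberg condition routine: for any $\varepsilon>0$, $E[Y_k^2\,\mathbf 1\{|Y_k|>\varepsilon\sqrt{\mathrm{Var}\sum Y_k}\}]$ is dominated by a single expectation over $W$ that vanishes since the threshold grows like $\sqrt m$. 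Lindeberg's CLT then gives asymptotic normality with the variance above.

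The remaining piece is to identify $\frac{1}{m}\sum_k (\psi_k-\overline\psi)^2 \to \mathrm{Var}_\mathbb{P}(\psi(D))$. Writing $\mathcal F_m$ for the $\sigma$-algebra generated by the bin indicators, $\psi_k$ is just the value of $\mathbb{E}[\psi(D)\mid\mathcal F_m]$ on $I_k$, so $\frac{1}{m}\sum_k \psi_k^2 = \mathbb{E}\!\left[\mathbb{E}[\psi(D)\mid\mathcal F_m]^2\right]$. Since $\mathcal F_m\nearrow \mathcal B([0,1])$, Doob's martingale convergence theorem (plus bounded convergence for the square) gives $\mathbb{E}[\psi(D)\mid\mathcal F_m]\to\psi(D)$ in $L^2$, hence $\frac{1}{m}\sum_k \psi_k^2\to\mathbb{E}[\psi(D)^2]$, and subtracting $\overline\psi^2=\mathbb{E}[\psi(D)]^2$ yields the claim. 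Combining with the CLT above gives the stated limit with $\gamma_n^2 = \mathrm{Var}(W)/(m\mu^2)$. The only subtle step is this last one, because $\psi\circ h$ is only assumed bounded Borel; invoking the martingale convergence theorem (rather than a continuity argument) is what makes the limit of $\frac{1}{m}\sum_k(\psi_k-\overline\psi)^2$ equal to the full variance $\mathrm{Var}_\mathbb{P}(\psi(D))$ and not merely a coarser quantity.
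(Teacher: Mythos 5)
Your argument is correct and takes essentially the same route as the paper's proof: both reduce the conditional bias to a normalized sum $\sum_k(W_k-E[W])\int_{I_k}\psi\circ h$, dispose of the random denominator $\overline W$ by a Slutsky-type step, verify Lindeberg's condition using boundedness of $\psi$ and dominated convergence, and identify the limiting variance by showing the $L^2$ projection onto the bin indicators converges. The one imprecision is that for a general sequence $m(n)$ the partitions are not nested, so the $\mathcal{F}_m$ do not form a filtration and Doob's martingale convergence theorem does not literally apply; the needed convergence $\frac{1}{m}\sum_k(\psi_k-\overline\psi)^2\to\mathrm{Var}_{\mathbb{P}}(\psi(D))$ nevertheless holds because the conditional expectation onto the equal-width bins is an $L^2$ contraction that converges to the identity (approximate $\psi\circ h$ by step or continuous functions), which is exactly the step-function approximation the paper uses at this point.
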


\begin{proof}
Let $\phi = \psi \circ h$.
Without loss of generality, assume that $\mathbb{E}[\phi(U)] = 0.$ Note that
\begin{equation*}
    \sqrt{m}(\mathbb{E}^\xi[\phi(U)] - \mathbb{E}[\phi(U)]) = \frac{\sqrt{m}\sum_{k=1}^{m}\int_{x \in I_k} \phi(x) dx \cdot (W_k - E[W])}{\sum_{k=1}^{m}W_k/m}.  %
\end{equation*}
Let 
\begin{equation*}
    Y_{m,k} := \sqrt{m}\int_{x \in I_k} \phi(x) dx \cdot (W_k - E[W]).
\end{equation*}
First, note that 
\begin{equation}\label{eq:cltcond1}
    E[Y_{m,k}] = 0
\end{equation}
for all $k$. As the second step, we want to show that 
\begin{equation}\label{eq:cltcond2}
    \sum_{k=1}^{m}E[Y_{m,k}^2] = \text{Var}(W)\cdot m\sum_{k=1}^{m}\left(\int_{x \in I_k} \phi(x) dx\right)^2 \xrightarrow[]{} \text{Var}(W) \cdot \text{Var}_{\mathbb{P}}(\phi(U)).
\end{equation}
For any $f \in L^2([0,1])$, define $\Pi_m(f)$ as
\begin{equation*}
    \Pi_m(f)(x) = \sum_{k=1}^{m} \left(m\int_{x \in I_k} f(x) dx\right) \cdot I(x \in I_k).
\end{equation*}
Then, we have
\begin{align*}
    \Bigg|m\sum_{k=1}^{m}\left(\int_{x \in I_k} \phi(x) dx\right)^2  - \text{Var}_{\mathbb{P}}(\phi(U))\Bigg| &= ||\phi - \Pi_m(\phi)||_2^2 \xrightarrow[]{} 0.
\end{align*}
as $m$ goes to infinity. This is because any bounded function can be approximated by a sequence of step functions of the form $\sum_{k=1}^{m}b_kI(x \in  {I_k})$.
Next we will show that for any $\epsilon > 0$,
\begin{equation}\label{eq:cltcond3}
    g_m(\epsilon) = \sum_{k=1}^{m}E[Y_{m,k}^2 ; |Y_{m,k}| \geq \epsilon] \xrightarrow[]{}0. 
\end{equation}
This is implied by the dominated convergence theorem as
\begin{align*}
&\sum_{k=1}^{m}E[Y_{m,k}^2 ; |Y_{m,k}| \geq \epsilon ] \\
&\quad\quad \leq  \sum_{k=1}^{m}\left(\int_{x\in I_k} \phi^2(x) dx\right) E[(W_k-E[W])^2 I(||\phi||_{\infty} |W_k-E[W]|/\sqrt{m} \geq \epsilon)]\\
    & \quad\quad = ||\phi||_2^2E[(W-E[W])^2 I(||\phi||_{\infty} |W-E[W]|/\sqrt{m} \geq \epsilon)]  \xrightarrow[]{} 0 . 
\end{align*}
Combining equations~\eqref{eq:cltcond1}, \eqref{eq:cltcond2}, and \eqref{eq:cltcond3}, we can apply Lindeberg's CLT.  With Slutsky's theorem, we have
\begin{align*}
    \sqrt{m}(\mathbb{E}^{\xi}[\phi(U)] - \mathbb{E}[\phi(U)]) =\frac{\sum_{k=1}^m Y_{m,k}}{\sum_{k=1}^m W_k/m} \stackrel{d}{\xrightarrow[]{}} \mathcal{N}(0, \text{Var}(W)\text{Var}_{\mathbb{P}}(\phi(U))/ E[W]^2). 
\end{align*}
This completes the proof. 
\end{proof}

\begin{lemma}
\label{lemma:random_pert_2}
Let the assumptions of Lemma~\ref{lemma:random_weight} hold. Assume that for a sequence of random variables $\xi = \xi(n)$ there exists a sequence $\gamma_n$ with limit $\delta^2 = \lim_n(1+n\gamma_{n}^2) < \infty$ such that for any bounded $\psi(\bullet)$ we have
\begin{equation}\label{eq:assump-rand}
    \mathbb{E}^\xi[\psi(D)] - \mathbb{E}[\psi(D)] \stackrel{d}{=} \gamma_n \sqrt{\text{Var}_{\mathbb{P}}(\psi(D))} \cdot Z + o_p(\gamma_n),
\end{equation}
where $Z$ follows a standard normal distribution.
Then, for any bounded $\psi(\bullet)$, it holds that 
 \begin{equation*}
    \frac{1}{\sqrt{n}}\sum_{i=1}^{n} (\psi(D_i^n) - \mathbb{E}[\psi(D)]) \stackrel{d}{\xrightarrow[]{}} \mathcal{N}(0, \delta^2 \text{Var}_{\mathbb{P}}(\psi(D))).
\end{equation*}
\end{lemma}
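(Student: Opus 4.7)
My plan is to split the sum into conditional sampling noise and distributional noise, treat each asymptotically (the first via a conditional CLT that is uniform in $\xi$, the second via the hypothesis), and then glue them together through a characteristic-function calculation. Concretely, I write
\begin{equation*}
\frac{1}{\sqrt{n}}\sum_{i=1}^n \psi(D_i^n) - \sqrt{n}\,\mathbb{E}[\psi(D)] = A_n + B_n,
\end{equation*}
where $A_n = \frac{1}{\sqrt{n}}\sum_{i=1}^n\bigl(\psi(D_i^n) - \mathbb{E}^\xi[\psi(D)]\bigr)$ and $B_n = \sqrt{n}\bigl(\mathbb{E}^\xi[\psi(D)] - \mathbb{E}[\psi(D)]\bigr)$. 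The asymptotic variances of the two pieces will add up to $\delta^2\,\text{Var}_{\mathbb{P}}(\psi(D))$, since $n\gamma_n^2 \to \delta^2 - 1$ while the conditional i.i.d.\ sampling contributes an additional factor of $1$.

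For $B_n$, assumption \eqref{eq:assump-rand} together with $\sqrt{n}\gamma_n \to \sqrt{\delta^2 - 1}$ gives $B_n \stackrel{d}{\to} \mathcal{N}(0, (\delta^2-1)\,\text{Var}_{\mathbb{P}}(\psi(D)))$ (the error is $o_p(\sqrt{n}\gamma_n) = o_p(1)$). For $A_n$, I apply \eqref{eq:assump-rand} to the bounded function $\psi^2$ and use $\gamma_n\to 0$ to get $\mathbb{E}^\xi[\psi(D)^2] - \mathbb{E}[\psi(D)^2] = o_p(1)$; the same reasoning for $\psi$ itself yields $\mathbb{E}^\xi[\psi(D)] - \mathbb{E}[\psi(D)] = o_p(1)$, so $\text{Var}^\xi(\psi(D)) \stackrel{p}{\to} \text{Var}_{\mathbb{P}}(\psi(D))$. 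Conditionally on $\xi$, the centred variables $\psi(D_i^n) - \mathbb{E}^\xi[\psi(D)]$ are i.i.d.\ with mean $0$, variance $\text{Var}^\xi(\psi(D))$, and are bounded by $2\|\psi\|_\infty$. A standard Taylor bound on the conditional characteristic function combined with the inequality $|a^n - b^n| \le n|a - b|$ for $|a|,|b| \le 1$ yields
\begin{equation*}
E\bigl[e^{itA_n}\mid\xi\bigr] = \exp\!\bigl(-\tfrac{t^2}{2}\,\text{Var}^\xi(\psi(D))\bigr) + r_n(\xi,t), \qquad \sup_\xi |r_n(\xi,t)| \le C(t)/\sqrt{n},
\end{equation*}
where the uniformity of the remainder uses boundedness of $\psi$ to control the conditional third absolute moment by a deterministic constant. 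Combined with $\text{Var}^\xi(\psi(D)) \stackrel{p}{\to} \text{Var}_{\mathbb{P}}(\psi(D))$, this shows $E[e^{itA_n}\mid\xi] \stackrel{p}{\to} \exp(-\tfrac{t^2}{2}\,\text{Var}_{\mathbb{P}}(\psi(D)))$.

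For the joint limit, I condition on $\xi$ (noting $B_n$ is $\xi$-measurable) and write $E[e^{it(A_n+B_n)}] = E\bigl[e^{itB_n}\, E[e^{itA_n}\mid\xi]\bigr]$. Abbreviating $a_n = E[e^{itA_n}\mid\xi]$ and $a = \exp(-\tfrac{t^2}{2}\,\text{Var}_{\mathbb{P}}(\psi(D)))$, one has
\begin{equation*}
E\bigl[e^{itB_n} a_n\bigr] - a\,E\bigl[e^{itB_n}\bigr] = E\bigl[(a_n - a)\, e^{itB_n}\bigr],
\end{equation*}
whose modulus is at most $E|a_n - a|$, which vanishes by bounded convergence ($|a_n-a|\le 2$ and $a_n \to a$ in probability). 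The remaining factor $E[e^{itB_n}]$ converges by the result on $B_n$ to $\exp(-\tfrac{t^2}{2}(\delta^2-1)\,\text{Var}_{\mathbb{P}}(\psi(D)))$. The product equals $\exp(-\tfrac{t^2}{2}\,\delta^2\,\text{Var}_{\mathbb{P}}(\psi(D)))$, the characteristic function of $\mathcal{N}(0, \delta^2\,\text{Var}_{\mathbb{P}}(\psi(D)))$, and L\'evy's continuity theorem concludes. The main obstacle is the uniform-in-$\xi$ conditional CLT for $A_n$; boundedness of $\psi$ is what keeps this step clean, and dropping boundedness would force a Berry--Esseen argument with careful control of the third conditional moment under the random perturbation.
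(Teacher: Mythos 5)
Your proof is correct, and it follows the same conceptual skeleton as the paper's: decompose into the conditional sampling fluctuation $A_n$ and the $\xi$-measurable distributional term $B_n$, establish a conditional CLT for $A_n$ that is uniform enough in $\xi$, handle $B_n$ directly from hypothesis \eqref{eq:assump-rand} with $n\gamma_n^2 \to \delta^2-1$, and glue. The difference is in the machinery. The paper works with conditional CDFs: it applies Berry--Esseen to the standardized conditional sum, uses \eqref{eq:assump-rand} for $\psi^2$ and $\psi^3$ to show the ratio $\mathbb{E}^\xi|\psi(D)|^3/(\mathbb{E}^\xi[\psi(D)^2])^{3/2}$ converges in probability, and then passes to the limit via dominated convergence and the Portmanteau lemma applied to $\Phi(Y_n)$ where $Y_n$ absorbs both the shift from $B_n$ and the random rescaling by $\mathrm{Var}_{\mathbb{P}^\xi}(\psi(D))$. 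You instead work with conditional characteristic functions, where boundedness of $\psi$ gives a deterministic third-moment bound and hence a remainder $O(1/\sqrt{n})$ uniformly in $\xi$ without any normalization by the conditional variance; the independence of the limiting factors then falls out of the product $E[e^{itB_n}\,E[e^{itA_n}\mid\xi]]$ and bounded convergence. Your route is slightly cleaner in two respects: it avoids the ratio of conditional moments (which in the paper's argument implicitly requires $\mathbb{E}[\psi(D)^2]>0$ and a finite third-moment ratio), and it degrades gracefully in the degenerate case $\mathrm{Var}_{\mathbb{P}}(\psi(D))=0$. The paper's CDF formulation, on the other hand, is the one that extends more transparently to the truncation argument used later in the proof of Lemma~\ref{lemma:random_weight}, since it delivers a statement about distribution functions directly. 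Both arguments rely on the same two inputs from the hypothesis: $\gamma_n \to 0$ forces $\mathrm{Var}_{\mathbb{P}^\xi}(\psi(D)) \to \mathrm{Var}_{\mathbb{P}}(\psi(D))$ in probability, and $\sqrt{n}\gamma_n \to \sqrt{\delta^2-1}$ fixes the variance of the distributional component.
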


\begin{proof}
In the proof, we suppress the dependence of $\xi$ on $n$. We want to show that for any $x$, 
\begin{equation*}
    E_{\xi}\Bigg[\mathbb{P}^{\xi} \left( \frac{1}{\sqrt{n}}\sum_{i=1}^{n} (\psi(D_i^n) - \mathbb{E}[\psi(D)]) \leq x \cdot \sqrt{\delta^2\text{Var}_{\mathbb{P}}(\psi(D))}\right) \Bigg] = \Phi(x) +o(1),
\end{equation*}
where $\Phi$ is the cdf of a standard Gaussian random variable. Let us define
\begin{equation*}
    Y_{n} = x\cdot\delta\cdot \frac{\sqrt{\text{Var}_{\mathbb{P}}(\psi(D))}}{\sqrt{\text{Var}_{\mathbb{P}^{\xi}}(\psi(D))}}   - \frac{\sqrt{n}(\mathbb{E}^{\xi}[\psi(D)] - \mathbb{E}[\psi(D)])}{ \sqrt{\text{Var}_{\mathbb{P}^{\xi}}(\psi(D))}},
\end{equation*}
where $\text{Var}_{\mathbb{P}^{\xi}}(\psi(D))$ denotes the variance of $\psi(D)$ where $D \sim \mathbb{P}^{\xi}$. Then,
\begin{equation*}
    E_{\xi}\Bigg[\mathbb{P}^{\xi} \left( \frac{1}{\sqrt{n}}\sum_{i=1}^{n} (\psi(D_i^n) - \mathbb{E}^{\xi}[\psi(D)]) + \sqrt{n}(\mathbb{E}^{\xi}[\psi(D)] - \mathbb{E}[\psi(D)]) \leq x \cdot \sqrt{\delta^2\text{Var}_{\mathbb{P}}(\psi(D))}\right) \Bigg]
\end{equation*}
\begin{equation}\label{eq:ast}
    = E_{\xi}\Bigg[\mathbb{P}^{\xi} \left( \frac{1}{\sqrt{n}}\sum_{i=1}^{n} \frac{\psi(D_i^n) - \mathbb{E}^{\xi}[\psi(D)]}{\sqrt{\text{Var}_{\mathbb{P}^{\xi}}(\psi(D))}} \leq Y_{n} \right) \Bigg].
\end{equation}
\quad\\
We define $g_{n}(y; \xi)$ as
\begin{equation*}
    g_{n}(y ; \xi) =  \mathbb{P}^{\xi} \left( \frac{1}{\sqrt{n}}\sum_{i=1}^{n} \frac{\psi(D_i^n ) - \mathbb{E}^{\xi}[\psi(D)]}{\sqrt{\text{Var}_{\mathbb{P}^{\xi}}(\psi(D))}}\leq y \right).
\end{equation*}
By Berry–Esseen, it holds that
\begin{equation*}
    \sup_y \Bigg|\mathbb{P}^{\xi} \left(
    \frac{1}{\sqrt{n}} \sum_{i=1}^{n}\frac{ \psi(D_i)- \mathbb{E}^{\xi}[\psi(D)]}{\sqrt{\text{Var}_{\mathbb{P}^{\xi}}(\psi(D))}}  \leq y\right) - \Phi(y)\Bigg|  \leq \frac{C\mathbb{E}^{\xi}|{\psi(D)}^3|}{(\mathbb{E}^{\xi}|{\psi(D)}^2|)^{3/2} \sqrt{n}},
\end{equation*}
for all $n$. Invoking equation~\eqref{eq:assump-rand} for $\psi^2$ and $\psi^3$, we have that $\mathbb{E}^{\xi}|{\psi(D)}^3|/(\mathbb{E}^{\xi}|{\psi(D)}^2|)^{3/2}$ converges in probability to $\mathbb{E}|{\psi(D)}^3|/(\mathbb{E}|{\psi(D)}^2|)^{3/2} < \infty$ as $n \rightarrow \infty$. Then the right-hand side of the above inequality converges in probability to 0 as $n \xrightarrow[]{} \infty$, which implies that
    \begin{equation*}
    \sup_y |g_{n}(y; \xi) - \Phi(y)| \xrightarrow[]{p} 0.
\end{equation*}
Using this result, 
\begin{align*}
    \eqref{eq:ast}  = E_{\xi}[g_{n}(Y_{n})] & = 
     E_{\xi}[g_{n}(Y_{n})] - E_{\xi}[\Phi(Y_{n})] + E_{\xi}[\Phi(Y_{n})] \\
     & \leq E_{\xi}[\sup_y|g_{n}(y) - \Phi(y)|]+ E_{\xi}[\Phi(Y_{n})] \\ 
     & = E_{\xi}[\Phi(Y_{n})] + o(1).
\end{align*}
Here, we used the dominated convergence theorem. Using equation~\eqref{eq:assump-rand}, $\text{Var}_{\mathbb{P}^{\xi}}(\psi(D)) \xrightarrow[]{p} \text{Var}_{\mathbb{P}}(\psi(D))$. Then, we have
$$Y_{n} \xrightarrow[]{d}  \delta x - \sqrt{\delta^2-1} Z,$$
where $Z$ is a standard Gaussian random variable. Since $\Phi$ is bounded and continuous, by Portmanteau Lemma, we get
\begin{equation*}
    \lim_{n \xrightarrow[]{} \infty} E_{\xi}[\Phi(Y_{n})] = E[\Phi(\delta x - \sqrt{\delta^2-1} Z)] = \Phi(x).
\end{equation*}
This completes the proof.
\end{proof}

\subsubsection{Proof of Lemma~\ref{lemma:random_weight}}

Now let us show that the Lemma~\ref{lemma:random_weight} holds.

\begin{proof}
Without loss of generality for notational simplicity we restrict ourselves to the case $l =1$, i.e. $\psi: \mathcal{D} \mapsto \mathbb{R}$. As before we write $\phi(U) = \psi \circ h (U)$.
For any $\psi \in L^2(\mathbb{P})$ and for any $\epsilon >0$, there exits a bounded function $\psi^B$ such that $\mathbb{E}[\psi(D)] = \mathbb{E}[\psi^B(D)]$ and $||\psi - \psi^B||_{L^2(\mathbb{P})} < \epsilon$. Note that
\begin{align*}
     \frac{1}{\sqrt{n}}\sum_{i=1}^{n}(\psi(D_i^n) - \mathbb{E}[\psi(D)]) &=   \frac{1}{\sqrt{n}}\sum_{i=1}^{n}(\psi(D_i^n) - \psi^B(D_i^n)) \label{eq:eq1}\tag{a} \\
    &  + \frac{1}{\sqrt{n}}\sum_{i=1}^{n}(\psi^B(D_i^n) - \mathbb{E}[\psi^B(D)]). \label{eq:eq2}\tag{b}
\end{align*}
Without loss of generality, let's assume that $\mathbb{E}[\psi(D)] = 0$. Note that 
\begin{align*}
    \eqref{eq:eq1} =\frac{1}{\sqrt{n}}\sum_{i=1}^{n}\left( (\psi-\psi^B)(D_i^n) - \mathbb{E}^{\xi}[(\psi - \psi^B)(D)]\right)
    + \sqrt{n}\mathbb{E}^{\xi}[(\psi - \psi^B)(D)].
\end{align*}
The marginal variance of its first part is
\begin{align*}
    E[\text{Var}(\frac{1}{\sqrt{n}}\sum_{i=1}^{n}\left( (\psi-\psi^B)(D_i^n) - \mathbb{E}^{\xi}[(\psi - \psi^B)(D)]\right)|\xi)] &\leq E_{\xi}[\mathbb{E}^{\xi}[(\psi-\psi^B)^2(D)]] \\ &= \mathbb{E}[(\psi-\psi^B)^2(D)] <\epsilon^2.
\end{align*}
Let's look at the second part of \eqref{eq:eq1}. Recall that we write $\phi(U) = \psi \circ h (U)$. Note that for any $\phi \in L^2([0,1])$ such that $\mathbb{E}[\phi(U)] = 0$, 
\begin{align*}
   \sqrt{m}(\mathbb{E}^{\xi}[\phi(U)]) &= \frac{\sqrt{m}\sum_{k=1}^{m}\int_{x \in I_k} \phi(x) dx \cdot (W_k - E[W])}{\sum_{k=1}^{m}W_k/m}.
\end{align*}
With $\phi = (\psi - \psi^B)\circ h$, the variance of the numerator is bounded as
\begin{align*}
    \text{Var}(W) \sum_{k=1}^{m}m\left(\int_{x \in I_k}\phi(x)dx\right)^2 
    & \leq \text{Var}(W) \sum_{k=1}^{m}\int_{x \in I_k}\phi^2(x)dx \\
    & = \text{Var}(W) \mathbb{E}[\phi^2(U)] \\
    & < \text{Var}(W) \cdot \epsilon^2,
\end{align*}
where the first inequality holds by Jensen's inequality with $m \int_{x\in I_k} dx = 1$. Therefore, 
\begin{equation*}
\sqrt{n}\mathbb{E}^{\xi}[(\psi - \psi^B)(D)] = \sqrt{r} \cdot \frac{\sqrt{m}\sum_{k=1}^{m}\int_{x \in I_k} \phi(x) dx \cdot (W_k - E[W])}{E[W]} + s_n
\end{equation*}
where $s_n$ is $o_p(1)$. Combining results, we have that $E[\eqref{eq:eq1} - s_n]=0$ and $$\text{Var}_P(\eqref{eq:eq1} - s_n) \leq C  \cdot \epsilon^2$$ for some constant $C$. Now we want to show that for any $x$, 
\begin{equation*}
    \lim_{n \xrightarrow[]{}\infty}P\left(\frac{1}{\sqrt{n}}\sum_{i=1}^{n}\frac{(\psi(D_i^n) - \mathbb{E}[\psi(D)])}{\delta\sqrt{\text{Var}_{\mathbb{P}}(\psi(D))}} \leq x\right) = \Phi(x)
\end{equation*}
where $\Phi(x)$ is the cdf of a standard Gaussian random variable. Note that
\begin{align*}
    & P\left(\frac{1}{\sqrt{n}}\sum_{i=1}^{n}\frac{(\psi(D_i^n) - \mathbb{E}[\psi(D)])}{\delta\sqrt{\text{Var}_{\mathbb{P}}(\psi(D))}} \leq x\right)\\
    & \leq P\left(\frac{\eqref{eq:eq2}}{\delta\sqrt{\text{Var}_{\mathbb{P}}(\psi(D))}} \leq x + 2\eta\right)+ P \left( \frac{|\eqref{eq:eq1}|}{\delta\sqrt{\text{Var}_{\mathbb{P}}(\psi(D))}} > 2\eta\right) \\
    & \leq  P\left(\frac{\eqref{eq:eq2}}{\delta\sqrt{\text{Var}_{\mathbb{P}}(\psi(D))}} \leq x + 2\eta\right)+ P \left( \frac{|\eqref{eq:eq1} - s_n|}{\delta\sqrt{\text{Var}_{\mathbb{P}}(\psi(D))}} > \eta\right) + P \left( \frac{|s_n|}{\delta\sqrt{\text{Var}_{\mathbb{P}}(\psi(D))}} > \eta\right) \\
    & \leq P\left(\frac{\eqref{eq:eq2}}{\delta\sqrt{\text{Var}_{\mathbb{P}}(\psi(D))}} \leq x + 2\eta\right) + \frac{C \cdot \epsilon^2}{\eta^2 \delta^2 \text{Var}_{\mathbb{P}}(\psi(D))} + P \left( \frac{|s_n|}{\delta\sqrt{\text{Var}_{\mathbb{P}}(\psi(D))}} > \eta\right),
\end{align*}
where the last inequality holds by Chebyshev's inequality. With Lemma \ref{lemma:random_pert_1} and Lemma \ref{lemma:random_pert_2}, we have that \begin{align*}
    \eqref{eq:eq2} & \xrightarrow[]{d} \delta N(0, \text{Var}_{\mathbb{P}}(\psi^B(D))).
\end{align*}
Note that
\begin{equation*}
    \left(\sqrt{\text{Var}_{\mathbb{P}}(\psi(D))} - \sqrt{\text{Var}_{\mathbb{P}}(\psi^B(D))}\right)^2 \leq \mathbb{E}[(\psi - \psi^B)^2(D)] \leq \epsilon^2,
\end{equation*}
and thus
\begin{equation*}
    1 - \epsilon \cdot \frac{1}{\sqrt{\text{Var}_{\mathbb{P}}(\psi^B(D))}} \leq \frac{\sqrt{\text{Var}_{\mathbb{P}}(\psi(D))}}{\sqrt{\text{Var}_{\mathbb{P}}(\psi^B(D))}} \leq 1 + \epsilon \cdot \frac{1}{\sqrt{\text{Var}_{\mathbb{P}}(\psi^B(D))}}.
\end{equation*}
Then, we get that
\begin{align*}
     &  \lim\sup_{n\xrightarrow[]{}\infty} P\left(\frac{1}{\sqrt{n}}\sum_{i=1}^{n}\frac{(\psi(D_i^n) - \mathbb{E}[\psi(D)])}{\delta\sqrt{\text{Var}_{\mathbb{P}}(\psi(D))}} \leq x\right) - \Phi(x)\\
     & \quad \leq \Phi \left(\left(1 + \epsilon \cdot \frac{1}{\sqrt{\text{Var}_{\mathbb{P}}(\psi^B(D))}}\right)(x + 2\eta)\right) - \Phi(x) + \frac{C \cdot \epsilon^2}{\eta^2 \delta^2 \text{Var}_{\mathbb{P}}(\psi(D))}.
\end{align*}
Similarly, we can show that 
\begin{align*}
    &\lim\inf_{n \xrightarrow[]{}\infty}P\left(\frac{1}{\sqrt{n}}\sum_{i=1}^{n}\frac{(\psi(D_i^n) - \mathbb{E}[\psi(D)])}{\delta\sqrt{\text{Var}_{\mathbb{P}}(\psi(D)}} \leq x\right) - \Phi(x)\\ 
    & \quad \geq \Phi \left(\left(1 - \epsilon \cdot \frac{1}{\sqrt{\text{Var}_{\mathbb{P}}(\psi^B(D))}}\right)(x - 2\eta)\right) - \Phi(x) - \frac{C \cdot \epsilon^2}{\eta^2 \delta^2 \text{Var}_{\mathbb{P}}(\psi(D))}.
\end{align*}
Note that results hold for arbitrary $\eta>0$ and $\epsilon > 0$. Let $\eta = \sqrt{\epsilon}$.
Then for any $x$, 
\begin{equation*}
    \lim_{n \xrightarrow[]{}\infty}P\left(\frac{1}{\sqrt{n}}\sum_{i=1}^{n}\frac{(\psi(D_i^n) - \mathbb{E}[\psi(D)])}{\delta\sqrt{\text{Var}_{\mathbb{P}}(\psi(D))}} \leq x\right) - \Phi(x) = 0.
\end{equation*}
This completes the proof. 
\end{proof}

\subsection{Examples of Variance Inflation Induced by Non-i.i.d.\ sampling }\label{sec:examplesofmodels}

In the following examples we discuss how Assumption~\ref{assump:a1} with $\delta \neq 1$ arises in non-standard sampling settings.  For simplicity, we start with an artificial example: sampling with replacement from an unknown subpopulation. 

\begin{example}[Sampling with replacement from an unknown subpopulation] \label{example:samp}

    Assume that $D_1', \dots, D_m'$ drawn i.i.d.\ from $\mathbb{P}$. Set $\xi = (D_1',\ldots,D_m')$. We define the randomly perturbed distribution $\mathbb{P}^\xi$ as the empirical measure 
    \begin{equation*}
        \mathbb{P}^\xi(D \in \bullet) = \frac{1}{m}\sum_{i=1}^{m}1_{D_i' \in \bullet}.
    \end{equation*}
    Let $n \rightarrow \infty$ and assume that $m(n)$ is a sequence of integers such that $\frac{n}{m(n)}$ converges to some limit $r \in (0,\infty)$. Conditionally on $\xi$, let $ (D_1^n,\ldots,D_n^n)$ be i.i.d.\ draws from $\mathbb{P}^\xi$. Then equation~\eqref{eq:normality} holds for any $\psi(\bullet)$ with finite second moment with
    \begin{equation*}
     \delta^2 = 1 + r.
    \end{equation*}
    \end{example}

\begin{proof}
Suppose that $D_1', \dots, D_m'$  are drawn from $\mathbb{P}$ for some sequence $m = m(n)$. Let $\mathbb{P}^\xi$ denote the empirical measure of $D_1', \dots, D_m'$. Then by the CLT, for any $\psi(\bullet)$ with finite second moment,
\begin{align*}
    \mathbb{E}^\xi[\psi(D)] - \mathbb{E}[\psi(D)] & = \frac{1}{m}\sum_{i=1}^{m}\psi(D_i') -\mathbb{E}[\psi(D)] \\ & \stackrel{d}{=} \gamma_n \sqrt{\text{Var}_{\mathbb{P}}(\psi(D))} \cdot Z  + o_p(\gamma_n)
\end{align*}
where $Z$ follows a standard normal distribution and $\gamma_n^2 = 1/m(n)$. By applying Lemma \ref{lemma:random_pert_2} and following the proof of Lemma \ref{lemma:random_weight}, for any $\psi(\bullet)$ with finite second moment, we get
  \begin{equation*}
        \frac{1}{\sqrt{n}}\sum_{i=1}^{n} (\psi(D_i^n) - \mathbb{E}[\psi(D)] )\stackrel{d}{\xrightarrow[]{}} \mathcal{N}(0,  \delta^2 \text{Var}_{\mathbb{P}}(\psi(D))),
    \end{equation*}
where $\delta^2 = 1 + r$.
\end{proof}    

Sampling with replacement from a finite population might seem a bit artificial. The next example shows that a similar conclusion holds if we sample clusters, where units in a single cluster are highly correlated, and units between clusters are independent. If the cluster structure is known, one can use clustered standard errors. However, in general the dependence structure might be unknown. 
\begin{example}[Sampling clusters with unobserved membership] \label{example:clusters}
Here, we consider a setting where some observations are associated, but where the overall dependence structure is unknown. This is similar to the previous setting, but there are no ties in the data set. Consider $\mathbb{P}$ a probability distribution with positive density over a compact subset of $\mathbb{R}^p$.
Draw i.i.d.\ observations $D_1',\ldots,D_{m}'$ from $\mathbb{P}$. Set $\xi = (D_1',\ldots,D_{m}')$. Conditionally on $\xi$, let $ (D_1^n,\ldots,D_n^n)$ be i.i.d.\ draws from $\mathbb{P}^\xi$, where
\begin{equation*}
  \mathbb{P}^\xi(D \in \bullet) = \frac{1}{m} \sum_{i=1}^{m} \mathbb{P}( D \in \bullet | \| D_i' - D \|_2 \le \epsilon_n),
\end{equation*} 
 where $\epsilon_n > 0$ is a deterministic sequence with $\epsilon_n = o(1/\sqrt{n})$.  Furthermore, let $n \rightarrow \infty$ and assume that $m = m(n)$ is a sequence of integers such that $\frac{n}{m(n)}$ converges to some limit $r \in (0,\infty)$.  Then, equation~\eqref{eq:normality} holds for any bounded Lipschitz continuous $\psi(\bullet)$ with
 \begin{equation*}
     \delta^2 = 1 + r.
 \end{equation*}
\end{example}   

\begin{proof}
Using Lipschitz continuity and $\epsilon_n = o(1/\sqrt{n})$, we have
\begin{equation*}
\frac{1}{\sqrt{n}} \sum_{i=1}^n (\psi(D_i^n) - \mathbb{E}[\psi(D)]) = \frac{1}{\sqrt{n}} \sum_{i=1}^{n}  (\psi(D_i'') -  \mathbb{E}[\psi(D)]) + o_p(1),
\end{equation*}
where the $D_i''$ are drawn with replacement from $D_1',\ldots,D_m'$. We can now invoke Example~\ref{example:samp}.
\end{proof}

\subsection{Proof of Proposition \ref{prop:cons_var}}

\begin{proof}
Note that
\begin{align*}
    & \widehat{\text{Var}}_{\mathbb{P}}(\phi^k(D))
 = \underbrace{\frac{1}{n}\sum_{i=1}^{n} \left(\hat\phi^k(D_i) -\phi^k(D_i) -  \frac{1}{n}\sum_{i=1}^{n}\hat\phi^k(D_i) + \frac{1}{n}\sum_{i=1}^{n}\phi^k(D_i) \right)^2 }_{(i)}
    \\
    & \quad  +  \underbrace{\frac{2}{n}\sum_{i=1}^{n}\left(\hat\phi^k(D_i) -\phi^k(D_i) -  \frac{1}{n}\sum_{i=1}^{n}\hat\phi^k(D_i) + \frac{1}{n}\sum_{i=1}^{n}\phi^k(D_i) \right)\left(\phi^k(D_i) - \frac{1}{n}\sum_{i=1}^{n}\phi^k(D_i)\right)}_{(ii)} \\
    & \quad  + \underbrace{\frac{1}{n}\sum_{i=1}^{n}\left(\phi^k(D_i)- \frac{1}{n}\sum_{i=1}^{n}\phi^k(D_i)\right)^2}_{(iii)}.
\end{align*} 
As the $\phi^k$ has finite fourth moments, we can use Lemma~\ref{lemma:random_weight} to obtain (iii) $= \text{Var}_{\mathbb{P}}(\phi^k(D)) + o_p(1)$. Then by Cauchy-Schwartz inequality and Jensen's inequality,
\begin{align*}
    \text{(i)} &\leq \frac{2}{n}\sum_{i=1}^{n}\left(\hat\phi^k(D_i) -\phi^k(D_i)\right)^2 + 2\left(\frac{1}{n}\sum_{i=1}^{n}\left(\hat\phi^k(D_i) -  \phi^k(D_i)\right)\right)^2 \\
    & \leq \frac{4}{n}\sum_{i=1}^{n}\left(\hat\phi^k(D_i) -\phi^k(D_i)\right)^2.
\end{align*}
Since our influence function estimators are consistent, (i) $= o_p(1)$. Then again by Cauchy-Schwartz inequality, (ii) $= o_p(1)$. Combining results, we get
$$
\widehat{\text{Var}}_{\mathbb{P}}(\phi^k(D)) = \text{Var}_{\mathbb{P}}(\phi^k(D)) + o_p(1).
$$
This completes the proof.
\end{proof}

\subsection{Proof of Theorem~\ref{theorem:newci}}

\begin{proof}
By Assumption \ref{assump:a1}, \ref{ass:agreement} and Lemma~\ref{lemma:random_weight},
\begin{equation*}
 \begin{pmatrix}
   \sqrt{n}(\hat \theta^1 - \theta^0)\\
    \vdots \\
    \sqrt{n}(\hat \theta^K - \theta^0)
    \end{pmatrix} =    \frac{1}{\sqrt{n}}\sum_{i=1}^{n} \begin{pmatrix}
   \phi^1(D_i) \\
    \vdots \\
    \phi^K(D_i)
    \end{pmatrix} + o_p(1) = \delta\textbf{Z} + o_p(1),
\end{equation*}
where $\textbf{Z} = (Z_1, \dots, Z_K)^\intercal \sim \mathcal{N}(\mathbf{0},\text{diag}(\text{Var}(\phi^1), \dots, \text{Var}(\phi^K)))$. As $n\xrightarrow[]{}\infty$,  using that $\sum  \alpha_k = 1$ and $\hat{\alpha}_k = \alpha_k + o_p(1)$,
\begin{equation}\label{eq:asympestim}
    \sqrt{n}(\hat{\theta}^W - \theta^0) =  \sqrt{n} \sum_{k=1}^K \hat \alpha_k (\hat \theta^k - \theta^0) + o_p(1) = \delta \sum_{k=1}^K \alpha_k Z_k +o_p(1) \stackrel{d}{\xrightarrow[]{}} \delta \mathcal{N}(0, \alpha),
\end{equation}
where $\alpha = \frac{1}{\sum_{k=1}^K \frac{1}{\text{Var}_{\mathbb{P}}(\phi^k(D))}}$. 
By a similar calculation, we have that
\begin{equation*}
    n\hat \sigma^2_{bet} = \delta^2 \sum_{k=1}^K \alpha_k ( Z_k - \sum_{j=1}^K \alpha_j Z_j )^2 + o_p(1).
\end{equation*}
Thus,
\begin{equation}\label{eq:asympsbet}
    n\hat \sigma^2_{bet} = \delta^2 \alpha L_K + o_p(1),
\end{equation}
where
\begin{equation*}
    L_K = \frac{1}{\alpha} \sum_{k=1}^K \alpha_k ( Z_k - \sum_{j=1}^K \alpha_j Z_j )^2.
\end{equation*}
We will now show that
\begin{equation}\label{eq:todo}
    L_K \sim \chi^2(K-1) \quad \perp \quad  \frac{1}{\sqrt{\alpha}}\sum_{j=1}^{K}\alpha_j Z_j \sim \mathcal{N}(0, 1).
\end{equation}
First, note that
\begin{equation*}
  L_K   = \sum_{k=1}^{K}\left(Z_k \frac{\sqrt{\alpha_k}}{\sqrt{\alpha}} - \frac{\sqrt{\alpha_k}}{\sqrt{\alpha}} \sum_{j=1}^{K}\alpha_jZ_j\right)^2.
\end{equation*}
With this definition,
\begin{align*}
    L_K 
    &=   \sum_{k=1}^{K} \left( \frac{\sqrt{\alpha_k}}{\sqrt{\alpha}} Z_k - \frac{\sqrt{\alpha_k}}{\sqrt{\alpha}} \sum_{j=1}^{K}\alpha_j Z_j\right)^2 \\
    &= \sum_{k=1}^{K} \left( \frac{\sqrt{\alpha_k}}{\sqrt{\alpha}} Z_k - \sqrt{\alpha_k} \sum_{j=1}^{K} \sqrt{\alpha_j} \frac{\sqrt{\alpha_j}}{\sqrt{\alpha}} Z_j\right)^2 \\
    &=  \sum_{k=1}^K (\tilde Z_k - w_k \sum_{j=1}^K w_j \tilde Z_j )^2,
\end{align*}
where $ \tilde Z_k :=  Z_k \sqrt{\alpha_k} /\sqrt{\alpha} =  Z_k/\sqrt{\text{Var}_\mathbb{P}(\phi^k)}$ are i.i.d.\ standard normal and $  w_k := \sqrt{\alpha_k}$. Please note that $\sum_k w_k^2 = 1$. Thus, we can write this equation
\begin{equation*}
    L_K = \| \tilde Z - w ( w \cdot \tilde Z ) \|_2^2 = \| (\text{Id} - \Pi ) \tilde Z \|_2^2,
\end{equation*}
where $\Pi$ projects on the one-dimensional subspace spanned by $w$. 
Let $b_1,\ldots,b_{K-1}$ be an orthonormal basis of the span of $\Pi$. Then, by rotational invariance of the $\ell_2$ norm,
\begin{equation*}
  L_K =  \| (\text{Id} - \Pi ) \tilde Z \|_2^2 = \sum_{k=1}^{K-1}  (b_k \cdot \tilde Z)^2.
\end{equation*}
Furthermore, since the $b_k$ are orthogonal to each other $b_k \cdot \tilde Z$ are independent standard Gaussians. Thus, $L_K$ follows a $\chi^2(K-1)$ distribution. Furthermore, since the $b_k$ are orthogonal to $w$,  $L_K$ is independent of
\begin{equation*}
    \sum_{k=1}^K w_k \tilde Z_k.
\end{equation*}
Furthermore, by definition
\begin{equation*}
  \frac{1}{\sqrt{\alpha}}  \sum_{k=1}^K \alpha_k Z_k = \sum_{k=1}^K w_k \tilde Z_k \sim \mathcal{N}(0,1)
\end{equation*}
and thus $L_K$ is independent of $ \frac{1}{\sqrt{\alpha}} \sum_{k=1}^K \alpha_k Z_k$.
Therefore, \eqref{eq:todo} holds. Using \eqref{eq:todo} with \eqref{eq:asympsbet} and \eqref{eq:asympestim}, we get
\begin{equation*}
    \frac{\hat{\theta}^W - \theta^0}{\hat \sigma_{bet}/\sqrt{K-1}} = \frac{\delta \sum_{k=1}^{K} \alpha_k Z_k }{ \sqrt{\alpha} \delta \sqrt{L_K/(K-1)}} + o_P(1) = \frac{ \frac{\sum_k \alpha_k Z_k}{\sqrt{\alpha}}  }{\sqrt{L_K/(K-1)}} +o_P(1) = T_{K-1} +o_P(1),
\end{equation*}
where $T_{K-1}$ is a t-distributed random variable with $K-1$ degrees of freedom. This completes the proof.
\end{proof}

\subsection{Proof of Theorem~\ref{theorem:char}}\label{sec:proof-theorem-2}

\begin{proof}
In this proof, if not specified otherwise, all variances and covariances are meant with respect to $\mathbb{Q}$, that is marginally over both the variation in $D$ and $\xi$.
We will directly work with $U = h(D)$. Define
\begin{equation*}
    f(x) = \text{Var}(\mathbb{P}^\xi(U \in [0,x))).
\end{equation*}
Let $A$ and $B$ be two disjoint subsets of $[0,1]$. Define $a=\mathbb{P}(U \in A)$ and $ b= \mathbb{P}(U \in B )$. Then,
\begin{align*}
  f(a+b)  &= \text{Var}(\mathbb{P}^\xi(U \in A \cup B)) \\
  & = \text{Var}(\mathbb{P}^\xi(U \in A)) + \text{Var}(\mathbb{P}^\xi(U \in B)) + 2 \text{Cov}(\mathbb{P}^\xi(U \in A), \mathbb{P}^\xi(U \in B)) \\
&= f(a) + f(b) + 2 \text{Cov}(\mathbb{P}^\xi(U \in A) ,\mathbb{P}^\xi(U \in B)).
\end{align*}
Thus, for any two disjoint sets $A$ and $B$,
\begin{equation*}
    \text{Cov}(\mathbb{P}^\xi(U \in A), \mathbb{P}^\xi(U \in B)) = \frac{f(a+b) - f(a)- f(b)}{2}.
\end{equation*}
Define
\begin{equation}\label{eq:decompos1}
    g(a,b) = \frac{f(a+b) -f(a)-f(b)}{2}.
\end{equation}
Let us first show that $f$ is continuous. Let $a_n \rightarrow a$, $a_n \ge a$. Then,
\begin{equation*}
    f(a_n) - f(a) = f(a) + 2g(a_n - a, a) + f(a_n - a) - f(a) = 2 g(a_n-a,a) + f(a_n - a).
\end{equation*}
By Cauchy-Schwartz,
\begin{equation*}
    g(a_n-a,a) \le \sqrt{f(a_n -a) f(a)}.
\end{equation*}
By assumption, $f(a_n -a) \rightarrow 0$. Thus, $f(a_n)\rightarrow f(a)$. The case $a_n \rightarrow a$, $a_n \le a$ can be treated analogously. Thus, $f(\bullet)$ and $g(\bullet,\bullet)$ are continuous.

Partition the probability space into disjoint $D$-measurable events $A_i$, $i=1,\ldots,n$ with $P(U \in A_i) = 1/n$. Then,
\begin{equation*}
   0=  \text{Var}(\mathbb{P}^\xi(\cup A_i ) - \mathbb{P}( \cup A_i))  = n f(1/n) + n(n-1) g(1/n,1/n).
\end{equation*}
Thus,
\begin{equation}\label{eq:decompos2}
    g(1/n,1/n) = -1/(n-1) f(1/n).
\end{equation}
We will now show that $f(x) = x (1-x) \delta_\text{dist}^2$ for $x = 1/2^k$, where $\delta_\text{dist}^2 = 4 f(1/2)$. This will show that up to the constant $\delta_\text{dist}^2 = 4 f(1/2)$, $f$ and $g$ are uniquely defined. First, we will show this equality for $x=1/4$.
\begin{equation*}
    f(1/2) = f(1/4) + f(1/4) + 2 g(1/4,1/4) = 2 f(1/4) - 2/3 f(1/4).
\end{equation*}
Thus,
\begin{equation*}
    f(1/2) = 4/3 f(1/4).
\end{equation*}
Rearranging,
\begin{equation*}
   f(1/4) =  1/4 (1 - 1/4)  4 f(1/2) = x (1-x) \delta_\text{dist}^2
\end{equation*}
for $x=1/4$. Induction step: Assume that
\begin{equation*}
    f(x) = x(1-x) \delta_\text{dist}^2
\end{equation*}
for $x = 1/2^k$. Now we want to show that
\begin{equation*}
        f(x/2) = x/2 ( 1- x/2) \delta_\text{dist}^2.
\end{equation*}
To this end, using \eqref{eq:decompos1} and \eqref{eq:decompos2},
\begin{equation*}
    f(x) = f(x/2) + f(x/2) - 2/(2/x -1 ) f(x/2).
\end{equation*}
Thus,
\begin{equation*}
    f(x) = (2 - 2x/(2-x)) f(x/2) = (4-2x -2x)/(2-x) f(x/2) =  (4-4x)/(2-x) f(x/2).
\end{equation*}
By induction assumption,
\begin{equation*}
    f(x/2) =  (2-x)/(4-4x) x (1-x) \delta_\text{dist}^2 = x/2 (1-x/2) \delta_\text{dist}^2.
\end{equation*}
Thus, by induction for all $x=1/2^k$
\begin{equation*}
    f(x) = x (1-x) \delta_\text{dist}^2.
\end{equation*}
Now we want to show that for any $k$ and $j \le 2^k$ and $ x=j/2^k$
\begin{equation*}
    f(x) = x (1-x) \delta_\text{dist}^2.
\end{equation*}
For any $k$ and $j$ with $1 \le j \le 2^k$, using the definition of $f$  and \eqref{eq:decompos2},
\begin{align*}
    &f(j/2^k) = j f(1/2^k) - j(j-1)/(2^k-1) f(1/2^k) =  (j 2^k - j^2) /(2^k - 1) f(1/2^k) \\
    &= (j 2^k - j^2) /(2^k - 1) 1/2^k (1- 1/2^k) \delta_\text{dist}^2 = (j 2^k - j^2) 1/2^k 1/2^k \delta_\text{dist}^2 = j/2^k (1-j/2^k) \delta_\text{dist}^2.
\end{align*}
Thus, for all $k$ and $j \le 2^k$, and $x = j/2^k$,
\begin{equation*}
   f(x) = x(1-x) \delta_\text{dist}^2. 
\end{equation*}
Using continuity of $f$, for all $x \in [0,1]$
\begin{equation*}
    f(x) = x(1-x) \delta_\text{dist}^2.
\end{equation*}
We will now derive an explicit formula for $g$. For any $k$ and $j,j'$ with $j+j' \le 2^k$,
\begin{equation*}
    g(j/2^k,j'/2^k) = j j' g(1/2^k, 1/2^k) = - j j' /(2^k -1) f(1/2^k) = -j j' /(2^k -1) 1/2^k (1- 1/2^k) \delta_\text{dist}^2
\end{equation*}
\begin{equation*}
    = - j j' 1/2^k 1/2^k \delta_\text{dist}^2 = - j/2^k j'/2^k \delta_\text{dist}^2.
\end{equation*}
By continuity, for all $x \ge 0$, $y \ge 0$ with $x+ y \le 1$,
\begin{equation*}
    g(x,y) = - xy \delta_\text{dist}^2.
\end{equation*}
Now assume that for some $D$-measurable disjoint sets $A_i$ and some constants $y_i$,
\begin{equation*}
    \psi(D) = \sum 1_{A_i} y_i.
\end{equation*}
Then,
\begin{equation*}
    \text{Var}( \mathbb{E}^\xi[\psi(D)] - \mathbb{E}[\psi(D)]) = \sum_i y_i^2 f(P( A_i))  + \sum_{i \neq j} y_i y_j g(\mathbb{P}( A_i),\mathbb{P}( A_j)).
\end{equation*}
To simplify, let's write $p_i = P(A_i)$. Using explicit formulas for $f$ and $g$,
\begin{equation}\label{eq:first}
  \text{Var}( \mathbb{E}^\xi[\psi(D)] - \mathbb{E}[\psi(D)])  =\sum_i  \delta_\text{dist}^2 p_i (1-p_i) - \sum_{i \neq j}  \delta_\text{dist}^2 y_i y_j p_i p_j .
\end{equation}
On the other hand,
\begin{equation*}
    \delta_\text{dist}^2 \text{Var}_\mathbb{P}( \psi(D)) = \delta_\text{dist}^2 (\sum_i  p_i(1-p_i) y_i^2 + \sum_{i \neq j} \text{Cov}(1_{A_i},1_{ A_j}) y_i y_j).
\end{equation*}
As the sets are disjoint,  $\text{Cov}(1_{A_i},1_{ A_j}) = - p_i p_j$. Thus,
\begin{equation}\label{eq:second}
 \delta_\text{dist}^2 \text{Var}_\mathbb{P}( \psi(D))   = \delta_\text{dist}^2( \sum_i  p_i(1-p_i) y_i^2 + \sum_{i \neq j} -p_i p_j y_i y_j).
\end{equation}
Combining equation~\eqref{eq:first} with equation~\eqref{eq:second},
\begin{equation*}
     \text{Var}( \mathbb{E}^\xi[\psi(D)] - \mathbb{E}[\psi(D)]) =  \delta_\text{dist}^2 \text{Var}_\mathbb{P}( \psi(D)) .
\end{equation*}
By measure-theoretic induction, this result is extended to any $\psi(D) \in L^2(\mathbb{P})$.
\end{proof}

\subsection{Asymptotic Behaviour of \textit{M}-estimators}\label{appendix:m-estim}

\subsubsection{Proof of Lemma~\ref{lemma:asymptotic-consistency-m-estimators}}

\begin{proof}
The proof follows \cite{van2000asymptotic}, Theorem 5.14 with $m_\theta(D) = - L(\theta,D)$. 

Fix some $\theta$ and let $U_{\ell} \downarrow \theta$ be a decreasing sequence of open balls around $\theta$ of diameter converging to zero. Write $m_{U}(D)$ for $\sup_{\theta \in U}m_{\theta}(D)$. The sequence $m_{U_{\ell}}$ is decreasing and greater than $m_{\theta}$ for every $\ell$. As $\theta \xrightarrow[]{} m_{\theta}(D)$ is continuous, by monotone convergence theorem, we have $\mathbb{E}[m_{U_{\ell}}(D)] \downarrow \mathbb{E}[m_{\theta}(D)]$. 

For $\theta \neq \theta^0$, we have $\mathbb{E}[m_\theta(D)] < \mathbb{E}[m_{\theta^0}(D)]$.  Combine this with the preceding paragraph to see that for every $\theta \neq \theta^0$ there exits an open ball $U_{\theta}$ around $\theta$ with $\mathbb{E}[m_{U_{\theta}}(D)] < \mathbb{E}[m_{\theta^0}(D)]$. For any given $\epsilon > 0$, let the set $B = \{\theta \in \Omega : ||\theta - \theta^0||_2 \geq \epsilon\}$. The set $B$ is compact and is covered by the balls $\{U_{\theta}: \theta \in B\}$. Let $U_{\theta_1},\dots, U_{\theta_p}$ be a finite sub-cover. Then, 
\begin{align}
    \sup_{\theta \in B} \frac{1}{n} \sum_{i=1}^n m_{\theta}(D_i^n) & \leq \sup_{j = 1, \dots, p} \frac{1}{n} \sum_{i=1}^n m_{U_{\theta_j}}(D_i^n) 
    \\ &= \sup_{j = 1, \dots, p} \mathbb{E}[ m_{U_{\theta_j}}(D)] + o_p(1) < \mathbb{E}[m_{\theta^0}(D)] + o_p(1) \label{eq:pf-lemma2},
\end{align}
where for the equality we apply Lemma~\ref{lemma:random_weight} with $\psi(D) = m_{U_{\theta_j}}(D)$ for all $j=1,\ldots,p$. 

If $\hat{\theta} \in B$, then 
\begin{equation*}
    \sup_{\theta \in B} \frac{1}{n} \sum_{i=1}^n m_{\theta}(D_i^n) \geq \frac{1}{n} \sum_{i=1}^n m_{\hat{\theta}}(D_i^n)  \geq \frac{1}{n}\sum_{i=1}^n m_{\theta^0}(D_i^n) - o_p(1),
\end{equation*}
where the last inequality comes from the definition of $\hat{\theta}$.
Using Lemma~\ref{lemma:random_weight} with $\psi(D) = m_{\theta^0}(D)$, we have
\begin{equation*}
   \frac{1}{n}\sum_{i=1}^n m_{\theta^0}(D_i^n) - o_p(1) = \mathbb{E}[m_{\theta^0}(D)] - o_p(1).
\end{equation*}
Therefore, 
\begin{equation*}
    \{\hat{\theta} \in B\} \subset \Bigg\{ \sup_{\theta \in B} \frac{1}{n} \sum_{i=1}^n m_{\theta}(D_i^n) \geq \mathbb{E}[m_{\theta^0}(D)] - o_p(1) \Bigg\}.
\end{equation*}
By the equation \eqref{eq:pf-lemma2}, the probability of the event on the right hand side converges to zero as $n \xrightarrow[]{} \infty$. This completes the proof. 
\end{proof}

\subsubsection{Proof of Lemma~\ref{lemma:asymptotic-gaussianity-m-estimator}}

\begin{proof}
The proof follows \cite{van2000asymptotic}, Theorem 5.41 with $ \Psi_n(\theta) = \frac{1}{n} \sum_{i=1}^n \partial_\theta L(\theta,D_i^n)$ and $\dot{\Psi}_n (\theta) = \frac{1}{n} \sum_{i=1}^n \partial_\theta^2 L(\theta,D_i^n)$. By Taylor's theorem there exists a (random vector) $\tilde{\theta}$ on the line segment between $\theta^0$ and $\hat{\theta}$ such that 
\begin{equation*}
    0 = \Psi_n(\hat{\theta}) = \Psi_n(\theta^0) + \dot{\Psi}_n(\theta^0) (\hat{\theta} - \theta^0) + \frac{1}{2}(\hat{\theta} - \theta^0)^{\intercal} \Ddot{\Psi}_n(\tilde \theta) (\hat{\theta} - \theta^0).
\end{equation*}
By Lemma~\ref{lemma:random_weight} with $\psi(D) = \partial_\theta^2 L(\theta^0,D)$, we have
\begin{equation}\label{eq:lemma3-pf-1}
    \dot{\Psi}_n(\theta^0)  = \frac{1}{n} \sum_{i=1}^n \partial_\theta^2 L(\theta^0,D_i^n) = \mathbb{E}[ \partial_\theta^2 L(\theta^0,D) ] + o_P(1).
\end{equation}
By assumption, there exists a ball $B$ around $\theta^0$ such that $\theta \xrightarrow[]{} \partial_\theta^3 L(\theta,D)$ is dominated by a fixed function $h(\cdot)$ for every $\theta \in B$. The probability of the event $\{\hat{\theta} \in B\}$ tends to 1. On this event
\begin{equation*}
    \|\Ddot{\Psi}_n(\tilde \theta)\| = \Big|\Big| \frac{1}{n} \sum_{i=1}^n \partial_\theta^3 L(\tilde \theta,D_i^n)\Big|\Big| \leq \frac{1}{n} \sum_{i=1}^n h(D_i^n).
\end{equation*}
Using Lemma~\ref{lemma:random_weight} with $\psi(D) = h(D)$, we get
\begin{equation}\label{eq:lemma3-pf-2}
       \| \Ddot{\Psi}_n(\tilde \theta) \|   \le \frac{1}{n} \sum_{i=1}^n h(D_i^n) = O_P(1).
\end{equation}
Combining \eqref{eq:lemma3-pf-1} and \eqref{eq:lemma3-pf-2} gives us
\begin{align*}
    -\Psi_n(\theta^0) & = \left(\mathbb{E}[ \partial_\theta^2 L(\theta^0,D) ] + o_P(1) + \frac{1}{2}(\hat{\theta} - \theta^0)\text{ }O_P(1)\right)(\hat{\theta} - \theta^0) \\ &= \left(\mathbb{E}[ \partial_\theta^2 L(\theta^0,D) ] + o_P(1) \right)(\hat{\theta} - \theta^0).
\end{align*}
The probability that the matrix $\mathbb{E}[ \partial_\theta^2 L(\theta^0,D) ] + o_P(1) $ is invertible tends to 1.  Multiplying the preceding equation by $\sqrt{n}$ and applying $(\mathbb{E}[ \partial_\theta^2 L(\theta^0,D) ] + o_P(1))^{-1} $ left and right complete the proof. 

\end{proof}

\section{Robust Calibrated Inference}\label{sec:altway}

In some cases, the data analyst may trust one of the estimators $\hat\theta^k$ more than others. For example, the data analyst may be convinced that $\theta^1 = \theta^0$ but may not be sure whether $\theta^k = \theta^0$ for $k \ge 2$. In this case, the data analyst may report the confidence interval for $\theta^0$ using $\hat\theta^1$ instead of $\hat\theta^W$ with $\delta$ estimated by looking at the between-estimator variance of the remaining $K-1$ estimators. Now we present how to build asymptotically valid confidence intervals in such cases. 

\begin{theorem} (Asymptotic validity of calibrated confidence interval).
Suppose Assumption \ref{assump:a1} holds for $k =1,\ldots,K$ and the influence functions $\phi^1(D), \dots, \phi^K(D)$ are uncorrelated. Suppose $\theta^1 = \theta^0$ but $\theta^k$ may not be $\theta^0$ for $k \ge 2$. Furthermore assume that we have consistent estimates of the variances of influence functions such that $\widehat{\text{Var}_{\mathbb{P}}(\phi^k(D))} = \text{Var}_{\mathbb{P}}(\phi^k(D)) + o_p(1) $ for $k = 1, \dots, K$. 
Let $\hat \theta^W = \sum_{k=2}^K \hat \alpha_k \hat \theta^k$ be the inverse-variance weighted estimator of $K-1$ estimators where the weights are 
\begin{equation*}
    \hat{\alpha}_k =  \frac{\frac{1}{\widehat{\text{Var}_{\mathbb{P}}(\phi^k(D))}} }{\sum_{j=2}^K \frac{1}{\widehat{\text{Var}_{\mathbb{P}}(\phi^{j}(D))}}}.
\end{equation*}
Let $\hat{\sigma}_{\text{bet}}$ be the weighted between-estimator variance of $K-1$ estimators defined as  
\begin{equation*}
   \hat \sigma^2_{\text{bet}} = \sum_{k=2}^{K} \hat \alpha_k (\hat{\theta}^k - \hat \theta^W)^2.
\end{equation*}
Then for any $\alpha \in (0, 1)$, it holds that as $n \xrightarrow[]{} \infty$,
\begin{equation*}
   \lim \inf_{n \rightarrow \infty}  \mathrm{P} \left(\theta^0 \in \Big[\hat{\theta}^1 \pm t_{K-2,1-\alpha/2} \cdot \sqrt{\sum_{j=2}^K \frac{\widehat{\text{Var}_{\mathbb{P}}(\phi^{1}(D))}}{\widehat{\text{Var}_{\mathbb{P}}(\phi^{j}(D))}}}\frac{\hat{\sigma}_{\text{bet}}}{\sqrt{K-2}} \Big]\right) \geq 1-\alpha,
\end{equation*}
where $t_{K-2, 1-\alpha/2}$ is the $1-\alpha/2$ quantile of the $t$ distribution with $K-2$ degrees of freedom. To be clear, here we marginalize over both the randomness due to sampling and the randomness due to the distributional perturbation.
\end{theorem}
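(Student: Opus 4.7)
The plan is to adapt the proof of Theorem~\ref{theorem:newci}, tracking the extra non-centrality produced by the biases $\mu_k := \theta^k - \theta^0$ for $k\ge 2$. The upshot will be that these biases can only inflate $\hat\sigma^2_{\text{bet}}$ in a stochastic-dominance sense, leaving the pivot stochastically no larger in absolute value than a $t(K-2)$ random variable. The starting point is the joint asymptotic representation that comes from Assumption~\ref{assump:a1}, a multivariate application of Lemma~\ref{lemma:random_weight} to $\psi=(\phi^1,\ldots,\phi^K)$, and the pairwise uncorrelatedness of the influence functions:
\begin{equation*}
\sqrt{n}(\hat\theta^k - \theta^k) = \delta Z_k + o_p(1), \qquad k=1,\ldots,K,
\end{equation*}
with $Z_k \sim \mathcal{N}(0,\sigma_k^2)$ independent and $\sigma_k^2 = \text{Var}_\mathbb{P}(\phi^k(D))$. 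Since $\theta^1 = \theta^0$, one gets $\sqrt{n}(\hat\theta^1 - \theta^0) \stackrel{d}{=} \delta Z_1 + o_p(1)$, while for $k\ge 2$ we have $\sqrt{n}(\hat\theta^k - \theta^0) = \sqrt{n}\mu_k + \delta Z_k + o_p(1)$.

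Next I would decompose $n\hat\sigma^2_{\text{bet}}$. Setting $c_k^{(n)} := \sqrt{n}\bigl(\mu_k - \sum_{j\ge 2}\hat\alpha_j\mu_j\bigr)$ and $G_k := \delta\bigl(Z_k - \sum_{j\ge 2}\hat\alpha_j Z_j\bigr)$, both sequences are $\hat\alpha$-centered, $\sum_{k\ge 2}\hat\alpha_k c_k^{(n)} = 0$ and $\sum_{k\ge 2}\hat\alpha_k G_k = 0$, and a direct expansion yields $n\hat\sigma^2_{\text{bet}} = \sum_{k\ge 2}\hat\alpha_k(c_k^{(n)} + G_k)^2 + (\text{lower-order remainders})$. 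Repeating the orthonormal-basis manipulation from the proof of Theorem~\ref{theorem:newci} applied to the $K-1$ estimators $k=2,\ldots,K$, with the rescaled variables $\tilde Z_k := Z_k/\sigma_k$, gives
\begin{equation*}
\sum_{k\ge 2}\hat\alpha_k(c_k^{(n)} + G_k)^2 = \sum_{\ell=1}^{K-2}\bigl(\delta\sqrt{\alpha}\,X_\ell + d_\ell^{(n)}\bigr)^2 + o_p(\cdot),
\end{equation*}
where $\alpha = \bigl(\sum_{j\ge 2}\sigma_j^{-2}\bigr)^{-1}$, the $X_\ell$ are i.i.d.\ standard normal built from $\tilde Z_2,\ldots,\tilde Z_K$ (hence independent of $Z_1$), and the $d_\ell^{(n)}$ are the coordinates of the deterministic vector $(\sqrt{\hat\alpha_k}\,c_k^{(n)})_{k\ge 2}$ in an orthonormal basis of the $(K-2)$-dimensional orthogonal complement of $w := (\sqrt{\hat\alpha_k})_{k\ge 2}$. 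Hence $n\hat\sigma^2_{\text{bet}}/(\delta^2\alpha)$ is asymptotically non-central chi-squared $\chi^2_{K-2}(\lambda_n)$ with $\lambda_n = \|d^{(n)}\|_2^2/(\delta^2\alpha)\ge 0$.

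Plugging the two representations into the pivot, the factor $\delta\sigma_1$ cancels and
\begin{equation*}
T := \frac{\hat\theta^1 - \theta^0}{\sqrt{\sum_{j\ge 2}\widehat{\text{Var}_\mathbb{P}(\phi^1(D))}/\widehat{\text{Var}_\mathbb{P}(\phi^j(D))}}\,\hat\sigma_{\text{bet}}/\sqrt{K-2}}\;\stackrel{d}{=}\;\frac{\tilde Z_1}{\sqrt{\chi^2_{K-2}(\lambda_n)/(K-2)}} + o_p(1),
\end{equation*}
with $\tilde Z_1 := Z_1/\sigma_1 \sim \mathcal{N}(0,1)$ independent of the chi-squared denominator. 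Via the Poisson-mixture representation $\chi^2_k(\lambda) \stackrel{d}{=} \chi^2_{k+2J}$ with $J \sim \text{Poisson}(\lambda/2)$, the non-central chi-squared stochastically dominates the central one, so conditioning on the denominator and using the monotonicity of $w \mapsto \mathrm{P}(\tilde Z_1^2 \le t^2 w/(K-2))$ gives $\mathrm{P}(|T| \le t_{K-2,1-\alpha/2}) \ge 1-\alpha + o(1)$, which is the claimed $\liminf$ coverage.

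The main obstacle I anticipate is controlling the $o_p$ remainders uniformly across qualitatively different bias regimes. In the local regime $\mu_k = O(1/\sqrt n)$, $\lambda_n$ stays bounded and the displays above converge jointly in distribution without issue. In the persistent regime where $\mu_k$ does not tend to zero, the coordinates $c_k^{(n)}$ diverge at rate $\sqrt n$, forcing $n\hat\sigma^2_{\text{bet}} \to \infty$ and hence $|T| \to 0$; here one has to argue separately that the $o_p(1)$ remainders in $(c_k^{(n)} + G_k + o_p(1))^2$ are of smaller order than the dominant $(c_k^{(n)})^2$ contribution, so that coverage tends to one. Combining the two regimes yields the $\liminf$ bound stated in the theorem.
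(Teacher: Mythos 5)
Your proof is correct, but it takes a genuinely different route from the paper's. The paper argues by a two-case split: if some $\theta^k \neq \theta^0$ for $k \ge 2$, it asserts that $\hat\sigma^2_{\text{bet}}$ converges to a positive constant $\tau^2$, so the interval half-width stays bounded away from zero while $\hat\theta^1 - \theta^0 = O_P(1/\sqrt n)$, giving coverage tending to one; if all $\theta^k = \theta^0$, it reuses the orthogonal-projection argument of Theorem~\ref{theorem:newci} on the estimators $k = 2,\ldots,K$, noting that $L_{K-1} \sim \chi^2(K-2)$ is independent of $Z_1$ by uncorrelatedness, to get an exact asymptotic $t(K-2)$ pivot. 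You instead give a unified argument: you carry the biases $\mu_k$ through the projection decomposition, identify $n\hat\sigma^2_{\text{bet}}/(\delta^2\alpha)$ as asymptotically non-central $\chi^2_{K-2}(\lambda_n)$ independent of $\tilde Z_1$, and conclude by stochastic dominance of the non-central over the central chi-squared. The paper's argument is shorter; yours buys two things. First, it covers local alternatives $\mu_k \asymp 1/\sqrt n$, which the theorem does not require (the $\theta^k$ are fixed), so your worry about ``uniformity across bias regimes'' is moot here --- for fixed $\mu_k$ either $\lambda_n \equiv 0$ or $\lambda_n \to \infty$. Second, and more substantively, your argument repairs a small lacuna in the paper's first case: if all the $\theta^k$ for $k \ge 2$ are equal to each other but differ from $\theta^0$, then $\hat\theta^k - \hat\theta^W \to 0$ for every $k \ge 2$ and $\tau^2 = 0$, so the paper's claim that the interval width stays bounded away from zero fails; the conclusion nevertheless holds because this is exactly your $\lambda_n = 0$ instance, where the pivot reverts to the null $t(K-2)$ behaviour and coverage is exactly $1-\alpha$. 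One loose end in your write-up: the vector $(\sqrt{\hat\alpha_k}\,c_k^{(n)})_{k\ge2}$ is not deterministic since $\hat\alpha_k$ is random, but replacing $\hat\alpha_k$ by $\alpha_k$ costs $\sqrt n\,\mu_j(\hat\alpha_j-\alpha_j) = o_p(\sqrt n)$, which is dominated by the leading $\sqrt n$ term whenever the latter is nonzero and vanishes identically (since the weights sum to one) when the $\mu_j$ coincide, so the argument goes through.
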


The resulting confidence intervals are expected to be conservative. Firstly, we lose one degree of freedom of the $t$-distribution. Secondly, we get an overcoverage if $\theta^k \neq \theta^0$ for $k \ge 2$. 

\quad

\begin{proof}
If $\theta^k \neq \theta^0$ for some $k \ge 2$, then by asymptotic linearity $\hat \sigma^2_{bet}$ converges to some $\tau^2> 0$. As in the proof of Theorem~\ref{theorem:newci}, we get $\hat{\theta}^1 - \theta^0 = \mathcal{N}(0, \delta^2\text{Var}_\mathbb{P}(\phi^1)/n) + o_P(1/\sqrt{n})$. Since the variance estimates are consistent,
\begin{equation*}
    \mathrm{P} \left(\theta^0 \in \Big[\hat{\theta}^1 \pm t_{K-2,1-\alpha/2} \cdot \sqrt{\sum_{j=2}^K \frac{\widehat{\text{Var}_{\mathbb{P}}(\phi^{1}(D))}}{\widehat{\text{Var}_{\mathbb{P}}(\phi^{j}(D))}}}\frac{\hat{\sigma}_{\text{bet}}}{\sqrt{K-2}} \Big]\right) \rightarrow 1.
\end{equation*}
Now let us consider the case $\theta^0 = \theta^1 = \ldots = \theta^K$. From the proof of Theorem \ref{theorem:newci}, we know that 
\begin{equation*}
    \frac{\sqrt{n}(\hat \theta^1 - \theta^0)}{\sqrt{\text{Var}_{\mathbb{P}}(\phi^{1}(D))}} \stackrel{d}{=} \delta Z + o_p(1),
\end{equation*}
where $Z \sim N(0, 1)$. Moreover,
\begin{align*}
    n\hat{\sigma}^2_{\text{bet}} \stackrel{d}{=} \delta^2 \frac{1}{\sum_{j=2}^{K}\frac{1}{\text{Var}_{\mathbb{P}}(\phi^j(D))}} \cdot L_{K-1} + o_p(1),
\end{align*}
where $L_{K-1}$ follows the  chi-square distribution with $K-2$ degrees of freedom. 
Note that $Z$ and $L_{K-1}$ are independent. Then,
we get
\begin{equation*}
   \frac{\frac{\hat{\theta}^1 - \theta^0}{\sqrt{\widehat{\text{Var}_{\mathbb{P}(\phi_1)}}}}}{\hat \sigma_{bet}\sqrt{\sum_{j=2}^{K}\frac{1}{\widehat{\text{Var}_{\mathbb{P}(\phi_k)}}}}/\sqrt{K-2}} \xrightarrow[]{d} \frac{Z }{ \sqrt{L_{K-1}/(K-2)}}.
\end{equation*}
Thus, we get
\begin{equation*}
    \lim _{n \xrightarrow[]{} \infty}P\left( \frac{\frac{\hat{\theta}^1 - \theta^0}{\sqrt{\widehat{\text{Var}_{\mathbb{P}(\phi_1)}}}}}{\hat \sigma_{bet}\sqrt{\sum_{j=2}^{K}\frac{1}{\widehat{\text{Var}_{\mathbb{P}(\phi_k)}}}}/\sqrt{K-2}} \leq x\right) 
    = P\left( t_{K-2} \leq x\right),
\end{equation*}
where $t_{K-2}$ is a $t$-distributed random variable with $K-2$ degrees of freedom. This completes the proof. 
\end{proof}

\section{Additional Simulation Results}\label{sec:add_simulations} 

In this section, we include additional simulation results.

\paragraph{Accuracy of $\hat{\delta}^2$.} The estimation accuracy of $\hat{\delta}$ compared to the ground truth $\delta$ is illustrated in Figure \ref{fig:delta-accuracy}. Recall that the distribution of $\hat{\delta}^2$ follows $\delta^2 \cdot \frac{\chi^2(K-1)}{K-1}$ as $n, m \xrightarrow[]{} \infty$.

\begin{figure}[]
    \centering
    \includegraphics[scale = 0.75]{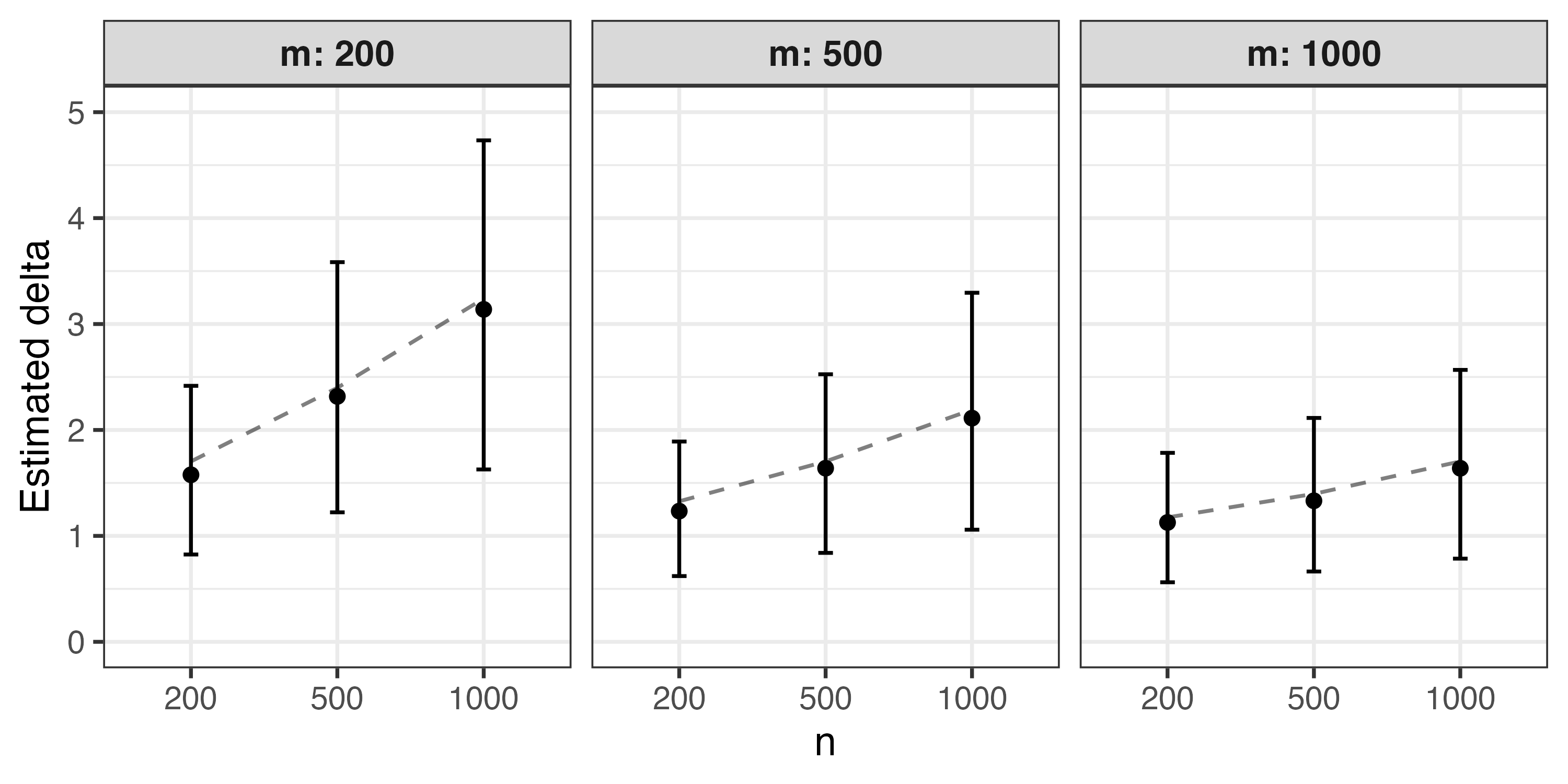}
    \includegraphics[scale = 0.75]{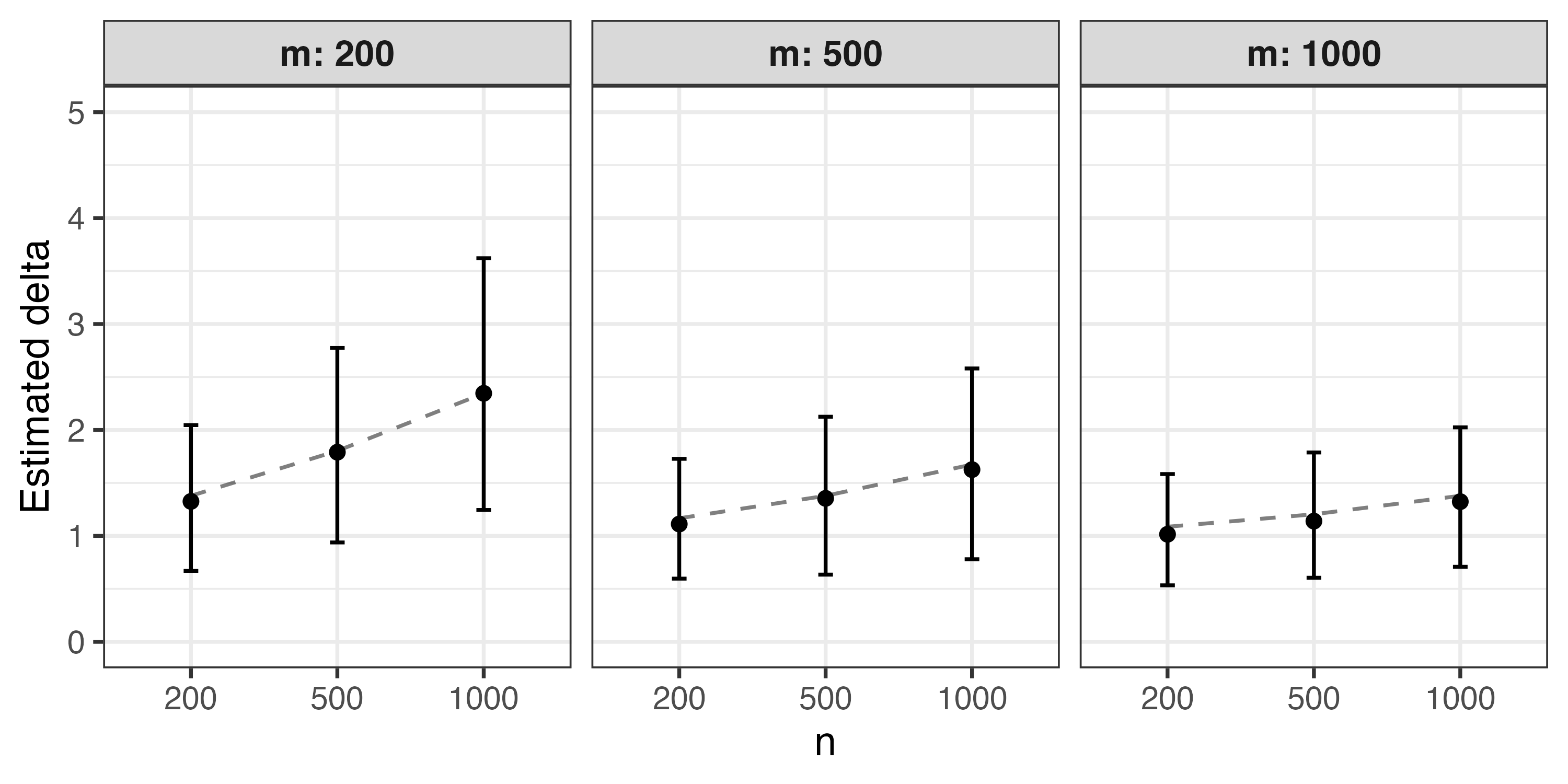}
    \caption{Accuracy of $\hat{\delta}$: The panel above shows the results under the perturbation model in Lemma~\ref{lemma:random_weight} and the panel below shows the results under the perturbation model in Example~\ref{example:samp} in the Appendix. Mean, $5\%$, and $95\%$ quantiles of the estimated $\hat{\delta}$ for each $m = 200, 500, 1000$ and $n = 200, 500, 1000$ are provided. The dashed lines indicate the true values of $\delta$.}
    \label{fig:delta-accuracy}
\end{figure}

\paragraph{Mariginal Coverages of Calibrated Confidence Intervals with Highly Correlated Estimators.}  Instead of using $K = 6$ adjustment sets in the main text, we consider the following $K= 8$ adjustment sets; $\{X_1, X_2\}$, 
$\{X_1, X_2, X_3\}$, 
$\{X_1, X_2, X_4\}$, 
$\{X_1, X_2, X_5\}$, $\{X_1, X_2, X_3, X_4\}$, $\{X_1, X_2, X_3, X_5\}$, $\{X_1, X_2, X_4, X_5\}$, $\{X_1, X_2, X_3, X_4, X_5\}$. The results are depicted in Figure~\ref{fig:coverage2}. In this case, some estimators are highly correlated, resulting in slight undercoverage of calibrated confidence intervals.

\begin{figure}[ht]
    \centering
    \includegraphics[scale = 0.8]{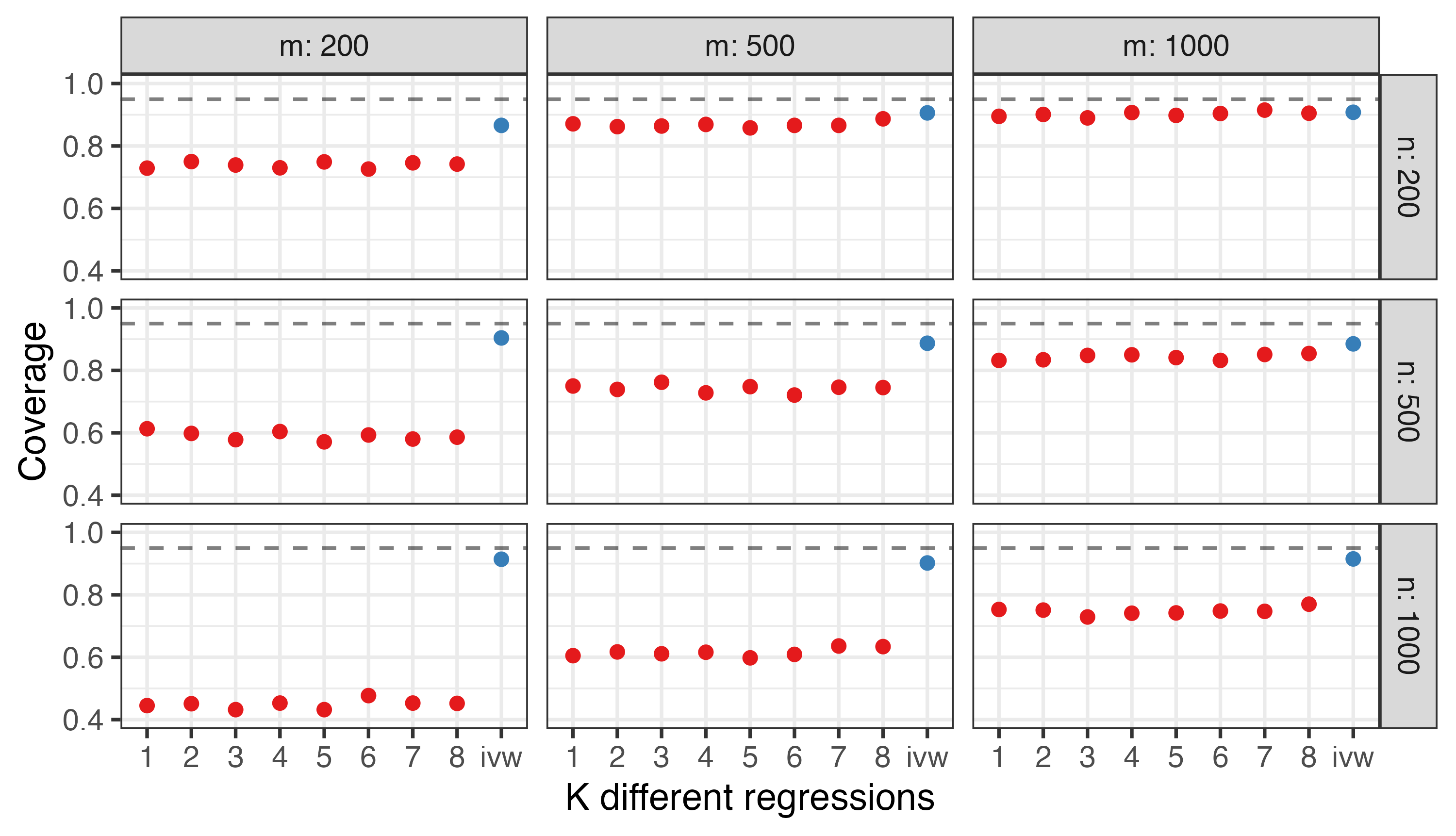}
   \includegraphics[scale = 0.8]{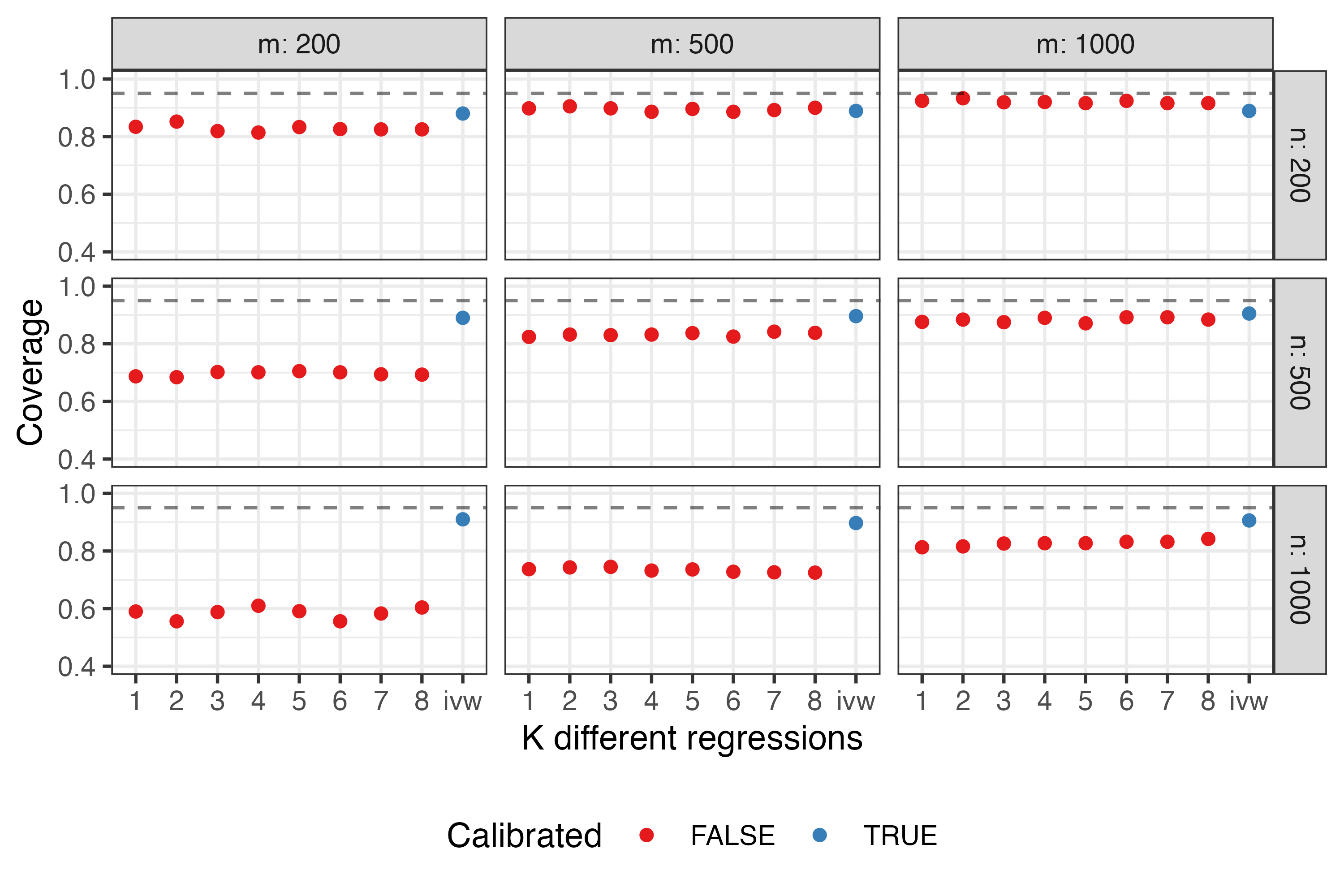}
    \caption{Marginal coverages of calibrated confidence intervals with $K=8$ different adjustment sets: The panel above shows the results under the perturbation model in Lemma~\ref{lemma:random_weight} and the panel below shows the results under the perturbation model in Example~\ref{example:samp} in the Appendix. Marginal coverages of non-calibrated confidence intervals for each regression-adjusted estimator and calibrated confidence intervals for the inverse-variance weighted estimator are presented for $m = 200, 500, 1000$ and $n = 200, 500, 1000$. The dashed lines indicate the nominal coverage 0.95.}
    \label{fig:coverage2}
\end{figure}

\section{Additional Data Analysis Results}\label{sec:add-hist}

In this section, we present additional results from real-world data analysis. Below are the figures showing the histograms of the lengths of confidence interval from Section \ref{sec:real-world}.

\begin{figure}
    \centering
    \includegraphics[width=0.9\linewidth]{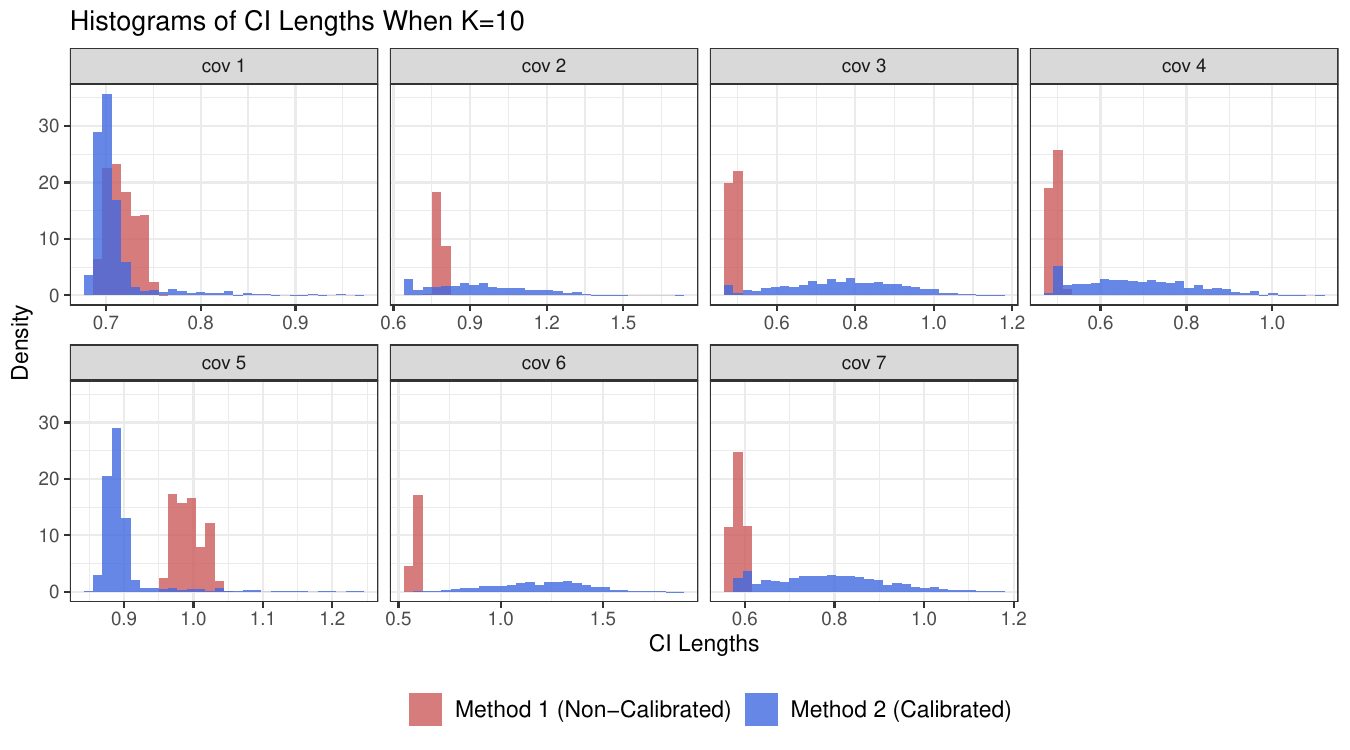}
    \includegraphics[width=0.9\linewidth]{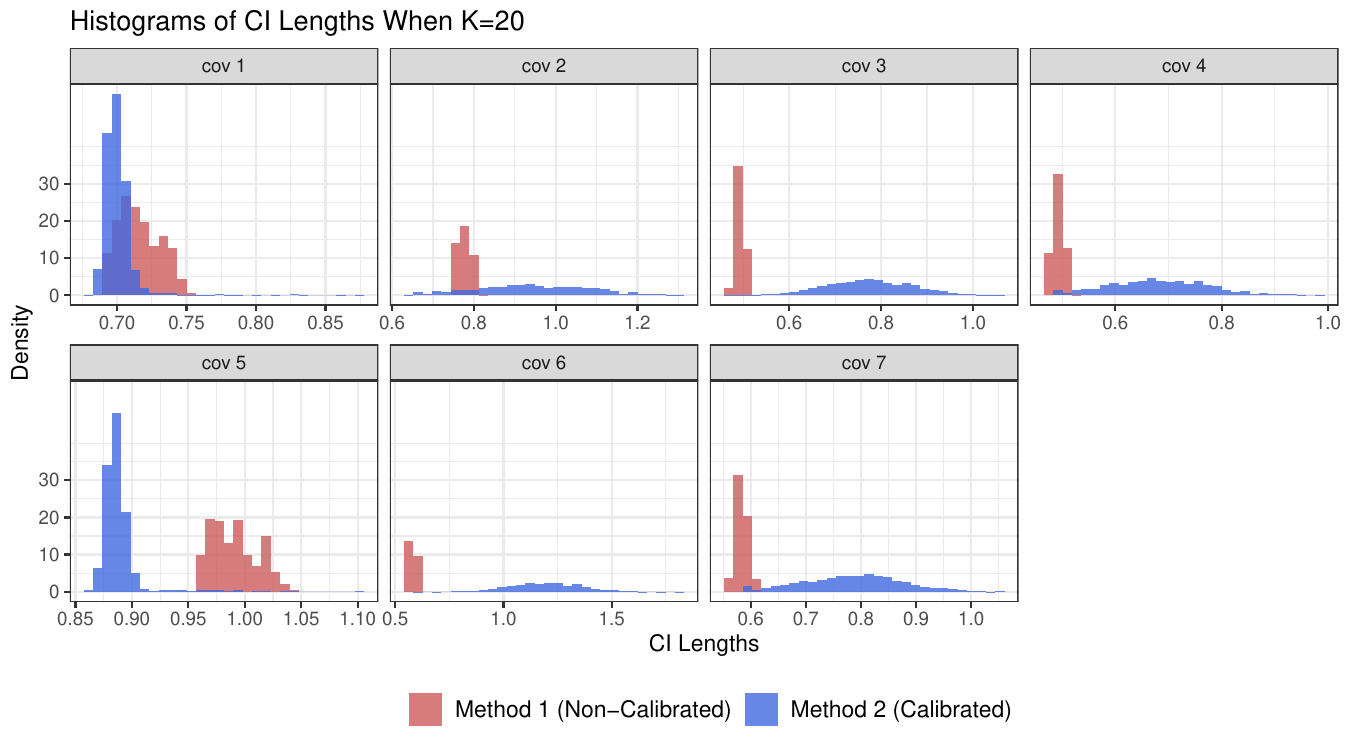}
    \caption{The histograms of the lengths of calibrated and non-calibrated confidence intervals for each selected binary covariate, based on $N = 1000$ iterations, are provided for $K = 10$ and $K = 20$.}
    \label{fig:enter-label}
\end{figure}

\end{document}